\tikzset{line/.style={-latex'}}
\newtheorem{theorem}{Theorem}[section]
\newtheorem{lemma}{Lemma}[section]
\newtheorem{corollary}{Corollary}[section]
\newtheorem{definition}{Definition}
\newtheorem{example}{Example}
\newtheorem{assumption}{Assumption}[subsection]
\DeclareMathOperator*{\argmax}{arg\,max}
\DeclareMathOperator*{\supp}{supp}
\DeclareMathOperator*{\sign}{sign}
\colorlet{xred}{red!70!black}
\begin{document}
\title{\textbf{A Semiparametric Network Formation Model} \\ \textbf{with Unobserved Linear Heterogeneity}\footnote{First version: November, 2016.  A previous version of this paper was titled: ``A Semiparametric Network Formation Model with Multiple Linear Fixed Effects."}}
\author{Luis E. Candelaria\thanks{Department of Economics, University of Warwick, Coventry, U.K. Email: \texttt{L.Candelaria@warwick.ac.uk}. 
\newline
I am deeply grateful to Federico Bugni, Shakeeb Khan, Arnaud Maurel, and Matthew Masten for their excellent guidance, constant encouragement, and helpful discussions. 
I also thank Irene Botosaru, \'{A}ureo de Paula, Andreas Dzemski, Cristina Gualdani, Bryan Graham, Bo Honor\'{e}, Arthur Lewbel, Thierry Magnac, Chris Muris, James Powell, Adam Rosen, Takuya Ura, Martin Weidner, 
and seminar participants at Aarhus, Cambridge, Duke, Gothenburg,  LSE, Surrey, Syracuse, TSE, UCL, UNC Chapel Hill, Vanderbilt, Warwick,  2018 ES Winter Meeting in Philadelphia, 2019 Panel Data Workshop at the University of Amsterdam, 2019 Royal Economic Society at the University of Warwick, for their comments.
}}
\date{\today}
\maketitle
\begin{abstract}
This paper analyzes a semiparametric model of network formation in the presence of unobserved agent-specific heterogeneity. 
The objective is to identify and estimate the preference parameters associated with homophily on observed attributes when the distributions of the unobserved factors are not parametrically specified.
This paper offers two main contributions to the literature on network formation. 
First, it establishes a new point identification result for the vector of parameters that relies on the existence of a special regressor. 
The identification proof is constructive and characterizes a closed-form for the parameter of interest.
Second, it introduces a simple two-step semiparametric estimator for the vector of parameters with a first-step kernel estimator.
The estimator is computationally tractable and can be applied to both dense and sparse networks.
Moreover, I show that the estimator is consistent and has a limiting normal distribution as the number of individuals in the network increases.  
Monte Carlo experiments demonstrate that the estimator performs well in finite samples and in networks with different levels of sparsity.	
\begin{description}
	\item \emph{Keywords:} Network formation, Unobserved heterogeneity, Semiparametrics, Special regressor, Inverse weighting. 
\end{description}
\end{abstract}

\newpage
\section{Introduction}\label{S:Introduction}

People tend to connect with individuals with whom they share similar observed attributes. 
This observation is known as homophily and it is one of the main objects of study in the literature of social networks (\citealt{mcpherson2001birds}). 
However, few have investigated the role of homophily when individuals have preferences for unobserved attributes. 
Proper policy evaluation requires us to distinguish between the contributions of observed and unobserved attributes, since they have different policy implications. 
For example, students might form friendships based on their similarities on observed socioeconomic attributes as well as on their preferences for high levels of unobserved ability. 
While socioeconomic attributes can be influenced by a given policy intervention, preferences for ability are harder to change via targeted policies.
In this paper, I study the identification and estimation of the preference parameters associated with the observed attributes in a model of network formation that accounts for valuations on unobserved agent-specific factors. 
The identification and estimation strategies that I develop do not depend on distributional assumptions of the unobserved random components. 

In particular, I consider a semiparametric model of network formation with unobserved agent-specific heterogeneity. 
Specifically, two distinct agents $i$ and $j$ form an undirected link according to the following network formation equation:\footnote{A link between two agents is undirected if the connection is reciprocal. In other words, two agents are either connected or they are not. It excludes the case where one agent is related to another without the second being related to the first.}
\begin{equation}
	\label{eq:NFE}
	D_{ij} = \bm{1}\left[ g_0(Z_{i}, Z_{j})' \beta_0 + A_{i} + A_{j}-U_{ij} \geq 0 \right], 
\end{equation}
where $\mathbf{1} \left[ \cdot \right]$ is the indicator function, $D_{ij}$ is a binary outcome variable that takes a value equal to $1$ if agents $i$ and $j$ form a link and $0$ otherwise, 
$Z_{i}$ is a  vector of individual-specific and observed attributes, 
$g_0$ is a measurable function that is assumed to be known, nonlinear, finite, and symmetric on its arguments, 
$\beta_{0}$ is a vector of unknown parameters, 
$A_{i}$ and $A_{j}$ are unobserved and agent-specific random variables, 
and $U_{ij}$ is an unobserved and link-specific disturbance term.

Intuitively, equation  \eqref{eq:NFE} says that an undirected link between two agents is formed if the net benefit of the link between agents $i$ and $j$ is nonnegative. 
The components in equation  \eqref{eq:NFE} can be classified into three different categories. 
The first class, given by the vector of exogenous attributes $g_0(Z_{i}, Z_{j})$, captures the agents' preferences for establishing a link based on observed characteristics. 
For instance, this component is known as homophily on observed attributes when it captures preferences for sharing similar traits.
The second class, formed by the agent-specific and unobserved factors $A_{i}$ and $A_{j}$, captures the individual preferences for establishing connections based on agent-specific unobserved traits.
Finally, the third class, given by a link-specific disturbance term $U_{ij}$, captures the exogenous factors that influence the decision to form a specific link. 
The components in the last two categories are known to the agents but unobserved to the researcher.

The agent-specific factors in equation \eqref{eq:NFE} allow for unobserved heterogeneity across the individuals' decisions; 
this property enables the model to predict network structures with individual connections that are heterogeneous.
Moreover, under an unrestricted distribution of the unobserved agent-specific factors, 
these components could exhibit flexible dependence with the observed attributes. 

This paper offers two main contributions to the literature on network formation. 
The first contribution is to propose a new point identification strategy to identify the vector of coefficients in a semiparametric network formation model with unobserved agent-specific factors.
The point identification result is, to the best of my knowledge, the first generalization of a special regressor to analyze a network formation model (\citealt{lewbel:1998} and \citealt{lewbel:2000}).
This result depends on the existence of a special regressor and is obtained by weighting each linking decision in the network by the inverse of the conditional density of the special regressor given the observed attributes.
In section \ref{S:SpecialRegressor}, I provide sufficient conditions to point identify the vector of coefficients.
In section \ref{S:Identification:Bounds}, I provide a second point identification result that does not assume the existence of a special regressor. 
This result requires that at least one covariate has full support and consists in finding a sufficient statistic for the unobserved heterogeneity in equation \eqref{eq:NFE} at the tails of the distribution of the observed covariate with full support.

As a second contribution, I use the point identification result in section \ref{S:SpecialRegressor} to introduce a two-step semiparametric estimator of the vector of coefficients with a first-stage kernel estimator.
As an appealing property, this estimator has a closed form and is computationally tractable.
In section \ref{S:Inference: Special Regressor}, I provide sufficient conditions to show that the estimator is consistent, and it has a limiting normal distribution.
I perform inference in a setting where only one network with a large number of agents is observed in the data. 
Furthermore, I propose an adaptive inference approach to adjust for varying rates of convergence due to different levels of sparsity in the network (see, e.g., \citealt{andrews/schafgans:1998} and \citealt{khan/tamer:2010}).

In the rest of this section, I relate my results to the existing literature.

This paper is most closely related to the literature that studies dyadic network formation models with unobserved heterogeneity, (see, e.g., \citealt{graham:2017}, and \citealt{graham:2019a, graham:2019b} for additional surveys).
Within this literature, the studies by \cite{charbonneau:2017,jochmans:2017, jochmans:2018, dzemski:2019}, and \cite{yan/jiang/fienberg/leng:2019}
have analyzed the formation of a directed network.\footnote{\citet{charbonneau:2017} and \citet{jochmans:2017} study a two-way gravity model, which can be rationalized as a bipartite network with directed links.} 
Their methodologies differ substantially from the one proposed here since they follow a parametric conditional maximum likelihood approach to estimate the vector of coefficients $\beta_0$.
In contrast, I study the formation of an undirected network and follow a semiparametric approach.

This paper builds on the seminal work by \citet{graham:2017}, which aims to detect preferences for homophily in an undirected network model with agent heterogeneity. 
\citet[p. 1040]{graham:2017} introduces a Tetrad Logit Estimator with identification and asymptotic properties that depend on the link-specific disturbance terms following a logistic distribution. 
The point identification and estimation results presented below relax this requirement and can be applied to models where the distribution of $U_{ij}$ is not parametrically specified.

Since the initial draft of this paper was circulated, 
recent studies have appeared analyzing semiparametric or nonparametric variations of a dyadic network formation model with unobserved heterogeneity; 
these include papers by \citet{toth:2017,gao:2020}, and \citet{zeleneev:2020}.

Similarly to this paper, \cite{toth:2017} studies a dyadic network formation model in which the distribution of $U_{ij}$ is unknown.
However, the author uses a different identification strategy.
In particular, his strategy relies on assuming that each component in the vector of observed attributes $Z_i$ is continuously distributed which is then used to propose an identification strategy similar to the maximum rank by \citet{han:1987}.
An estimator for $\beta_0$ is then defined as the maximizer of a U process of order 4, with a nonparametric first-step estimator.\footnote{\cite{toth:2017} also proposes a variation of his estimation strategy which requires maximizing a U-process of order 2, with a nonparametric first-step estimator. This moditication improves the computational tractability of his method.} 

\citet{gao:2020} studies the identification of a dyadic network model with a nonparametric functional form for the preferences on homophily and an unknown cumulative distribution for $U_{ij}$.\footnote{ \citet{gao:2020} also provides several interesting extensions on the functional form of the unobserved heterogeneity; for reference, see \citet[p. 5]{gao:2020} and \citet[p. 6]{zeleneev:2020}. Those extensions are beyond the scope of this paper and left for future research.}  
He identifies the nonparametric homophily function by introducing a novel identification strategy that imposes an interquartile-range normalization and a location normalization of one of the quantiles as stochastic restrictions on the distribution of $U_{ij}$.

Finally, \citet{zeleneev:2020} studies the identification and estimation of a dyadic network formation model with a nonparametric structure of the unobserved heterogeneity.
This framework allows him to account for latent homophily on the unobserved attributes.
The author's identification analysis is based on introducing a pseudo-distance between a pair of agents $i$ and $j$, 
which allows him to recover groups of agents with the same levels of agent-specific unobserved heterogeneity. 
After conditioning on the matched agents with similar unobserved heterogeneity, the identification of the vector of coefficients proceeds from a pairwise difference strategy.
The estimation procedure follows the same logic as the identification strategy.

Contrary to previous studies, the identification strategy proposed here is based on the existence of a special regressor (see, e.g., \citet{lewbel:1998} and \citet{lewbel:2012} for a survey).
This paper, to the best of my knowledge, represents the first effort in the econometric literature to introduce a special regressor to analyze a network formation model. 
The vector of parameters $\beta_0$ is point identified after introducing a transformation that consists in weighting the linking decisions $D_{ij}$ by the inverse of the conditional density of the special regressor given the observed attributes. 
This transformation utilizes features of the distributions of observables and does not represent a stochastic restriction on the distribution of $U_{ij}$.
Therefore it is not nested in any existing work. As a restriction on the distribution of $U_{ij}$, I normalize to zero the conditional mean of the link-specific disturbance terms given the observed attributes.\footnote{In further research I will explore the informational content of the special regressor in a network formation model given a quantile or median restriction.}
In Section \ref{S:SpecialRegressor}, I provide a detailed discussion on the sufficient conditions needed to point identify $\beta_0$ via the existence of a special regressor.
 
The second point identification result introduced in section \ref{S:Identification:Bounds} is based on a sufficient statistic argument at the tails of the distribution of a covariate with full support.
The identification strategy shows that within- and across-individuals variation in the linking decisions can be used as a sufficient statistic to differentiate out the unobserved agent-specific factors in some sets of sufficient variations of the covariate with full support.
The existence of only one continuous attribute with large support in $Z_i$ is sufficient to show this result.
The latter assumption is satisfied by many real network datasets, and hence it is empirically relevant.\footnote{For example, in the  National Longitudinal Study of Adolescent to Adult Health (Add Health) dataset, household income is a continuous variable that can be demeaned and standardized to satisfy the support condition.}
The resulting semiparametric estimator is solved in one step, and it is defined as the maximizer of a U-process of order 4 with a trimming sequence.

In Section \ref{S:Inference: Special Regressor}, I introduce a two-step semiparametric estimator for $\beta_0$ based on the identification result that requires the existence of a special regressor.
The estimator has an analytic form similar to the least-squares, and it uses a first-step kernel estimator to weight the linking decisions $D_{ij}$ 
by the inverse of the conditional density of the special regressor.
In a recent paper, \citet{graham/niu/powell:2019} have studied the nonparametric estimation of density functions with dyadic data. 
I follow their findings to perform the first-step kernel estimation.
In theorems \ref{Theo:Inf:Consistency} and \ref{Theo:Inf:AN}, I show that the semiparametric estimator for $\beta_0$ is consistent and has limiting normal distribution.

Finally, the network formation model that I analyze is related to the literature on empirical games.
Specifically, the model in equation \eqref{eq:NFE} can be derived as a stable outcome in a static game. 
Papers that study the strategic formation of a network as a static game include 
\citet{goldsmith-pinkham/imbens:2013,leung:2015a,leung:2015,menzel:2015,miyauchi:2016, boucher/mourifie:2017, dePaula/Tamer:2017, mele2017structural, candelaria/ura:2018, sheng:2018, gualdani:2020}, and 
\citet{ridder/sheng:2020}.
The authors study network formation models that account for network externalities. 
Network externalities generate interdependencies in the linking decisions that depend on the structure of the network. 
The identification and estimation methods used in these papers differ substantially from the ones proposed here as they restrict the presence and distribution of the unobserved agent-specific heterogeneity.

The rest of the paper is organized as follows. 
Section \ref{S:Model} introduces the network formation model and motivates it as a stable outcome of a random utility model with transferable utilities.
Section \ref{S:Identification} provides the main identification results of the paper.
Section \ref{S:Inference: Special Regressor} introduces the semiparametric estimator and proves the main asymptotic results.
Section \ref{S:Simulations} reports simulation evidence and 
section \ref{S:Conclusion} concludes.
The appendix collects the proofs of various lemmas and theorems.


\section{Network formation model}\label{S:Model}

A network is an ordered pair $(\mathcal{N}_{n}, \bm{D}_{n})$ formed by a set of $n$ agents denoted by $\mathcal{N}_{n}= \left\{1, \cdots, n \right\}$ and an $n\times n$ adjacency matrix $\bm{D}_{n}$, which represents the links between the agents in $\mathcal{N}_{n}$. 
Let $D_{ij}$ denote the $(i,j)$th entry of the matrix $\bm{D}_{n}$. 
I assume the network is undirected and unweighted. A network is undirected if the adjacency matrix is symmetric, i.e., $D_{ij} = D_{ji}$. 
A network is unweighted if any $(i,j)$th entry of the adjacency matrix takes one of two values, where the values are normalized to be 0 and 1. 
In other words, $D_{ij} \in \left\{0,1\right\}$, where $D_{ij} =1$ if the agents $i$ and $j$ share a link and $D_{ij} =0$ otherwise. 
Furthermore, I normalize the value of self-ties to zero, that is, $D_{ii}=0$ for any agent $i$. 

\begin{example}[Friendships network]
 A network of best friends is an example of an undirected and unweighted network. 
 Two agents are considered to be best friends if and only if both agents report each other as friends. 
 In this case, $D_{ij} = D_{ji}=1$. Also, this example rules out the scenario of an agent reporting herself as her best friend.  
\end{example}

Each agent $i \in \mathcal{N}_{n}$ is endowed with a $K+1$-dimensional vector of observed attributes $Z_{i}$ and an unobserved scalar component term $A_{i}$. 
Common examples of observed attributes that could explain the formation of a friendships network among high school students are age,  gender, ethnicity, religion, and the students' interest in extracurricular activities.
The component $A_{i}$ captures individual $i$'s preferences for establishing a link based on unobserved and agent-specific attributes. 
The unobserved component $U_{ij}$ captures exogenous stochastic factors that influence the pair-specific decision to establish a link between agents $i$ and $j$.

Given the vectors of observed attributes $Z_{i}$ and $Z_{j}$ for $i\neq j$, let  $\bar{Z}_{ij}=g_0(Z_{i}, Z_{j})$ be a $K+1$-dimensional vector of pair-specific attributes. 
The function $g_0$ is assumed to be a known measurable function that is nonlinear and finite.\footnote{The intuition behind the requirement that $g_0$ is a nonlinear function is similar to the logic for the identification of the vector of coefficients in a linear panel data model with fixed effects. A specific feature of those models is that only the coefficients associated with time-varying variables are identified. The identification strategies proposed in section \ref{S:Identification} use the pairwise variation in $\bar{Z}_{ij}$ to identify $\beta_0$. The assumption that $g_0$ is nonlinear rules out the case that the pairwise variation is equal to the vector of zeroes, and hence, $\beta_0$ is not identified.} 
Given the undirected nature of the network, $g_0$ is assumed to be symmetric on its terms. 
The specification of $g_0$ varies according to the empirical application and is chosen by the researcher to capture homophily or heterophily effects. 
For example, suppose that $Z_{i}$ is a scalar random variable that represents agent $i$'s gender, 
then $\bar{Z}_{ij}$ could be defined as $\mathbf{1} \left[ Z_{i} = Z_{j}\right]$ to capture the preferences for homophily. 
Under this specification, $\bar{Z}_{ij}$ equals $1$ if agents $i$ and $j$ share the same gender and $0$ otherwise.

The network formation model described in equation \eqref{eq:NFE} can be obtained as a stable outcome of a random utility model with transferable utilities. 
In particular, let $\bar{u}_{ij}(\bar{Z}_{ij}, A_{j}, U_{ij})$ denote individual $i$'s latent valuation of establishing a link with $j$ given 
their shared-observed attributes $\bar{Z}_{ij}$, agent $j's$ unobserved type $A_j$, and their common unobserved factor $U_{ij}$.  
It follows that the joint net benefit of adding the link $\{i,j\}$ to the network $\bm{D}_{n}$ is
\begin{equation}
\label{eq:transferableutility}
\bar{u}_{ij}(\bar{Z}_{ij}, A_{j}, U_{ij}) + \bar{u}_{ji}(\bar{Z}_{ij}, A_{i}, U_{ij}) = \bar{Z}_{ij}'\beta_{0} + A_{i} + A_{j} - U_{ij}.
\end{equation}

Notice that the joint net benefit accounts for the preferences based on the observed attributes $\bar{Z}_{ij}'\beta_{0}$, as well as 
preferences for association based on agent-specific factors $A_{i} + A_{j}$, 
and for exogenous factors affecting the decision to establish a link $U_{ij}$.

Equation \eqref{eq:transferableutility} implies that two distinct individuals $i$ and $j$ in $\mathcal{N}_n$ 
only have valuations for their own observed attributes and agent-specific factors. 
To clarify, in the link formation decision for dyad $\{i,j\}$, 
the individuals do not take into account either observed and unobserved attributes of other individuals in the network, 
or general features of the network other than the dyad $\{i,j\}$.
These effects are known as network externalities 
(see, e.g., \citealt{chandrasekhar2014tractable, leung:2015,mele2017structural, menzel:2015,badev:2018, sheng:2018, ridder/sheng:2020}).
Some examples of these effects are preferences for reciprocity, transitive triads, or high network degree.
I leave this extension for future research. 

Next, I introduce the definition of stability.
\begin{definition}[Stability]
A network $\bm{D}_{n}$ is stable with transfers if for any distinct $i,j \in \mathcal{N}_n$:
\begin{enumerate}
\vspace{-1em}
\item for all $D_{ij} =1$, $ \bar{u}_{ij}(\bar{Z}_{ij}, A_{j}, U_{ij}) + \bar{u}_{ji}(\bar{Z}_{ij}, A_{i}, U_{ij}) \geq 0 $;
\item for all $D_{ij} =0$, $ \bar{u}_{ij}(\bar{Z}_{ij}, A_{j}, U_{ij}) + \bar{u}_{ji}(\bar{Z}_{ij}, A_{i}, U_{ij}) <  0 $.
\end{enumerate}
\end{definition}

Notice that this definition adapts the pairwise stability in \citet{jackson1996strategic} to allow for transferable utilities. 
Intuitively, this condition states that a link within dyad $\{i,j\}$ is established if the net benefit of that connection is nonnegative.   
For a generalization to nontransferable utilities, see \citet{gao/li/Sheng:2020}.

\subsection{Notation}

The following notation will be maintained  in the rest of the paper.
I will assume that the vector of observed covariates 
$Z_{i}=(v_{i}, X_{i}')'$
is comprised of a scalar random variable 
$v_i \in \mathbb{R}$
and a $K$-dimensional random vector 
$X_{i} \in \mathbb{R}^{K}$. 
Similarly, let
\begin{eqnarray*}
	\bar{Z}_{ij} &=& \left( g_0(v_i, v_j) , g_0(X_i, X_j)' \right)' 
	= (v_{ij}, W_{ij}')'	
\end{eqnarray*}
denote  the observed covariates at dyad level, and let $\beta_0 = (1, \theta_0')'$.

I will denote the distinct profiles of observed attributes for all the agents in the network as
$\bm{Z}_{n}=\{Z_{i}: i\in \mathcal{N}_{n}\}$, 
$\bm{v}_{n}=\{v_{i}: i\in \mathcal{N}_{n}\}$, and $\bm{X}_n = \left\{ X_{i}: i\in \mathcal{N}_n\right\}$ .
Similarly, let 
$\bm{A}_n= \left\{ A_{i}: i\in \mathcal{N}_n\right\}$ denote the profile of unobserved attributes.
Moreover, 
let $\bm{Z}_{-ij} = \{Z_{k}: k\neq i,j\}$, and $\bm{A}_{-ij} = \{A_{k}: k\neq i,j\}$
denote the collection of observed and unobserved attributes for all agents in the network other than agents $i$ and $j$.

The identification and estimation strategies introduced in sections \ref{S:Identification} and \ref{S:Inference: Special Regressor} use 
the information contained in subnetworks formed by groups of four distinct agents  $\{i_1,i_2,j_1,j_2\}$, also known as tetrads.
The following notation is used to describe attributes at the tetrad level.
Given a network of size $n$, there is a total of 
\[
  m_{n} = 4! \binom{n}{4}	
\]
ordered tetrads with distinct indices $i_1,i_2,j_1,j_2\in \mathcal{N}_{n}$.  
Let $\sigma$ be a function that maps these tetrads to the index set 
$\mathcal{N}_{m_{n}} = \left\{ 1, \cdots, m_{n}\right\}$. 
Thus, each tetrad with distinct indices $\left\{ i_1,i_2,j_1,j_2 \right\}$ corresponds to a unique $\sigma\left( \left\{ i_1,i_2, j_1,j_2\right\}\right) \in  \mathcal{N}_{m_{n}}$.

Given any $\sigma(\{i_1,i_2,  j_1,j_2\}) \in \mathcal{N}_{m_n}$, let
$ v_{\sigma} = \left\{ v_{i_1}, v_{j_1}, v_{i_2}, v_{j_2}\right\}$,
$ X_{\sigma} = \left\{ X_{i_1}, X_{j_1}, X_{i_2}, X_{j_2}\right\}$, 
and 
$ A_{\sigma}= \left\{ A_{i_1}, A_{j_1}, A_{i_2}, A_{j_2}\right\}$.

Moreover, define the pairwise variations across observed attributes and linking decisions as follows
\begin{eqnarray*}
	\tilde{v}_{\sigma} &=& \tilde{v}_{i_1i_2,j_1j_2} = (v_{i_1j_1}-v_{i_1j_2})-(v_{i_2j_1}-v_{i_2j_2}) \\
	\tilde{W}_{\sigma} &=& \tilde{W}_{i_1i_2,j_1j_2} = (W_{i_1j_1}-W_{i_1j_2})-(W_{i_2j_1}-W_{i_2j_2}) \\
	\tilde{D}_{\sigma} &=& \tilde{D}_{i_1i_2,j_1j_2} =  (D_{i_1j_1}-D_{i_1j_2})-(D_{i_2j_1}-D_{i_2j_2}). 
\end{eqnarray*}

Finally, given any fixed tetrad $\sigma(\{i_1,i_2,  j_1,j_2\}) \in \mathcal{N}_{m_n}$, let 
$\omega_{l_1l_2}= \left( v_{l_1l_2},X_{l_1}, X_{l_2}, A_{l_1}, A_{l_2} \right)$ 
denote the profile of attributes at dyad-level 
and 
$p_n(\omega_{l_1l_2}) = P\left[ D_{l_1l_2} =1 \mid \omega_{l_1l_2} \right]$
denote the probability that a link is created for any dyad $(l_1,l_2) \in \left\{ (i_1, j_1), (i_1, j_2), (i_2, j_1), (i_2,j_2) \right\}$.

\section{Identification}\label{S:Identification}
This section introduces the main identification results for the semiparametric network formation model with unobserved agent-specific factors.
In particular, section \ref{S:SpecialRegressor} presents the main point identification result when a special regressor is available.
Section \ref{S:Identification:Bounds} introduces a second point identification result when a covariate with full support is available.

\subsection{Point Identification Result: Special Regressor}\label{S:SpecialRegressor}

Using the notation introduced in section \ref{S:Model}, the rest of the paper considers the following representation for the network
formation model specified by equation \eqref{eq:NFE}.
In particular, agents $i$ and $j$ in $\mathcal{N}_n$ with $i\neq j$ will form an undirected link according to the following equation
\begin{equation}\label{eq:NFESpecReg}
D_{ij} 
= 
\bm{1}
\left[ 
	v_{ij}+ W_{ij}' \theta_0 + A_{i} + A_{j}-U_{ij} \geq 0 
\right], 
\end{equation}
where the coefficient associated with $v_{ij}$ has been normalized to 1 and $\theta_0$ is a $K$-dimensional vector of coefficients.
Given that the network of interest is undirected, $U_{ij}$ is assumed to be symmetric, i.e., $U_{ij} = U_{ji}$.
The vector $\theta_0$ represents the main parameter of interest.

Assumptions \ref{Ass:SR00:iidsampling}-\ref{Ass:SR04:identif} 
will 
specify the underlying structure for the network formation model in equation \eqref{eq:NFESpecReg}, which  
will be used to show the main identification result for $\theta_0$.

\begin{assumption}\label{Ass:SR00:iidsampling}
The random sequence $\{ Z_{i}, A_{i}\}_{i=1}^{n}$ is independent and identically distributed. 
\end{assumption}

Assumption \ref{Ass:SR00:iidsampling} describes the sampling process, and it is widely used to describe network data 
(see, e.g., \citealt{graham:2017, jochmans:2018}, and \citealt{auerbach:2019}).

\begin{assumption}\label{Ass:SR01:distr} For any finite $n$, the following holds.
	\begin{enumerate}
		\vspace{-1em}
		\item The sequence $ \{ U_{ij} \mid \bm{Z}_{n}, \bm{A}_{n}\}_{i\neq j}$ is conditionally independent and identically distributed for any dyad $\{i,j\}$.
		Moreover, $U_{ij} = U_{ji}$ for any dyad $\{i,j\}$.
		\item For any dyad $\{i,j\}$, $U_{ij} \mid \bm{Z}_{n}, \bm{A}_{n} \overset{d}{=} U_{ij}\mid Z_{i},Z_{j}, A_{i}, A_{j}$.
	\end{enumerate}
\end{assumption}

Assumption \ref{Ass:SR01:distr}.1  states that conditional on 
$(\bm{Z}_{n}, \bm{A}_{n})$
the link-specific disturbance terms $\{U_{ij}\}_{i\neq j}$ are independent across dyads $\{i,j\}$ and drawn from the same distribution.
Furthermore, Assumption \ref{Ass:SR01:distr}.2 requires that
conditional on $(Z_{i},Z_{j}, A_{i}, A_{j})$, the link-specific disturbance terms $U_{ij}$ are independent 
of any observed or unobserved feature in  $(\bm{Z}_{-ij},\bm{A}_{-ij})$.
Assumption \ref{Ass:SR01:distr} ensures that each of the linking decisions in the network is conditionally independent. 
In other words, it rules out interdependence across linking decisions due to externalities across the network. 

Notice that Assumption \ref{Ass:SR01:distr} allows for heteroskedasticity of a general form in the distribution of $U_{ij}$.
Moreover, it allows for flexible dependence between the unobserved agent-specific factors and the observed attributes. 
In other words, Assumption \ref{Ass:SR01:distr} does not restrict the joint distribution $(\bm{Z}_{n}, \bm{A}_{n})$.
Assumption \ref{Ass:SR01:distr} is commonly used in semiparametric nonlinear panel data models, for example in \citet{arellano2001panel}.
In network formation models, full stochastic independence $U_{ij}\perp \bm{Z}_{n}, \bm{A}_{n}$ is usually imposed (see, e.g., \citealt{leung:2015, menzel:2015, graham:2017, toth:2017}, and \citealt{gao:2020}).
Arbitrary heteroskedasticity is also considered in \citet{zeleneev:2020}.

\begin{assumption}\label{Ass:SR02:indep}
	Given $n$ and any distinct $i,j \in \mathcal{N}_n$, 
	let $e_{ij} = A_{i} + A_{j}-U_{ij}$ and suppose that $e_{ij}$ is conditionally independent of $v_{ij}$ given $(X_{i}, X_{j})$. 
	Let $F_{e\mid x}\left(e_{ij}\mid X_{i}, X_{j} \right)$ denote the conditional distribution of $e_{ij}$ given $(X_{i}, X_{j})$, 
	with support given by $\mathbb{S}_{e}(X_{i}, X_{j})$ and finite first moment. 
\end{assumption}
Assumption  \ref{Ass:SR02:indep} represents an exclusion restriction, and it entails that the regressor $v_{ij}$ is conditionally independent of $e_{ij}$ given the observed attributes $(X_{i}, X_{j})$.\footnote{The conditional independence property needs to hold after conditioning on the observed attributes $(X_{i}, X_{j})$, and not just the dyad-specific covariates $W_{ij}$. 
The intuition behind this insight follows from Assumption \ref{Ass:SR00:iidsampling}, 
which allows for unrestricted dependence between $X_i$, and $A_i$. 
In particular, the proof of Theorem \ref{Theo:IdSR} requires that any stochastic variation left in $A_i+A_j$ 
after conditioning on $(X_{i}, X_{j})$, is independent of $W_{kl}$ for any $k,l \in \mathcal{N}_n$, including, for example  $W_{il}$.
This property no longer holds if the conditioning variable used is $W_{ij}$ since it is only a feature of $(X_i, X_j)$.
}
In other words, $v_{ij}$ is a special regressor in the sense of \citet{lewbel:1998}, \citet{lewbel:2000}, and \citet{lewbel:2012}.

\begin{assumption}\label{Ass:SR03:fullsupport}
	Given $n$ and any distinct $i,j \in \mathcal{N}_n$, 
	the conditional distribution of $v_{ij}$ given $(X_{i}, X_{j})$ is absolutely continuous with respect to the Lebesgue measure with conditional density $f_{v\mid x}(v_{ij}\mid X_{i}, X_{j} )$
	and support given by $\mathbb{S}_{v}(X_{i}, X_{j}) =[\underline{s}_v, \overline{s}_v]$ for some constants $\underline{s}_v$ and $\overline{s}_v$, with $-\infty \leq \underline{s}_v< 0<\overline{s}_v \leq \infty$.
	For any $(X_i, X_j)$, the support of $-W_{ij}' \theta_0-e_{ij}$ is a subset of the interval $[\underline{s}_v,\overline{s}_v]$.
\end{assumption}
Assumption  \ref{Ass:SR03:fullsupport} is a support condition, and it ensures that $v_{ij}\mid X_{i}, X_{j}$ has a positive density function 
$f_{v\mid x}(v_{ij}\mid X_{i}, X_{j} )$ on $\mathbb{S}_{v}(X_{i}, X_{j})$. 
Furthermore, it requires that for any $(X_i, X_j)$ the support of $(-W_{ij}' \theta_0-e_{ij})$ is contained in $\mathbb{S}_{v}(X_{i}, X_{j})$. 
Notice that Assumption \ref{Ass:SR03:fullsupport} does not restrict $v_{ij}\mid X_i, X_j$ to having full support on the real line.
Hence the point identification result introduced in this section is general enough to include both cases: 
(i) the full support case, and (ii) the existence of a continuous covariate with bounded support that contains $supp\left(-W_{ij}' \theta_0-e_{ij} \mid X_i, X_j\right)$.
Moreover, observe that Assumption \ref{Ass:SR03:fullsupport} leaves unrestricted the distribution of the observed attributes $(X_i,X_j)$. 
Hence, this identification strategy also allows for discrete covariates in $W_{ij}$.

\begin{assumption}\label{Ass:SR04:identif}
Given $n$ and any tetrad $\sigma \in \mathcal{N}_{m_n}$, $\mathbb{E} \left[U_{ij}\mid X_{i}, X_{j} \right]=0$, and 
\[
\Gamma_0=\mathbb{E} \left[ \tilde{W}_{\sigma}\tilde{W}_{\sigma}' \right]
\]
is a finite and nonsingular matrix. 
\end{assumption}
The first part of assumption \ref{Ass:SR04:identif} represents a stochastic restriction on the link-specific disturbance term.
In particular, it requires that $U_{ij}\mid X_{i}, X_{j}$ has conditionally mean zero. 
The second part of assumption \ref{Ass:SR04:identif} is the standard full rank condition on the pairwise variation of the observed attributes $\tilde{W}_{\sigma}$, 
and it ensures that $\theta_0$ is point identified.

The network formation model specified by equation \eqref{eq:NFESpecReg} and Assumptions 
\ref{Ass:SR00:iidsampling}-\ref{Ass:SR04:identif} represents, to the best of my knowledge, the first generalization
of the special regressor to analyze network data. 
Following \citet{lewbel:1998, lewbel:2000}, \citet{honore/lewbel:2002}, and \citet{chen/khan/tang:2019}, 
let $D_{ij}^\ast$ be defined as
\begin{eqnarray}\label{eq:Dstar}
	D_{ij}^\ast 
	&=& 
	\left[
		\frac{
			D_{ij}- \bm{1}\left[ v_{ij} > 0 \right]
		}{
			f_{v\mid x}(v_{ij}\mid X_{i}, X_{j} )
		}
	\right]	
\end{eqnarray}
for any distinct $i, j \in \mathcal{N}_n$.

The following theorem and appended corollary formalize the first point identification result for $\theta_0$.

\begin{theorem}\label{Theo:IdSR}
If Assumptions \ref{Ass:SR02:indep}-\ref{Ass:SR04:identif}  hold in equation \eqref{eq:NFESpecReg}, then for any distinct $i$ and $j$ in $\mathcal{N}_n$
\begin{eqnarray*}
	\mathbb{E}[D_{ij}^{\ast}\mid X_{i}, X_{j}] 
	&=& 
	W_{ij}' \theta_0 
	+ 
	\mathbb{E}[A_{i} + A_{j}\mid X_{i}, X_{j} ].	
\end{eqnarray*}
\end{theorem}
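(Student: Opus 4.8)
The plan is to adapt the special-regressor argument of \citet{lewbel:1998, lewbel:2000} to the dyadic indicator in \eqref{eq:NFESpecReg}. Throughout, write $e_{ij}=A_i+A_j-U_{ij}$ as in Assumption \ref{Ass:SR02:indep}, so that $D_{ij}=\bm{1}\left[v_{ij}\geq -\left(W_{ij}'\theta_0+e_{ij}\right)\right]$, and recall that $W_{ij}=g_0(X_i,X_j)$ is a measurable function of $(X_i,X_j)$. Observe first that the centering term $\bm{1}[v_{ij}>0]$ in \eqref{eq:Dstar} is precisely what renders $D_{ij}^\ast$ integrable when $\mathbb{S}_v(X_i,X_j)$ is unbounded, since it forces the relevant integrand below to be compactly supported.

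The first step is to condition on the enlarged information set $(X_i,X_j,e_{ij})$. By the exclusion restriction in Assumption \ref{Ass:SR02:indep}, $v_{ij}$ is conditionally independent of $e_{ij}$ given $(X_i,X_j)$, so the conditional density of $v_{ij}$ given $(X_i,X_j,e_{ij})$ is still $f_{v\mid x}(\cdot\mid X_i,X_j)$, which is strictly positive on $\mathbb{S}_v(X_i,X_j)=[\underline{s}_v,\overline{s}_v]$ by Assumption \ref{Ass:SR03:fullsupport}. Integrating over $v_{ij}$, the density in the denominator of $D_{ij}^\ast$ cancels against this conditional density, leaving
\[
\mathbb{E}\!\left[D_{ij}^\ast\mid X_i,X_j,e_{ij}\right]
=\int_{\underline{s}_v}^{\overline{s}_v}\Bigl(\bm{1}\left[v\geq c\right]-\bm{1}\left[v>0\right]\Bigr)\,dv,
\qquad c:=-\left(W_{ij}'\theta_0+e_{ij}\right).
\]

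The crux is evaluating this integral. Assumption \ref{Ass:SR03:fullsupport} guarantees both $c\in[\underline{s}_v,\overline{s}_v]$ and $\underline{s}_v<0<\overline{s}_v$, so $c$ and $0$ both lie in the region of integration. A short case split shows that the integrand equals $\bm{1}[c\leq v\leq 0]$ when $c\leq 0$ and $-\bm{1}[0<v<c]$ when $c>0$ (absolute continuity of $v_{ij}\mid X_i,X_j$ makes the boundary points irrelevant), and in either case the integral equals $-c=W_{ij}'\theta_0+e_{ij}$. Had $c$ been allowed to exit $[\underline{s}_v,\overline{s}_v]$ the integral would be truncated and this linear relation would fail, which is exactly why the support condition is imposed. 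Hence $\mathbb{E}\left[D_{ij}^\ast\mid X_i,X_j,e_{ij}\right]=W_{ij}'\theta_0+e_{ij}$; taking the expectation over $e_{ij}$ given $(X_i,X_j)$ — legitimate because $e_{ij}$ has a finite first moment by Assumption \ref{Ass:SR02:indep} — gives $\mathbb{E}\left[D_{ij}^\ast\mid X_i,X_j\right]=W_{ij}'\theta_0+\mathbb{E}\left[A_i+A_j-U_{ij}\mid X_i,X_j\right]$, and the conditional mean-zero restriction $\mathbb{E}\left[U_{ij}\mid X_i,X_j\right]=0$ from Assumption \ref{Ass:SR04:identif} removes the last term.

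I expect the case analysis in the integral evaluation to be the only substantive step; everything else is the law of iterated expectations together with the stated assumptions, and the one thing to monitor is that each interchange of integration/expectation is backed by the integrability built into Assumptions \ref{Ass:SR02:indep} and \ref{Ass:SR03:fullsupport}.
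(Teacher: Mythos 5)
Your proposal is correct and follows essentially the same argument as the paper: cancel $f_{v\mid x}$ using the conditional independence of $v_{ij}$ and $e_{ij}$ given $(X_i,X_j)$, evaluate $\int(\bm{1}[v\geq c]-\bm{1}[v>0])\,dv=-c$ under the support condition, and then average over $e_{ij}$, with Assumption \ref{Ass:SR04:identif} killing the $U_{ij}$ term. The only difference is that you condition on $(X_i,X_j,e_{ij})$ up front rather than conditioning on $(v_{ij},X_i,X_j)$ and invoking Fubini as the paper does, which is an immaterial reordering (and your explicit case split even handles the unbounded-support case a bit more cleanly than the paper's displayed computation).
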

\begin{proof}
See Appendix \ref{Appx:ProofsIdentification}.
\end{proof}

\begin{corollary}\label{cor:id1}
If Assumptions \ref{Ass:SR00:iidsampling}-\ref{Ass:SR04:identif} hold in equation \eqref{eq:NFESpecReg}, then for any tetrad $\sigma \in \mathcal{N}_{m_n}$
\begin{eqnarray}\label{eq:IDcondition}
	\mathbb{E} \left[  \tilde{W}_{\sigma} \tilde{D}^\ast_{\sigma} \right]			
	&=& 
	\mathbb{E} \left[ \tilde{W}_{\sigma} \tilde{W}_{\sigma}'\right] \theta_0,
\end{eqnarray}
and hence, 
\begin{eqnarray}\label{eq:beta}
	\theta_{0} 
	&=& 
	\Gamma_0^{-1} \times \Psi_{0}
\end{eqnarray}
with $\Psi_{0}= \mathbb{E} \left[  \tilde{W}_{\sigma} \tilde{D}^\ast_{\sigma} \right].$
\end{corollary}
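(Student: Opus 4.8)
The plan is to establish the moment equation \eqref{eq:IDcondition} by applying Theorem \ref{Theo:IdSR} separately to each of the four dyads that make up a tetrad and then exploiting the double-difference structure of $\tilde{D}^\ast_\sigma$ to annihilate the agent-specific terms; equation \eqref{eq:beta} then follows by premultiplying by $\Gamma_0^{-1}$.

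First I would fix a tetrad $\sigma(\{i_1,i_2,j_1,j_2\})$ and condition on $X_\sigma=(X_{i_1},X_{j_1},X_{i_2},X_{j_2})$. Since $\tilde{D}^\ast_\sigma = D^\ast_{i_1 j_1} - D^\ast_{i_1 j_2} - D^\ast_{i_2 j_1} + D^\ast_{i_2 j_2}$, it is enough to evaluate $\mathbb{E}[D^\ast_{l_1 l_2}\mid X_\sigma]$ for each dyad $(l_1,l_2)$ appearing in this expression. I would argue that $\mathbb{E}[D^\ast_{l_1 l_2}\mid X_\sigma]=\mathbb{E}[D^\ast_{l_1 l_2}\mid X_{l_1},X_{l_2}]$: by construction $D^\ast_{l_1 l_2}$ is a measurable function of $(Z_{l_1},Z_{l_2},A_{l_1},A_{l_2},U_{l_1 l_2})$ only — in particular the denominator $f_{v\mid x}(v_{l_1 l_2}\mid X_{l_1},X_{l_2})$ and the indicator $\bm{1}[v_{l_1 l_2}>0]$ involve no attribute outside the dyad — while by Assumption \ref{Ass:SR00:iidsampling} the pairs $(Z_k,A_k)$ are i.i.d.\ across agents and by Assumption \ref{Ass:SR01:distr} the conditional law of $U_{l_1 l_2}$ depends only on $(Z_{l_1},Z_{l_2},A_{l_1},A_{l_2})$, so conditioning additionally on the attributes of the remaining two agents is uninformative. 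Theorem \ref{Theo:IdSR} then yields $\mathbb{E}[D^\ast_{l_1 l_2}\mid X_{l_1},X_{l_2}] = W_{l_1 l_2}'\theta_0 + \mathbb{E}[A_{l_1}+A_{l_2}\mid X_{l_1},X_{l_2}]$, and, invoking Assumption \ref{Ass:SR00:iidsampling} once more, $\mathbb{E}[A_{l_1}+A_{l_2}\mid X_{l_1},X_{l_2}] = \psi(X_{l_1})+\psi(X_{l_2})$ with the common function $\psi(x):=\mathbb{E}[A_1\mid X_1=x]$. Substituting the four dyadic expressions into $\tilde{D}^\ast_\sigma$, the double difference cancels every $\psi(\cdot)$ term and leaves exactly $\tilde{W}_\sigma$ out of the $W_{l_1 l_2}$'s, so $\mathbb{E}[\tilde{D}^\ast_\sigma\mid X_\sigma] = \tilde{W}_\sigma'\theta_0$.

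It then remains to multiply through by $\tilde{W}_\sigma$, which is $X_\sigma$-measurable, and take unconditional expectations: by the law of iterated expectations $\mathbb{E}[\tilde{W}_\sigma\tilde{D}^\ast_\sigma] = \mathbb{E}[\tilde{W}_\sigma\tilde{W}_\sigma']\theta_0 = \Gamma_0\theta_0$, which is \eqref{eq:IDcondition}, with finiteness of the integrals inherited from the moment restrictions in Assumptions \ref{Ass:SR02:indep} and \ref{Ass:SR04:identif} and from Theorem \ref{Theo:IdSR}. Since $\Gamma_0$ is nonsingular by Assumption \ref{Ass:SR04:identif}, premultiplying by $\Gamma_0^{-1}$ gives $\theta_0 = \Gamma_0^{-1}\Psi_0$ with $\Psi_0 = \mathbb{E}[\tilde{W}_\sigma\tilde{D}^\ast_\sigma]$, which is \eqref{eq:beta}. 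The only step demanding genuine care is the reduction $\mathbb{E}[D^\ast_{l_1 l_2}\mid X_\sigma]=\mathbb{E}[D^\ast_{l_1 l_2}\mid X_{l_1},X_{l_2}]$ — one must verify that nothing entering $D^\ast_{l_1 l_2}$ depends on the other two agents and then deploy i.i.d.\ sampling together with the conditional-independence restriction on $U$ — after which the argument is just the telescoping algebra of the double difference, a single application of iterated expectations, and the matrix inversion from the rank condition.
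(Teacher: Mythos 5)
Your proposal is correct and follows essentially the same route as the paper's proof: reduce the conditioning set from $X_\sigma$ to the dyad level via Assumptions \ref{Ass:SR00:iidsampling} and \ref{Ass:SR01:distr}, apply Theorem \ref{Theo:IdSR} dyad by dyad, let the double difference eliminate the unobserved heterogeneity, and finish with iterated expectations and the nonsingularity of $\Gamma_0$. The only cosmetic difference is that you cancel the heterogeneity through the i.i.d.-implied additive form $\psi(X_{l_1})+\psi(X_{l_2})$, whereas the paper differences within individuals first so that the common term $\mathbb{E}[A_{k}-A_{l}\mid X_\sigma]$ drops out directly; both are valid under the stated assumptions.
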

\begin{proof}
See Appendix \ref{Appx:ProofsIdentification}.
\end{proof}

Theorem \ref{Theo:IdSR} and Corollary \ref{cor:id1} demonstrate that 
$\theta_0$ 
is point identified using the information contained in the joint distribution of $\{\tilde{D}_{\sigma}^\ast, \tilde{W}_{\sigma}\}$ at tetrad level, 
and with analytic expression given by equation \eqref{eq:beta}.
This result shows that $\theta_0$ is identified as an average of the linking decisions $\tilde{D}_{\sigma}$ which are weighted by the inverse of the conditional density of the special regressor given the observed attributes, 
$f_{v\mid x}(v_{ij} \mid X_i, X_j)$.
The result in Corollary \ref{cor:id1} will be used as a foundation of the semiparametric estimator introduced in Section \ref{S:Inference: Special Regressor}.

Given the results in Theorem \ref{Theo:IdSR} and Corollary \ref{cor:id1}  
the average contribution of the unobserved agent-specific factors to the formation of a link is also identified.
\begin{corollary}\label{cor:id2}
If Assumptions \ref{Ass:SR00:iidsampling}-\ref{Ass:SR04:identif} hold in equation \eqref{eq:NFESpecReg}, then for any $i$ and $j$ in $\mathcal{N}_n$
\begin{eqnarray}\label{eq:IDAcondition}
	\mathbb{E} \left[ A_{i} + A_{j} \right]			
	&=&
	\mathbb{E} \left[ D^\ast_{ij} \right]			
	-
	\mathbb{E} \left[ W_{ij} \right]' \theta_0,
\end{eqnarray}
\end{corollary}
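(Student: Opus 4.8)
The plan is to obtain (\ref{eq:IDAcondition}) as an immediate consequence of Theorem \ref{Theo:IdSR} together with the law of iterated expectations. Under the hypotheses of the corollary, Assumptions \ref{Ass:SR02:indep}--\ref{Ass:SR04:identif} all hold, so Theorem \ref{Theo:IdSR} applies and gives, for any distinct $i,j\in\mathcal{N}_n$,
\[
\mathbb{E}[D_{ij}^{\ast}\mid X_i,X_j]=W_{ij}'\theta_0+\mathbb{E}[A_i+A_j\mid X_i,X_j].
\]
I would then take unconditional expectations of both sides. Since $W_{ij}=g_0(X_i,X_j)$ is a measurable function of $(X_i,X_j)$ and $\theta_0$ is a fixed vector, the tower property yields $\mathbb{E}\big[\mathbb{E}[D_{ij}^{\ast}\mid X_i,X_j]\big]=\mathbb{E}[D_{ij}^{\ast}]$ on the left, and $\mathbb{E}[W_{ij}'\theta_0]+\mathbb{E}\big[\mathbb{E}[A_i+A_j\mid X_i,X_j]\big]=\mathbb{E}[W_{ij}]'\theta_0+\mathbb{E}[A_i+A_j]$ on the right. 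Rearranging the resulting scalar identity gives exactly $\mathbb{E}[A_i+A_j]=\mathbb{E}[D_{ij}^{\ast}]-\mathbb{E}[W_{ij}]'\theta_0$.

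The only point that requires a word of justification is that all the expectations involved are finite, so that the rearrangement above is valid and the right-hand side is a well-defined object. Finiteness of $\mathbb{E}[A_i+A_j]$ follows because Assumption \ref{Ass:SR02:indep} imposes a finite first moment on $e_{ij}=A_i+A_j-U_{ij}$ and Assumption \ref{Ass:SR04:identif} forces $\mathbb{E}[U_{ij}\mid X_i,X_j]=0$; finiteness of $\mathbb{E}[W_{ij}]$ is guaranteed by the finiteness of $\Gamma_0=\mathbb{E}[\tilde W_{\sigma}\tilde W_{\sigma}']$ in Assumption \ref{Ass:SR04:identif}; and Assumption \ref{Ass:SR03:fullsupport} keeps $f_{v\mid x}$ positive on the support of $-W_{ij}'\theta_0-e_{ij}$, so $D_{ij}^{\ast}$ and hence $\mathbb{E}[D_{ij}^{\ast}]$ are finite as well. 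Finally, since $\theta_0$ is point identified through Corollary \ref{cor:id1} and both $\mathbb{E}[D_{ij}^{\ast}]$ and $\mathbb{E}[W_{ij}]$ are features of the distribution of observables (given the special-regressor density $f_{v\mid x}$), the right-hand side of (\ref{eq:IDAcondition}) is identified, which is the content of the corollary.

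I do not expect any genuine obstacle: the statement is essentially a one-line corollary of Theorem \ref{Theo:IdSR}, and the only work is the routine bookkeeping on integrability, which the maintained assumptions already deliver. If one wanted the sharper conclusion that each $\mathbb{E}[A_i]$ is identified, I would additionally invoke Assumption \ref{Ass:SR00:iidsampling}, under which $\mathbb{E}[A_i+A_j]=2\,\mathbb{E}[A_1]$ for every pair, but this refinement is not needed for (\ref{eq:IDAcondition}) itself.
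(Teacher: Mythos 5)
Your proof is correct and coincides with the paper's treatment: the paper gives no separate proof of Corollary \ref{cor:id2}, presenting it as an immediate consequence of Theorem \ref{Theo:IdSR} obtained exactly as you do, by taking unconditional expectations of $\mathbb{E}[D_{ij}^{\ast}\mid X_i,X_j]=W_{ij}'\theta_0+\mathbb{E}[A_i+A_j\mid X_i,X_j]$ via the law of iterated expectations and rearranging. The only caveat, immaterial here, is that finiteness of $\mathbb{E}[W_{ij}]$ does not literally follow from finiteness of $\Gamma_0=\mathbb{E}\left[\tilde{W}_{\sigma}\tilde{W}_{\sigma}'\right]$, which constrains only pairwise differences; integrability of the levels is simply taken for granted, as it implicitly is in the paper.
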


\subsection{Second Point Identification Result}\label{S:Identification:Bounds} 

In this section, I provide a second point identification result for the vector of coefficients $\theta_0$.
This result does not require the regressor $v_{ij}$ to be conditionally independent of the unobserved terms, $A_{i}+A_{j}-U_{ij}$.
Nonetheless, it imposes a large support condition on $v_{ij}$ and bounds the contribution that the unobserved heterogeneity $A_{i}+A_{j}$ has on the formation of links.

The following notation will be used to state and prove this result.
For any fixed tetrad $\sigma(\{i,j,k,l\})\in \mathcal{N}_{m_n}$, 
denote the profile of observed attributes at tetrad level as 
$\bar{\bm{v}}_{\sigma}= (v_{ik}, v_{il}, v_{jk}, v_{jl})$
and 
$\bar{\bm{Z}}_{\sigma}= (\bar{\bm{v}}_{\sigma}, X_{\sigma})$.
Moreover, for any $\sigma(\{i,j,k,l\})\in \mathcal{N}_{m_n}$ and agent $r$ with $r \in \{i,j\}$ 
denote the within-individual $r$ variation of the observed attributes as
$\Delta_{\sigma} v_{r} = v_{rk}- v_{rl}$ and $\Delta_{\sigma} W_{r} = W_{rk}- W_{rl}$,  
and the within-individual $r$ variation of the unobserved attributes as 
$\Delta_{\sigma} A = A_{k}-A_{l}$. 

The following assumptions are sufficient to show the second point identification result.

\begin{assumption}\label{Ass:B02:distr}
For any finite $n$ and dyad $\{i,j\}$, Assumption \ref{Ass:SR01:distr} holds.
Furthermore,  the link-specific unobserved term 
$U_{ij}\mid Z_{i},Z_{j}, A_{i}, A_{j}$ 
has a positive density over the real line.  
\end{assumption}

Assumption \ref{Ass:B02:distr} ensures that the disturbance term $U_{ij}$ has a large support for any value of $(Z_i, Z_j, A_i, A_j)$.
This assumption is used for simplicity to ensure that the conditional probability of forming a link is well defined for any value of $(Z_i, Z_j, A_i, A_j)$.
Notice that any model where the disturbance term $U_{ij}$ is logistically or normally distributed will satisfy this condition. 

\begin{assumption}\label{Ass:B03:compact}
	The parameter space $\Theta$ is compact.
\end{assumption}

Assumption \ref{Ass:B03:compact} is a standard assumption in the semiparametrics literature, (see, e.g., \citealt{manski1975maximum,manski1985semiparametric,newey/mcfadden:1994}, and \citealt{powell1994estimation}). 
This assumption is used to control the contribution that the variation in $W_{ij}$ has on the formation of links.

\begin{assumption}
\label{Ass:B04:SufficientVariation}
	For any finite $n$, the following holds for any  $\sigma(\{i,j,k,l\})\in \mathcal{N}_{m_n}$.
	\begin{enumerate}
		\vspace{-1em}
		\item For all $X_{\sigma}$, $\bar{\bm{v}}_{\sigma}$ is continuously distributed with a positive density over $\mathbb{R}^4$. 
		\item For all $X_{\sigma}$ and $r \in \{i,j\}$,	
		$\Delta_{\sigma} v_{r}$ is continuously distributed with a positive density over the real line, 
		and the
		$\supp\left( - \Delta_{\sigma} W_{r}'\theta_0 - \Delta_{\sigma} A \mid X_{\sigma} \right) = [\underline{s}_{\varepsilon},\overline{s}_{\varepsilon}]$
		is known with
		$-\infty < \underline{s}_{\varepsilon} < 0 < \overline{s}_{\varepsilon} < \infty$. 
	\end{enumerate}
\end{assumption}

Assumption \ref{Ass:B04:SufficientVariation} ensures that the regressor $v_{ij}$ has a large support. 
Moreover, it requires that the variation in $v_{ij}$ dominates the contribution that the remaining factors have in creating a network link.
Notice that this condition does not impose that $v_{ij}$ is conditionally independent of $A_{i}+A_{j}$ given $X_\sigma$.
Intuitively, Assumption  \ref{Ass:B04:SufficientVariation} guarantees that the information at the tails of the distribution of $\Delta_{\sigma} v_{r}$
can disentangle the contributions of the preferences for homophily and unobserved heterogeneity on the creation of network links.

\begin{assumption}\label{Ass:B05:fullrank}
For any finite $n$ and tetrad  $\sigma(\{i,j,k,l\})\in \mathcal{N}_{m_n}$, $P\left[ \tilde{W}_{\sigma}'\gamma \neq 0 \right] >0  $ for all non-zero vectors $\gamma \in \mathbb{R}^K$.
\end{assumption}

Assumption \ref{Ass:B05:fullrank} is a rull rank condition.

For any fixed $\sigma(\{i,j,k,l\})\in \mathcal{N}_{m_n}$ and given $X_{\sigma}$, let 
$\mathcal{V}(X_\sigma)$ denote the set of values for which the variations in $\Delta_\sigma v_{i}$ and $\Delta_\sigma v_{j}$
dominates the contribution of the remaining factors. That is to say:
\begin{eqnarray}
	\mathcal{V}(X_\sigma)
	&=&
	\left\{ 
		\bar{\bm{v}}_{\sigma} : 
		\Delta_{\sigma} v_{i} \leq 	\underline{s}_{\varepsilon} 
		\; \& \;
		\Delta_{\sigma} v_{j} \geq   \overline{s}_{\varepsilon},
		\quad 
		\mbox{or} 
		\quad 
		\Delta_{\sigma} v_{i} \geq \overline{s}_{\varepsilon}
		\; \& \;
		\Delta_{\sigma} v_{j} \leq   \underline{s}_{\varepsilon}
	\right\}.
\end{eqnarray}
Notice that this set can be characterized using Assumption \ref{Ass:B04:SufficientVariation}.
Also, define $\xi({\theta})$ as 
\begin{eqnarray*}
\xi({\theta})
&=&
\left\{ 
\bar{\bm{z}}_{\sigma} 
:
\bar{\bm{v}}_{\sigma} \in \mathcal{V}(X_\sigma)
\quad 
\mbox{and}
\quad 
\begin{array}{c}
		\sign
		\left\{ 
			\mathbb{E}_{\theta_0}
			\left[ 
				\tilde{D}_{\sigma}
				\mid 
				X_{\sigma},
				\bar{\bm{v}}_{\sigma} \in \mathcal{V}(X_\sigma), 
				\tilde{D}_{\sigma} \in \left\{ -2, 2\right\}
			\right]
		\right\}
		\\	
		\neq 
		\sign
		\left\{ 
		\mathbb{E}_{\theta}
		\left[ 
			\tilde{D}_{\sigma}
			\mid 
			X_{\sigma},
			\bar{\bm{v}}_{\sigma} \in \mathcal{V}(X_\sigma), 
			\tilde{D}_{\sigma} \in \left\{ -2, 2\right\}
		\right]
		\right\}
\end{array}
\right\},
\end{eqnarray*}
which characterizes the set of states for which the sign of the conditional expectation of the pairwise variations of the links
$\tilde{D}_{\sigma}$  implied by $\theta$ differs from the sign of the conditional expectation generated under $\theta_0$.
In other words, the set $\xi({\theta})$ summarizes the values of observed attributes for which $\theta$ can be identified from $\theta_0$ 
using the information contained in the conditional expectation of $\tilde{D}_{\sigma}$.
Hence, $\theta_0$ is said to be identified relative to $\theta \neq \theta_0$ if 
\begin{eqnarray*}
	P
	\left[ 
		\bar{\bm{Z}}_{\sigma}
	\in \xi(\theta)
	\right]
	> 0.
\end{eqnarray*}

The next theorem and appended corollary formalizes the second point identification result.

\begin{theorem}\label{Theo:ID2:main}
Suppose Assumptions \ref{Ass:SR00:iidsampling}, \ref{Ass:B02:distr}, \ref{Ass:B03:compact}, and \ref{Ass:B04:SufficientVariation} hold in equation \eqref{eq:NFESpecReg}. 
Let 
\begin{eqnarray*}
	Q_\theta
	&=&
	\left\{ 
		\bar{\bm{z}}_{\sigma}
		:
		\bar{\bm{v}}_{\sigma}
		\in 
		\mathcal{V}(X_{\sigma})
		\quad 
		\mbox{and}
		\quad
		\tilde{W}_{\sigma}'\theta_0
		\leq 
		- \tilde{v}_{\sigma}
		< 
		\tilde{W}_{\sigma}'\theta
		\quad 
		\mbox{or}
		\quad 
		\tilde{W}_{\sigma}'\theta
		\leq 
		- \tilde{v}_{\sigma}
		< 
		\tilde{W}_{\sigma}'\theta_0
	\right\}.	
\end{eqnarray*} 
If	
$
P
\left[ 
	\bar{\bm{Z}}_{\sigma}
	\in 
	Q_\theta	
\right]
>0
$
, $\theta_0$ is point identified relative to $\theta$.
\end{theorem}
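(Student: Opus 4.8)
\emph{Proof sketch.} The plan is to show that, up to a null set, $Q_\theta\subseteq\xi(\theta)$; since $\theta_0$ being identified relative to $\theta$ means $P[\bar{\bm{Z}}_\sigma\in\xi(\theta)]>0$, this gives the claim via $P[\bar{\bm{Z}}_\sigma\in\xi(\theta)]\ge P[\bar{\bm{Z}}_\sigma\in Q_\theta]>0$. Fix a tetrad $\{i,j,k,l\}$ and write the deterministic parts of the indices in \eqref{eq:NFESpecReg} as $c_{rs}=v_{rs}+W_{rs}'\theta_0+A_r+A_s$ and, for a candidate $\theta$, $c_{rs}^{\theta}=v_{rs}+W_{rs}'\theta+A_r+A_s$, so that $D_{rs}=\bm{1}[c_{rs}\ge U_{rs}]$. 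The mechanism the proof exploits is a double cancellation of the fixed effects: in $\Delta_\sigma D_r=D_{rk}-D_{rl}$ the term $A_r$ leaves the index, and in $\tilde{D}_\sigma=\Delta_\sigma D_i-\Delta_\sigma D_j$ the common term $A_k-A_l$ leaves it as well, yielding the identities $\tilde{v}_\sigma+\tilde{W}_\sigma'\theta_0=(c_{ik}-c_{il})-(c_{jk}-c_{jl})$ and $\tilde{v}_\sigma+\tilde{W}_\sigma'\theta=(c_{ik}^{\theta}-c_{il}^{\theta})-(c_{jk}^{\theta}-c_{jl}^{\theta})$, which are functions of $\bar{\bm{z}}_\sigma$ alone.

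First I would sign the conditional expectation of $\tilde{D}_\sigma$ under the truth. Take $\bar{\bm{v}}_\sigma$ in the branch of $\mathcal{V}(X_\sigma)$ with $\Delta_\sigma v_i\le\underline{s}_\varepsilon$ and $\Delta_\sigma v_j\ge\overline{s}_\varepsilon$; the reverse branch is symmetric. Since $\supp(-\Delta_\sigma W_r'\theta_0-\Delta_\sigma A\mid X_\sigma)=[\underline{s}_\varepsilon,\overline{s}_\varepsilon]$ by Assumption~\ref{Ass:B04:SufficientVariation}, for every admissible realization one has $c_{ik}-c_{il}=\Delta_\sigma v_i+\Delta_\sigma W_i'\theta_0+\Delta_\sigma A\le 0$ and $c_{jk}-c_{jl}\ge 0$, and both are strict once $\Delta_\sigma v_i<\underline{s}_\varepsilon$ (resp.\ $\Delta_\sigma v_j>\overline{s}_\varepsilon$), which holds for a.e.\ $\bar{\bm{z}}_\sigma$ in the branch by the full-support part of Assumption~\ref{Ass:B04:SufficientVariation}. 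By Assumption~\ref{Ass:B02:distr} (which subsumes Assumption~\ref{Ass:SR01:distr}), conditional on $(\bm{Z}_n,\bm{A}_n)$ the disturbances $U_{ik},U_{il},U_{jk},U_{jl}$ are i.i.d.\ with a continuous, strictly increasing c.d.f.\ $F$, so
\[
P_{\theta_0}\!\left(\tilde{D}_\sigma=2\mid\bm{Z}_n,\bm{A}_n\right)=F(c_{ik})\bigl(1-F(c_{il})\bigr)\bigl(1-F(c_{jk})\bigr)F(c_{jl}),
\]
with the $\tilde{D}_\sigma=-2$ probability obtained by swapping $c_{ik}\leftrightarrow c_{il}$ and $c_{jk}\leftrightarrow c_{jl}$. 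Hence, with $\phi:=F/(1-F)$ increasing, $\mathbb{E}_{\theta_0}[\tilde{D}_\sigma\mid\bm{Z}_n,\bm{A}_n,\tilde{D}_\sigma\in\{-2,2\}]$ has the sign of $\phi(c_{ik})\phi(c_{jl})-\phi(c_{il})\phi(c_{jk})<0$ on this branch; averaging over $(\bm{Z}_n,\bm{A}_n)$ consistent with $\bar{\bm{Z}}_\sigma=\bar{\bm{z}}_\sigma$ (the denominator $P[\tilde{D}_\sigma\in\{-2,2\}\mid\bar{\bm{z}}_\sigma]$ is positive because $F$ has full support), $\mathbb{E}_{\theta_0}[\tilde{D}_\sigma\mid\bar{\bm{Z}}_\sigma=\bar{\bm{z}}_\sigma,\tilde{D}_\sigma\in\{-2,2\}]<0$, and on the reverse branch it is $>0$. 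So on $\mathcal{V}(X_\sigma)$ this conditional expectation is strictly negative on the first branch and strictly positive on the second.

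I would then run the same computation under a candidate $\theta$, using the inequalities defining $Q_\theta$. On the branch $\tilde{W}_\sigma'\theta_0\le-\tilde{v}_\sigma<\tilde{W}_\sigma'\theta$ we have $\tilde{v}_\sigma+\tilde{W}_\sigma'\theta>0$; using Assumption~\ref{Ass:B03:compact} (compact $\Theta$, which keeps the within-individual $W$-contribution bounded) together with the known support control of Assumption~\ref{Ass:B04:SufficientVariation} to pin down the ordering of the $\theta$-indices $c^{\theta}_{rs}$ on $\mathcal{V}(X_\sigma)$, the product-and-monotonicity argument applied with $c^{\theta}$ in place of $c$ gives $\sign\mathbb{E}_{\theta}[\tilde{D}_\sigma\mid\bar{\bm{Z}}_\sigma=\bar{\bm{z}}_\sigma,\tilde{D}_\sigma\in\{-2,2\}]=\sign(\tilde{v}_\sigma+\tilde{W}_\sigma'\theta)>0$. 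Since on $Q_\theta$ the scalars $\tilde{v}_\sigma+\tilde{W}_\sigma'\theta_0$ and $\tilde{v}_\sigma+\tilde{W}_\sigma'\theta$ are forced to have opposite signs by construction, so are the two conditional expectations, i.e.\ $\bar{\bm{z}}_\sigma\in\xi(\theta)$; the branch $\tilde{W}_\sigma'\theta\le-\tilde{v}_\sigma<\tilde{W}_\sigma'\theta_0$ is symmetric. Integrating over $\bar{\bm{Z}}_\sigma$ then delivers $P[\bar{\bm{Z}}_\sigma\in\xi(\theta)]\ge P[\bar{\bm{Z}}_\sigma\in Q_\theta]>0$.

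The step I expect to be the main obstacle is this passage from $\theta_0$ to a generic $\theta$: the set $\mathcal{V}(X_\sigma)$ is calibrated to $\theta_0$ through $[\underline{s}_\varepsilon,\overline{s}_\varepsilon]$, so one must show that on $\mathcal{V}(X_\sigma)$ the ordering of $c^{\theta}_{ik}$ versus $c^{\theta}_{il}$ (and of $c^{\theta}_{jk}$ versus $c^{\theta}_{jl}$) is still determined — which is where compactness of $\Theta$ and boundedness/uniformity of the support control enter — and one must track strict versus weak inequalities so that the two conditional expectations are not merely of opposite weak sign but genuinely of opposite sign on the positive-probability event $Q_\theta$.
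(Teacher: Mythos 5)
Your proposal is correct and follows essentially the same route as the paper: on $\mathcal{V}(X_\sigma)$ the magnitude of $\Delta_{\sigma} v_{i}$ and $\Delta_{\sigma} v_{j}$ dominates the bounded composite term $-\Delta_{\sigma} W_{r}'\theta_0-\Delta_{\sigma} A$, which pins down the ordering of the four dyad-level indices and hence makes $\sign\mathbb{E}_{\vartheta}[\tilde{D}_{\sigma}\mid\cdot]$ agree with $\sign(\tilde{v}_{\sigma}+\tilde{W}_{\sigma}'\vartheta)$, so that on $Q_\theta$ the two conditional expectations carry different signs and $Q_\theta\subseteq\xi(\theta)$ yields $P[\bar{\bm{Z}}_{\sigma}\in\xi(\theta)]>0$. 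Your explicit product/odds-ratio computation merely spells out the step the paper asserts as ``sufficient conditions for'' the ranking of $P[\tilde{D}_{\sigma}=2\mid\cdot]$ versus $P[\tilde{D}_{\sigma}=-2\mid\cdot]$, and your treatment of the counterfactual-$\theta$ sign (the obstacle you flag) is handled at the same level of detail as the paper's own one-line assertion of that step.
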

\begin{proof}
See Appendix \ref{Appx:ProofsIdentification}.
\end{proof}

\begin{corollary}\label{Theo:ID2:corollary}
	Suppose Assumptions
	\ref{Ass:SR00:iidsampling}, \ref{Ass:B02:distr}- \ref{Ass:B05:fullrank}
	hold in equation \eqref{eq:NFESpecReg}. Then $\theta_0$ is point identified.
\end{corollary}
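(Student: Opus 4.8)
The plan is to obtain the corollary from Theorem~\ref{Theo:ID2:main} by verifying its sufficient condition, $P[\bar{\bm{Z}}_{\sigma}\in Q_{\theta}]>0$, for \emph{every} alternative $\theta\in\Theta$ with $\theta\neq\theta_0$. Once that is done, Theorem~\ref{Theo:ID2:main} yields that $\theta_0$ is identified relative to each such $\theta$, and since the alternative is arbitrary this is exactly point identification. Assumptions~\ref{Ass:SR00:iidsampling},~\ref{Ass:B02:distr},~\ref{Ass:B03:compact} and~\ref{Ass:B04:SufficientVariation} are already carried by Theorem~\ref{Theo:ID2:main}, so the only new ingredient to exploit is the full-rank condition in Assumption~\ref{Ass:B05:fullrank}; the entire content of the proof is the measure-theoretic check that $Q_\theta$ has positive probability.

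Fix $\theta\neq\theta_0$. Applying Assumption~\ref{Ass:B05:fullrank} with $\gamma=\theta-\theta_0\neq 0$ gives $P[\tilde{W}_{\sigma}'(\theta-\theta_0)\neq 0]>0$, so at least one of $E^{+}=\{\tilde{W}_{\sigma}'\theta_0<\tilde{W}_{\sigma}'\theta\}$ and $E^{-}=\{\tilde{W}_{\sigma}'\theta<\tilde{W}_{\sigma}'\theta_0\}$ has positive probability; the argument below is written for $E^{+}$, the case $E^{-}$ being symmetric upon swapping the roles of $\theta$ and $\theta_0$ in $Q_\theta$. Conditioning on a realization of $X_{\sigma}$ lying in $E^{+}$, the quantities $\tilde{W}_{\sigma}'\theta_0$ and $\tilde{W}_{\sigma}'\theta$ are fixed constants (deterministic functions of $X_{\sigma}$ through $g_0$) with $\tilde{W}_{\sigma}'\theta_0<\tilde{W}_{\sigma}'\theta$, while by Assumption~\ref{Ass:B04:SufficientVariation}.1 the conditional density of $\bar{\bm{v}}_{\sigma}$ is strictly positive on all of $\mathbb{R}^4$. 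Writing $-\tilde{v}_{\sigma}=\Delta_{\sigma}v_{j}-\Delta_{\sigma}v_{i}$, the event defining $Q_\theta$ becomes, conditionally on such an $X_\sigma$, the intersection of the ``double orthant'' $\mathcal{V}(X_\sigma)=\{\Delta_{\sigma}v_{i}\leq\underline{s}_{\varepsilon},\ \Delta_{\sigma}v_{j}\geq\overline{s}_{\varepsilon}\}\cup\{\Delta_{\sigma}v_{i}\geq\overline{s}_{\varepsilon},\ \Delta_{\sigma}v_{j}\leq\underline{s}_{\varepsilon}\}$ with the affine slab $\{\tilde{W}_{\sigma}'\theta_0\leq\Delta_{\sigma}v_{j}-\Delta_{\sigma}v_{i}<\tilde{W}_{\sigma}'\theta\}$, whose width $\tilde{W}_{\sigma}'(\theta-\theta_0)$ is strictly positive.

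The key step is to show that this intersection has positive Lebesgue measure in the $(\Delta_{\sigma}v_{i},\Delta_{\sigma}v_{j})$-plane for a positive-probability set of realizations $X_\sigma\in E^{+}$; integrating the resulting positive conditional probability against the law of $X_\sigma$ then yields $P[\bar{\bm{Z}}_{\sigma}\in Q_\theta]>0$. On the first branch of $\mathcal{V}(X_\sigma)$ one has $\Delta_{\sigma}v_{j}-\Delta_{\sigma}v_{i}\geq\overline{s}_{\varepsilon}-\underline{s}_{\varepsilon}>0$, and as $\Delta_{\sigma}v_{i}\to-\infty$, $\Delta_{\sigma}v_{j}\to+\infty$ this quantity sweeps out all of $[\overline{s}_{\varepsilon}-\underline{s}_{\varepsilon},\infty)$; the second branch symmetrically produces $(-\infty,-(\overline{s}_{\varepsilon}-\underline{s}_{\varepsilon})]$. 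Hence the slab meets $\mathcal{V}(X_\sigma)$ in a set of positive planar measure as soon as the half-open interval $[\tilde{W}_{\sigma}'\theta_0,\tilde{W}_{\sigma}'\theta)$ is not entirely contained in the open window $(-(\overline{s}_{\varepsilon}-\underline{s}_{\varepsilon}),\,\overline{s}_{\varepsilon}-\underline{s}_{\varepsilon})$, i.e.\ as soon as $\tilde{W}_{\sigma}'\theta>\overline{s}_{\varepsilon}-\underline{s}_{\varepsilon}$ or $\tilde{W}_{\sigma}'\theta_0\leq-(\overline{s}_{\varepsilon}-\underline{s}_{\varepsilon})$. This is where the main obstacle lies: one must show that on a positive-probability subset of $E^{+}$ the pair $(\tilde{W}_{\sigma}'\theta_0,\tilde{W}_{\sigma}'\theta)$ escapes this bounded window. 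This is precisely the point at which Assumption~\ref{Ass:B05:fullrank} must be combined with the variation of $\tilde{W}_{\sigma}$ in the directions $\theta_0$ and $\theta-\theta_0$, and where the compactness of $\Theta$ (Assumption~\ref{Ass:B03:compact}) together with the fact that $[\underline{s}_{\varepsilon},\overline{s}_{\varepsilon}]$ is a \emph{fixed, known, bounded} interval (Assumption~\ref{Ass:B04:SufficientVariation}.2) is used to keep the relevant constants controlled uniformly over alternatives. Once the overlap is secured, positivity of the conditional density of $\bar{\bm{v}}_{\sigma}$ on $\mathbb{R}^4$ upgrades ``positive planar measure'' to ``positive conditional probability,'' and integrating over $X_\sigma$ verifies the hypothesis of Theorem~\ref{Theo:ID2:main}. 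Since $\theta\neq\theta_0$ was arbitrary, $\theta_0$ is identified relative to every $\theta\in\Theta\setminus\{\theta_0\}$, which is point identification, and the corollary follows.
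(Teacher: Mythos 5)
Your overall route is the same as the paper's: fix $\theta\neq\theta_0$, use Assumption \ref{Ass:B05:fullrank} to obtain a positive-probability set of $X_{\sigma}$ on which $\tilde{W}_{\sigma}'\theta$ and $\tilde{W}_{\sigma}'\theta_0$ are strictly ordered, argue that the full support of $\bar{\bm{v}}_{\sigma}$ (Assumption \ref{Ass:B04:SufficientVariation}.1) then gives $P[\bar{\bm{Z}}_{\sigma}\in Q_{\theta}]>0$, and invoke Theorem \ref{Theo:ID2:main}. The paper's own proof is exactly this reduction and is only a few lines long: it notes that for each such $X_{\sigma}$ there is an interval of values of $\tilde{v}_{\sigma}$ with $\tilde{W}_{\sigma}'\theta\leq-\tilde{v}_{\sigma}<\tilde{W}_{\sigma}'\theta_0$ and concludes positive probability of $Q_{\theta}$ directly, without separately discussing the requirement $\bar{\bm{v}}_{\sigma}\in\mathcal{V}(X_{\sigma})$.

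The problem is that your write-up is not a proof: the step you yourself label ``the main obstacle'' --- showing that on a positive-probability subset of $E^{+}$ the interval $[\tilde{W}_{\sigma}'\theta_0,\tilde{W}_{\sigma}'\theta)$ of required values of $-\tilde{v}_{\sigma}$ escapes the window $\bigl(-(\overline{s}_{\varepsilon}-\underline{s}_{\varepsilon}),\,\overline{s}_{\varepsilon}-\underline{s}_{\varepsilon}\bigr)$, so that the slab actually meets $\mathcal{V}(X_{\sigma})$ in a set of positive measure --- is never carried out. You only assert that Assumption \ref{Ass:B05:fullrank} ``must be combined with'' the variation of $\tilde{W}_{\sigma}$ and the compactness of $\Theta$ (Assumption \ref{Ass:B03:compact}), but neither ingredient can deliver this: the full-rank condition only yields $P[\tilde{W}_{\sigma}'(\theta-\theta_0)\neq 0]>0$, with no lower bound on the magnitude of $\tilde{W}_{\sigma}'\theta$, while compactness of $\Theta$ bounds $\theta$ and, if anything, works against pushing $\tilde{W}_{\sigma}'\theta$ outside a fixed window; moreover Assumption \ref{Ass:B04:SufficientVariation}.2 forces $\lvert\tilde{W}_{\sigma}'\theta_0\rvert\leq\overline{s}_{\varepsilon}-\underline{s}_{\varepsilon}$ almost surely, so the escape would have to come entirely from the alternative $\theta$. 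Hence the phrase ``once the overlap is secured'' hides precisely the assertion $P[\bar{\bm{Z}}_{\sigma}\in Q_{\theta}]>0$ that the corollary needs, and your proposal stops where the work begins. It is worth noting that the paper's own proof does not perform this check either --- it treats nonemptiness of the interval for $-\tilde{v}_{\sigma}$, together with the full support of $\bar{\bm{v}}_{\sigma}$, as sufficient and never verifies compatibility with $\mathcal{V}(X_{\sigma})$ --- so you have correctly isolated a subtlety that the published argument passes over silently; but identifying the obstacle is not the same as resolving it, and as written your argument does not establish the corollary.
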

\begin{proof}
	See Appendix \ref{Appx:ProofsIdentification}.
\end{proof}

The results in Theorem \ref{Theo:ID2:main} and Corollary \ref{Theo:ID2:corollary}
can be used to define an estimator for $\theta_0$ as the maximizer of a $U$-process of order 4 with a trimming sequence $\gamma_n$ such that $\gamma_{n} \rightarrow \infty$ as $n \rightarrow \infty$.
In particular, the estimator of  $\theta_0$ can be defined as 
\begin{eqnarray*}
	\hat{\theta}
	&=&
	\argmax_{\theta \in \Theta}
	\hat{H}_n(\theta, \gamma_n)  
\end{eqnarray*}
where
\begin{eqnarray*}
	\hat{H}_n(\theta, \gamma_n)  
	&=&
	\left[ 
	4!	
	\binom{n}{4}
	\right]^{-1}
	\sum_{i_1=1}^n
	\sum_{j_1 \neq i_1}
	\sum_{i_2 \neq i_1, j_1}
	\sum_{j_2 \neq i_1, j_1, i_2}
	H
	\left( 
		\bar{\bm{Z}}_{\sigma(\{i_1, j_1;i_2,j_2\})}, 
		\tilde{D}_{\sigma(\{i_1, j_1;i_2,j_2\})}; 
		\theta, \gamma_n 
	\right)
	\\
	H
	\left( \bar{\bm{Z}}_{\sigma}, \tilde{D}_{\sigma}; \theta, \gamma_n \right)
	&=&
	\left[ 
	\sign
	\left\{ 
		\tilde{v}_{\sigma}
		+
		\tilde{W}_{\sigma}'\theta 
	\right\}
	\times 
	\tilde{D}_{\sigma}
	\right]
	\times 
	\bm{1}
	\left[ 
		\mid \tilde{D}_{\sigma}\mid =2
	\right]
	\times 
	\bm{1}
	\left[ 
		\mid \Delta_{\sigma} v_{i} \mid, \mid \Delta_{\sigma} v_{j} \mid \geq \gamma_{n}
	\right].
\end{eqnarray*}

Although point identification of $\theta_0$ is achieved assuming that the bounds $[\underline{s}_{\varepsilon},\overline{s}_{\varepsilon}]$ are known, 
notice that they are not needed to define the estimator $\hat{\theta}$.
In other words, it is sufficient to assume that $\Delta_\sigma v_i$ has a large support which contains $\supp \left( -\Delta_\sigma W_i'\theta_0 - \Delta_\sigma A \mid X_\sigma \right) $ to characterize the estimator for $\theta_0$. 

Naturally, the asymptotic properties of $\hat{\theta}$ will depend on the frequency of subgraph configurations that satisfy the restriction
$\bm{1}
\left[ 
	\mid \tilde{D}_{\sigma}\mid =2
\right]$ 
in the sample, and the rate at which $\gamma_n \rightarrow \infty$ as $n\rightarrow \infty$.
The rest of this paper prioritizes the study of the semiparametric estimator introduced in section \ref{S:Inference: Special Regressor} since it is computationally more tractable than $\hat{\theta}$.

\section{Inference}\label{S:Inference: Special Regressor}
In this section, I introduce a semiparametric estimator for $\theta_0$ based on the point identification result derived in section \ref{S:SpecialRegressor}.
The estimator for $\theta_0$ denoted by $\widehat{\theta}_n$ is a two-step estimator with a nonparametric estimate of the conditional distribution of $v_{ij}$ given $\{X_i, X_j\}$, i.e., $f_{v\mid x}(v_{ij} \mid X_i, X_j)$.
Section \ref{S:Consistency02} provides sufficient conditions to study the large sample properties of $\widehat{\theta}_n$.  
Theorem \ref{Theo:Inf:Consistency} proves that $\widehat{\theta}_n$ is a consistent estimator of $\theta_0$.
Theorem \ref{Theo:Inf:AN} shows that the limiting distribution of $\widehat{\theta}_n$ is normal. 

\subsection{Consistency}\label{S:Consistency02}
The estimator for $\theta_0$ is defined as the sample analog of equation \eqref{eq:beta} and is 
obtained by averaging over the linking decisions $\tilde{D}_\sigma$ for all distinct tetrads $\sigma \in  \mathcal{N}_{m_{n}}$.
Given that the inverse of $f_{v\mid x}(v_{ij}\mid X_{i}, X_{j} )$ is used as a weight in the definition of $\Psi_0$, and hence $\theta_0$, 
I introduce a trimming sequence intended to avoid boundary effects arising from the first-step estimation of $f_{v\mid x}(v_{ij}\mid X_{i}, X_{j} )$.

Recall that $\tilde{D}_{\sigma}$ is defined as the pairwise variation across the linking decisions for a given tetrad
$\sigma\left( \left\{ i_1,i_2, j_1,j_2\right\}\right) \in \mathcal{N}_{m_n}$.
I extend that notation to define as follows the pairwise variation of the trimmed network links given a trimming parameter $\tau$ 
\begin{eqnarray*}
	\widetilde{D}_{\sigma, \tau}^\ast
	&=&
	\left(  D_{i_1 j_1, \tau}^\ast - D_{i_1 j_2, \tau}^\ast \right) 
	- 
	\left(  D_{i_2 j_1, \tau}^\ast  - D_{i_2 j_2, \tau}^\ast \right)
	\\
	\widehat{D}_{\sigma, \tau}^\ast
	&=&
	\left(  \widehat{D}_{i_1 j_1,\tau}^\ast  - \widehat{D}_{i_1 j_2,\tau}^\ast \right) 
	- 
	\left( \widehat{D}_{i_2 j_1,\tau}^\ast  - \widehat{D}_{i_2 j_2,\tau}^\ast \right),
\end{eqnarray*}
where for any distinct $i_1$ and $j_1$ in $\mathcal{N}_{n}$
\begin{eqnarray*}
	D_{i_1 j_1, \tau}^\ast
	&=& 
	\left(  
	\frac{
		D_{i_1 j_1}- \bm{1}\left[ v_{i_1 j_1} > 0 \right]
	}{
		f_{v\mid x}(v_{i_1 j_1}\mid X_{i_1}, X_{j_1})
	}	
	\right)
	I_\tau(v_{i_1j_1}, X_{i_1}, X_{j_1})
	\\
	\widehat{D}_{i_1 j_1, \tau}^\ast 
	&=&
	\left(  
	\frac{
		D_{i_1j_1}- \bm{1}\left[ v_{i_1j_1} > 0 \right]
	}{
		\widehat{f}_{v\mid x}(v_{i_1 j_1}\mid X_{i_1}, X_{j_1})
	}	
	\right)
	I_\tau(v_{i_1j_1}, X_{i_1}, X_{j_1}).
	\\	
\end{eqnarray*}

In the equations above, 
$f_{v\mid x}(v_{i_1 j_1}\mid X_{i_1}, X_{j_1})$ 
denotes the true conditional density function of $v_{i_1 j_1}$ given $(X_{i_1}, X_{j_1})$, 
and 
$\widehat{f}_{v\mid x}(v_{i_1 j_1}\mid X_{i_1}, X_{j_1})$ 
denotes a kernel estimator of the conditional density of $v_{i_1 j_1}$ given $(X_{i_1}, X_{j_1})$. 
Thus,  
$\widetilde{D}_{\sigma, \tau}^\ast$ denotes the pairwise variation of the trimmed network links assuming that 
the conditional distribution of the special regressor given the observed attributes is known. 
Conversely, $\widehat{D}_{\sigma, \tau}^\ast$ denotes the pairwise variation of the trimmed network links 
when $f_{v\mid x}(v_{i_1 j_1}\mid X_{i_1}, X_{j_1})$ is replaced by a first-stage kernel estimator $\widehat{f}_{v\mid x}(v_{i_1 j_1}\mid X_{i_1}, X_{j_1})$

The trimming sequence 
$I_\tau(v_{i_{1j_1}}, X_{i_1}, X_{j_1})$ 
is a function of the observed attributes at a dyad level, and it converges to 1 as the trimming parameter $\tau\rightarrow 0$ when $n\rightarrow \infty$.
Assumptions \ref{Ass:Inf02:Trimming} and \ref{Ass:Inf04:Kernel} below describe the conditions imposed on the trimming parameter $\tau$, 
(see \citealt{honore/lewbel:2002} and \citealt{khan/tamer:2010}).

To ease the exposition, I introduce the following notation for any distinct $i_1, j_1 \in \mathcal{N}_n$
\begin{eqnarray*}
I_{\tau, i_1j_1} 		&=& I_\tau(v_{i_1j_1}, X_{i_1},X_{j_1})	\\
f_{vx,i_1 j_1} 			&=& f_{v,x}(v_{i_1 j_1}, X_{i_1}, X_{j_1}) \\
f_{x,i_1 j_1} 			&=& f_{x}(X_{i_1}, X_{j_1})  \\
\varphi_{i_1 j_1} 		&=& D_{i_1 j_1} -  \mathbf{1}\left[ v_{i_1 j_1}>0 \right]   \\
\varphi_{i_1 j_1, \tau} &=& \varphi_{i_1 j_1} I_{\tau, i_1j_1}.
\end{eqnarray*}

With this notation at hand, the semiparametric estimator for $\theta_0$ is defined as 
\begin{eqnarray}\label{eq:thetaestimator}
	\widehat{\theta}_n
	&=&
	\widehat{\Gamma}_n^{-1} \times \widehat{\Psi}_{n, \tau} 
\end{eqnarray}
where 
\begin{eqnarray*}
\widehat{\Gamma}_n
&=& 
\frac{1}{m_{n}} 
	\sum_{ \sigma \in \mathcal{N}_{m_{n}}}
	\left[
		\tilde{W}_{\sigma} \tilde{W}_{\sigma}'
	\right] 
\\
\widehat{\Psi}_{n, \tau} 
&=& 
	\frac{1}{m_{n}} 
	\sum_{ \sigma \in \mathcal{N}_{m_{n}}}
	\left[  
		\tilde{W}_{\sigma} \widehat{D}_{\sigma, \tau}^\ast
	\right]
\end{eqnarray*}
and $m_{n} = 4! \binom{n}{4}$.

The first-stage kernel estimator $\widehat{f}_{v\mid x}(v_{i_1j_1} \mid X_{i_1}, X_{j_1})$ is defined as the 
ratio of the kernel estimators $\widehat{f}_{vx,i_1 j_1}$ and $\widehat{f}_{x,i_1 j_1}$ with
\begin{eqnarray*}
\widehat{f}_{vx,i_1 j_1}
&=&
\frac{1}{(n-2)(n-3) h^{L+1}} 
\sum_{k_1\neq i_1,j_1}
\sum_{k_2\neq i_1,j_1, k_1} 
K_{vx,h} 
\left[   
	v_{k_1k_2}-v_{i_1j_1} , 
	X_{k_1}-X_{i_1} , 
	X_{k_2}-X_{j_1}
\right]
\\
\widehat{f}_{x,i_1 j_1}
&=&
\frac{1}{(n-2)(n-3)h^{L}} 
\sum_{k_1\neq i_1,j_1}
\sum_{k_2\neq i_1,j_1, k_1}
K_{x,h} 
\left[ 
	X_{k_1}-X_{i_1} , 
	X_{k_2}-X_{j_1}  
\right], 
\end{eqnarray*}
where  $h$ denotes a bandwith parameter and $L = 2K$. 
The kernels $K_{vx,h}$ and $K_{x,h}$ are defined as 
\begin{eqnarray*}
K_{vx,h} 
\left[  
v_{k_1k_2}-v_{i_1j_1}  , X_{k_1}-X_{i_1} , X_{k_2}-X_{j_1}
\right]
& = &
K_{vx} 
\left[  
	\frac{v_{k_1k_2}-v_{i_1j_1}}{h}  , \frac{X_{k_1}-X_{i_1}}{h} , \frac{X_{k_2}-X_{j_1}}{h}
\right]
\\
K_{x,h} 
\left[ 
X_{k_1}-X_{i_1} , X_{k_2}-X_{j_1} 
\right]
& = &
K_{x} 
\left[  
	\frac{X_{k_1}-X_{i_1}}{h} , \frac{X_{k_2}-X_{j_1}}{h}  
\right]. 
\end{eqnarray*}
Assumption \ref{Ass:Inf04:Kernel} below describes the conditions imposed on the kernel functions $K_{vx,h}$ and $K_{x,h}$, and bandwith parameter $h$.

The estimator defined in equation \eqref{eq:thetaestimator} represents, 
to the best of my knowledge, 
the first effort to estimate the vector of parameters $\theta_0$ 
defined in the network formation model given by equation \eqref{eq:NFESpecReg} using a two-step semiparametric estimator that utilizes the existence of a special regressor. 

A semiparametric approach is attractive because it does not restrict the distribution of the disturbance term to any specific parametric family. 
Furthermore, it allows for a flexible statistical dependence between the agent-specific unobserved factors and the observed attributes, i.e., $\{\mathbf{X}_n, \mathbf{A}_n\}$. 
As an additional appealing property, the estimator defined in equation \eqref{eq:thetaestimator} has an analytical form. 
This characteristic increases its computational tractability compared with the estimator defined as the maximizer of a U-process and introduced in section \ref{S:Identification:Bounds}.
Regarding the non-parametric first-stage estimator,
\citet[Supp. Appendix]{leung:2015} and \citet{graham/niu/powell:2019} 
have studied the properties of kernel estimators for network data. 
I use their findings to analyze the asymptotic properties of $\widehat{\theta}_n$.

The following technical conditions are needed to prove Theorems \ref{Theo:Inf:Consistency} and \ref{Theo:Inf:AN}.
For simplicity, the theorems are stated and proved assuming that all of the elements of $X_i$ are continuously distributed. 
However, the results can be readily extended to include discretely distributed variables by applying the density estimator separately to each discrete cell of data.

\begin{assumption}\label{Ass:Inf01:SamplingMoments}
	For any distinct indices $i$ and $j$ in $\mathcal{N}_n$, the dyad-level covariates $(X_i, X_j)$ and $(v_{ij}, X_i, X_j)$ are absolutely continuous with respect to some Lebesgue measures
	with Radon-Nikodym densities $f_{x,ij}$ and $f_{vx,ij}$, and supports denoted by $\mathbb{S}_{x}$ and $\mathbb{S}_{vx}$.
	Assume that $f_{x,ij}$ and $f_{vx,ij}$ are bounded, $f_{vx,ij}$ is bounded away from zero, and there 
	exists a constant $\overline{M} > L+1$ (recall that $L=2^K$, with $dim(X_i)=K$) such that $f_{x,ij}$ and $f_{vx,ij}$
	are $\overline{M}$-times differentiable with respect to all of its arguments with bounded derivatives. 
	There exist finite constants $C_{w,1}$ and $C_{w,2}$ such that $\sup_{\sigma \in \mathcal{N}_{m_n}}\mid\mid \tilde{W}_{\sigma} \mid\mid \leq C_{w,1}$ w.p.1 
	and $\mathbb{E}\left[ \mid\mid \tilde{W}_{\sigma} \mid\mid^4 \right]<C_{w,2}$.
\end{assumption}

Assumption \ref{Ass:Inf01:SamplingMoments} ensures that the densities $f_{x,ij}$ and $f_{vx,ij}$ are continuous and $M$-times differentiable. 
Also, it requires the existence of fourth-order moments for $\tilde{W}_{\sigma}$, for any $\sigma \in \mathcal{N}_{m_n}$.
This assumption has been used in the literature of semiparametric methods, 
for example in \citet{ahn/powell:1993, aradillas:2012}, and \citet{honore/lewbel:2002}.   

\begin{assumption}\label{Ass:Inf02:Trimming}
	Let $\tau$ be a density trimming parameter defined above.
	Assume that the support $\mathbb{S}_{vx}$ is known, and 
	the trimming function $I_{\tau,ij}$ is equal to zero if $(v_{ij},X_i, X_{j})$ is within a distance $\tau$ of the 
	boundary of $\mathbb{S}_{vx}$, and otherwise, $I_{\tau,ij}$ equals one. 
	Also, assume that $\tau \rightarrow 0$  and $\tau n^2 \rightarrow 0$ as $n\rightarrow \infty$.
\end{assumption}

Due to the weighting scheme used in the definition of $\widehat{D}_{i_1 j_1}^\ast $,
boundary effects could arise from the density estimation step 
when computing 
$\widehat{\Psi}_{n,\tau}$.
Assumptions \ref{Ass:Inf01:SamplingMoments} and \ref{Ass:Inf02:Trimming} deal with this technicality
by assuming that $f_{vx,i_1j_1}$ is bounded away from zero and by introducing a trimming sequence 
$I_\tau(v_{i_1j_1}, X_{i_1},X_{j_1} )$ 
that sets to zero the terms in $\widehat{\Psi}_{n,\tau}$ with data within a $\tau$ distance of the boundary of $\mathbb{S}_{vx}$, (see, e.g., \citealt{lewbel:1997,lewbel:2000,honore/lewbel:2002}, and \citealt{khan/tamer:2010})

Assumptions \ref{Ass:Inf01:SamplingMoments} and \ref{Ass:Inf02:Trimming} require that the support $\mathbb{S}_{vx}$ is known. 
The support $\mathbb{S}_{vx}$ is identified from the distribution of observables, and hence, it can be estimated in an empirical application.
As an alternative approach to Assumption \ref{Ass:Inf02:Trimming}, a fixed trimming function that is not $n$-dependent could be used instead,
(see, e.g., \citealt{aradillas/honore/powell:2007} and \citealt{aradillas:2012}).

\begin{assumption}\label{Ass:Inf03:SmoothnessDensity}
Let $\overline{M}$ be as defined above. 
Given any tetrad $\sigma(\{i_1,i_2, j_1,j_2\})\in \mathcal{N}_{m_n}$, let
\begin{eqnarray*}
	\Xi
	\left( X_{l_1}, X_{l_2} \right) 
	&=&
	E
	\left[  
	\tilde{W}_{\sigma}	
	D_{ l_1l_2, \tau}^\ast
	\mid 
	X_{l_1}, X_{l_2}
	\right]
	\\
	\Xi
	\left( v_{l_1l_2}, X_{l_1}, X_{l_2} \right) 
	&=&
	E
	\left[  
	\tilde{W}_{\sigma}	
	D_{ l_1l_2, \tau}^\ast 
	\mid 
	v_{l_1l_2}, X_{l_1}, X_{l_2}
	\right]
\end{eqnarray*}
for any dyad $(l_1, l_2) \in \{ (i_1,j_1), (i_1,j_2), (i_2,j_1), (i_2,j_2) \}$.
The expectations
$	\Xi\left( x,x \right) $	
and 
$	\Xi\left( v, x, x \right) $	
exist and are continuous in the components of $(v, x, x')$ for all 
$(v,x,x') \in \mathbb{S}_{vx}$. 
Also,
$	\Xi\left( x,x \right) $	
and 
$	\Xi\left( v, x, x \right) $	
are $\overline{M}$-times differentiable in the components of $(v, x, x')$ for all 
$(v,x,x') \in \overline{\mathbb{S}}_{vx}$, where  
$\overline{\mathbb{S}}_{vx}$ differs from $\mathbb{S}_{vx}$ by a set of measure zero.

There exist some functions $m_x(x, x)$ and $m_{vx}(v,x,x')$ such that the following 
local Lipschitz conditions hold for some $(x_0, x_0')$ and  $(v_0, x_0, x_0')$ in an open neighborhood of zero and for all $\tau>0$: 
\begin{eqnarray*}
	\mid\mid  
	f_{vx}(v +v_0, x + x_0, x' + x_0')
	-
	f_{vx}(v, x, x')
	\mid\mid 
	&\leq& 
	m_{vx}(v, x, x') 
	\mid\mid 
	(v_0, x_0, x_0')
	\mid\mid  
	\\
	\mid\mid 
	f_{x}(x + x_0, x' + x_0')
	-
	f_{x}(x, x')
	\mid\mid 
	&\leq& 
	m_x(x, x') 
	\mid\mid  
	(x_0, x_0')
	\mid\mid
	\\
	\mid\mid  
	\Xi(v +v_0, x + x_0, x' + x_0')
	-
	\Xi(v, x, x')
	\mid\mid 
	&\leq& 
	m_{vx}(v, x, x') 
	\mid\mid 
	(v_0, x_0, x_0')
	\mid\mid  
	\\
	\mid\mid 
	\Xi(x + x_0, x' + x_0')
	-
	\Xi(x, x')
	\mid\mid 
	&\leq& 
	m_x(x, x') 
	\mid\mid  
	(x_0, x_0')
	\mid\mid.
\end{eqnarray*}
\end{assumption}

Assumption \ref{Ass:Inf03:SmoothnessDensity} imposes local smoothness conditions that are needed to derive the H\'{a}jek projection of a $V$-statistic. 
Similar conditions have been used in \citet{ahn/powell:1993, aradillas:2012}, and \citet{honore/lewbel:2002}.

\begin{assumption}\label{Ass:Inf0302:Boundedness}
Given any $\sigma(\{i_1,i_2, j_1,j_2\})\in \mathcal{N}_{m_n}$ and $(l_1, l_2) \in \{ (i_1,j_1), (i_1,j_2), (i_2,j_1), (i_2,j_2) \}$,
let
$
	\chi_{l_1l_2}
	=
	\chi(X_{l_1}, X_{l_2}) 
	=
	\mathbb{E}
	\left[ 
		\tilde{W}_{\sigma}	
		\mid 
		X_{l_1}, X_{l_2}
	\right].
$

The following moments exist
	\begin{eqnarray*}
		&&
		\sup_{(x,x')\in \mathbb{S}_{x}}
		\chi(x, x') 
		\\
		&&
		\sup_{ (v,x,x')\in \mathbb{S}_{v, x}, \tau \geq 0}  
		\mathbb{E} \left[ 
			\left(
		   \frac{
			   \varphi_{l_1l_2,\tau}
			   }{
			   f_{vx}(v, x, x')
			   }	 
		   \right)^2 	
		 \mid 
		 v, x, x' \right] 
		 \\
		 &&
		 \sup_{ (v, x,x')\in \mathbb{S}_{v, x}, \tau \geq 0}  
		 \mathbb{E} \left[ 
			 \left(  
			\frac{
				D_{l_1l_2,\tau}^\ast
				}{
				f_{vx}(v, x, x')
				}	 
			\right)^2 	
		  \mid 
		  v, x, x' \right], 
	\end{eqnarray*}
	and the objects 
	\begin{eqnarray*}
		&&
		\chi(x, x') 
		\\
		& &
		\mathbb{E} \left[ 
			 \left(
			\frac{
				\varphi_{l_1l_2,\tau}
				}{
				f_{vx}(v, x, x')
				}	 
			\right)^2 	
		  \mid 
		  v, x, x' \right] 
		  \\
		  &&
		\mathbb{E} \left[ 
			\left(
		   \frac{
			   D_{l_1l_2,\tau}^\ast
			   }{
			   f_{vx}(v, x, x')
			   }	 
		   \right)^2 	
		 \mid 
		 v, x, x' \right] 
	\end{eqnarray*}	
	are continuous in the components of $(v, x, x') \in \mathbb{S}_{vx}$.
	Moreover, there exists a finite constant $C_{\chi}$, such that 
	\begin{eqnarray*}
		E
		\left[ 
			\mid \mid 
			\chi(x, x')^6 
			\mid \mid 
		\right]
		\leq C_{\chi}
	\end{eqnarray*}	
	for any $(x, x') \in \mathbb{S}_{x}$.
\end{assumption}

Assumption \ref{Ass:Inf0302:Boundedness} ensures the existence and boundedness of the conditional expectations defined above. 
These conditions are needed to invoke a uniform law of large numbers for $V$-statistics. 
The last part of Assumption \ref{Ass:Inf0302:Boundedness} guarantees the existence of sixth-order moments, 
and it will be used to invoke a conditional central limit theorem. 

\begin{assumption}\label{Ass:Inf04:Kernel}
	Let $\overline{M}$ and $\tau$ be as defined above. 
	The kernel $K_{x}(x,x'):\mathbb{R}^{L} \mapsto \mathbb{R}$ and bandwith $h$ used to define the kernel estimator $\hat{f}_{x}$ satisfy:
	\begin{enumerate}
		\item $K_{x}(x,x')=0$ for all $(x,x')$ on the boundary of, and outside of, a convex bounded subset of $\mathbb{R}^{L}$. This subset has an nonempty interior and has the origin as an interior point.
		\item $K_{x}(\cdot, \cdot)$ is symmetric around zero, bounded, differentiable, and bias-reducing of order $2\overline{M}$.
		\item  There exists $\overline{\delta}>0$ such that  $n^{1-\overline{\delta}} h^{L+1} \rightarrow \infty$, 
		$n h^{\overline{M}} \rightarrow 0$, and $h/ \tau \rightarrow 0$. 
		\end{enumerate}
	The kernel function $K_{v,x}(v,x,x')$  has all the same properties, replacing $(x,x')$ with $(v,x, x')$.
\end{assumption}
 
Assumption \ref{Ass:Inf04:Kernel} requires the use of a higher-order kernel. 
This selection is motivated to control the bias induced by using the inverse of $f_{v\mid x}(v_{i_1 j_1}\mid X_{i_1}, X_{j_1})$ as a weighting function.
This assumption has been used by \citet{honore/lewbel:2002} and \citet{leung:2015}. 
\citet{graham/niu/powell:2019} provide a comprehensive treatment of kernel estimation for undirected network data.

Using the assumptions above, it follows that $\widehat{\theta}_n$ defined in equation \eqref{eq:beta} is a consistent estimator of $\theta_0$.
Theorem \ref{Theo:Inf:Consistency} formally states this result.
\begin{theorem}\label{Theo:Inf:Consistency}
	Let Assumptions \ref{Ass:SR00:iidsampling}-\ref{Ass:SR04:identif} and \ref{Ass:Inf01:SamplingMoments}-\ref{Ass:Inf04:Kernel}  hold. 
	Then $(\widehat{\theta}_n -\theta_0) \overset{p}{\rightarrow} \mathbf{0}$ as $n\rightarrow \infty$.
\end{theorem}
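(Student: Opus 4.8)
The plan is to exploit the closed form $\widehat{\theta}_n=\widehat{\Gamma}_n^{-1}\widehat{\Psi}_{n,\tau}$ together with the population identity $\theta_0=\Gamma_0^{-1}\Psi_0$ from Corollary \ref{cor:id1}, so that it suffices to show $\widehat{\Gamma}_n\overset{p}{\to}\Gamma_0$ and $\widehat{\Psi}_{n,\tau}\overset{p}{\to}\Psi_0$; consistency then follows by Slutsky's theorem and the continuity of matrix inversion at the nonsingular matrix $\Gamma_0$ (Assumption \ref{Ass:SR04:identif}). The convergence of $\widehat{\Gamma}_n$ is the easy half: it is a $V$-statistic in the i.i.d.\ sequence $\{Z_i,A_i\}$ (Assumption \ref{Ass:SR00:iidsampling}) whose kernel $\tilde{W}_\sigma\tilde{W}_\sigma'$ is a.s.\ bounded and has a finite fourth moment by Assumption \ref{Ass:Inf01:SamplingMoments}, so the law of large numbers for $U$-statistics gives $\widehat{\Gamma}_n\overset{p}{\to}\mathbb{E}[\tilde{W}_\sigma\tilde{W}_\sigma']=\Gamma_0$ and hence $\widehat{\Gamma}_n^{-1}\overset{p}{\to}\Gamma_0^{-1}$.

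For $\widehat{\Psi}_{n,\tau}$ I would work with the decomposition
\begin{equation*}
\widehat{\Psi}_{n,\tau}-\Psi_0
=\bigl(\widehat{\Psi}_{n,\tau}-\widetilde{\Psi}_{n,\tau}\bigr)
+\bigl(\widetilde{\Psi}_{n,\tau}-\Psi_{0,\tau}\bigr)
+\bigl(\Psi_{0,\tau}-\Psi_0\bigr),
\end{equation*}
where $\widetilde{\Psi}_{n,\tau}=m_n^{-1}\sum_\sigma\tilde{W}_\sigma\widetilde{D}^{\ast}_{\sigma,\tau}$ uses the true conditional density and $\Psi_{0,\tau}=\mathbb{E}[\tilde{W}_\sigma\widetilde{D}^{\ast}_{\sigma,\tau}]$. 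The third term is a trimming bias: since $f_{v\mid x}$ is bounded away from zero on $\mathbb{S}_{vx}$ (Assumption \ref{Ass:Inf01:SamplingMoments}), $D^{\ast}_{ij}$ is bounded, the indicator $I_{\tau,ij}$ increases to $1$ as $\tau\to 0$, and Assumption \ref{Ass:Inf0302:Boundedness} supplies an integrable envelope, so dominated convergence yields $\Psi_{0,\tau}\to\Psi_0$ as $\tau\to 0$, hence along $\tau=\tau_n\to 0$ (Assumption \ref{Ass:Inf02:Trimming}). The second term is handled by a $U$-statistic law of large numbers: the kernel $\tilde{W}_\sigma\widetilde{D}^{\ast}_{\sigma,\tau}$ is a.s.\ bounded uniformly in $\tau$, so a Hoeffding/variance bound gives $\mathbb{E}\Vert\widetilde{\Psi}_{n,\tau}-\Psi_{0,\tau}\Vert^2=O(n^{-1})$ uniformly over $\tau$, and therefore $\widetilde{\Psi}_{n,\tau_n}-\Psi_{0,\tau_n}\overset{p}{\to}0$.

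The first term is the main obstacle, since it is where the first-stage kernel estimator enters. Writing
\begin{equation*}
\widehat{D}^{\ast}_{ij,\tau}-\widetilde{D}^{\ast}_{ij,\tau}
=\varphi_{ij}\,I_{\tau,ij}\,
\frac{f_{v\mid x}(v_{ij}\mid X_i,X_j)-\widehat{f}_{v\mid x}(v_{ij}\mid X_i,X_j)}
{\widehat{f}_{v\mid x}(v_{ij}\mid X_i,X_j)\,f_{v\mid x}(v_{ij}\mid X_i,X_j)}
\end{equation*}
and noting $|\varphi_{ij}|\le 1$, it suffices to control $\widehat{f}_{v\mid x}-f_{v\mid x}=\widehat{f}_{vx}/\widehat{f}_x-f_{vx}/f_x$ uniformly over the trimmed region $\{I_{\tau,ij}=1\}$, on which $f_{v\mid x}$ stays bounded away from zero. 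I would invoke the uniform convergence results for dyadic kernel density estimators of \citet{graham/niu/powell:2019}: under the higher-order kernel and smoothness conditions (Assumptions \ref{Ass:Inf01:SamplingMoments}, \ref{Ass:Inf03:SmoothnessDensity}) and the bandwidth conditions $n^{1-\overline{\delta}}h^{L+1}\to\infty$, $nh^{\overline{M}}\to 0$, $h/\tau\to 0$ of Assumption \ref{Ass:Inf04:Kernel}, both $\widehat{f}_{vx}$ and $\widehat{f}_x$ converge uniformly on the (shrinking) trimmed support, so that $\sup_{\{I_{\tau,ij}=1\}}|\widehat{f}_{v\mid x}-f_{v\mid x}|=o_p(1)$; in particular $\widehat{f}_{v\mid x}$ is bounded away from zero on that set with probability approaching one, which also makes $\widehat{D}^{\ast}_{ij,\tau}$ well defined there. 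Consequently the displayed ratio is $o_p(1)$ uniformly over dyads, and since $\Vert\tilde{W}_\sigma\Vert\le C_{w,1}$ a.s.\ (Assumption \ref{Ass:Inf01:SamplingMoments}) we get $\Vert\widehat{\Psi}_{n,\tau}-\widetilde{\Psi}_{n,\tau}\Vert\le C_{w,1}\cdot o_p(1)=o_p(1)$. Combining the three terms gives $\widehat{\Psi}_{n,\tau}\overset{p}{\to}\Psi_0$, and with $\widehat{\Gamma}_n^{-1}\overset{p}{\to}\Gamma_0^{-1}$ and Corollary \ref{cor:id1} we conclude $\widehat{\theta}_n\overset{p}{\to}\Gamma_0^{-1}\Psi_0=\theta_0$. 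The delicate point throughout is the interplay between the trimming rate $\tau_n$ and the bandwidth $h$: the requirement $h/\tau\to 0$ is what guarantees that the uniform bias and sampling error of $\widehat{f}_{v\mid x}$ vanish faster than the rate at which $f_{v\mid x}$ is allowed to approach zero near the boundary, so that the inverse-density weighting in $\widehat{D}^{\ast}_{ij,\tau}$ does not destabilize the average.
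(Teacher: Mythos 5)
Your proposal is correct, and its skeleton coincides with the paper's: exploit the closed form $\widehat{\theta}_n=\widehat{\Gamma}_n^{-1}\widehat{\Psi}_{n,\tau}$, prove $\widehat{\Gamma}_n\overset{p}{\rightarrow}\Gamma_0$ via a $V$-to-$U$-statistic approximation and a law of large numbers, prove $\widehat{\Psi}_{n,\tau}\overset{p}{\rightarrow}\Psi_0$, and finish with the continuous mapping theorem, Slutsky, and Corollary \ref{cor:id1}; the treatment of the trimming bias (boundedness of $D^{\ast}_{ij}$ plus $P[I_{\tau,ij}=0]\rightarrow 0$ under Assumption \ref{Ass:Inf02:Trimming}) is also essentially the paper's. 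Where you genuinely diverge is in the key step, the first-stage plug-in error $\widehat{\Psi}_{n,\tau}-\widetilde{\Psi}_{n,\tau}$. The paper does not argue via a uniform bound on $\widehat{f}_{v\mid x}-f_{v\mid x}$ directly: it performs a second-order Taylor linearization of $\widehat{f}_{x}/\widehat{f}_{vx}$ around $f_{x}/f_{vx}$ (Lemma \ref{Lemma:TechAppx:Expansion}), shows the remainder is negligible using sup-norm consistency of the two kernel estimators, and then proves that each linear correction term — a kernel-weighted sixth-order $V$-statistic — converges to the same limit $\mathbb{E}[\tilde{W}_\sigma\varphi_{l_1l_2,\tau}f_{x,l_1l_2}/f_{vx,l_1l_2}]$ via Lemma \ref{Lemma:TechAppx:UEquiv} and an Ahn--Powell-type moment bound (Lemma \ref{Lemma:TechAppx:Consistency}). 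Your route instead bounds $|\widehat{D}^{\ast}_{ij,\tau}-D^{\ast}_{ij,\tau}|$ uniformly over trimmed dyads using $|\varphi_{ij}|\le 1$, the fact that $f_{vx}$ is bounded away from zero (Assumption \ref{Ass:Inf01:SamplingMoments}) so $\widehat{f}_{v\mid x}$ is bounded away from zero with probability approaching one on $\{I_{\tau,ij}=1\}$, and the uniform consistency of the dyadic kernel estimators under Assumption \ref{Ass:Inf04:Kernel} (with $h/\tau\rightarrow 0$ keeping kernel windows away from the boundary); since $\|\tilde{W}_\sigma\|\le C_{w,1}$ a.s., the average is $o_p(1)$. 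This is more elementary and entirely adequate for consistency — indeed the same sup-norm results are what the paper invokes to kill its Taylor remainder — so nothing is lost here. What the paper's heavier linearization buys is reusability: the decomposition into $S_{1,n\tau}$, $S_{2,n\tau}$, $S_{3,n\tau}$ is exactly what feeds the H\'{a}jek-projection argument for Theorem \ref{Theo:Inf:AN}, where a crude uniform plug-in bound would not deliver the $\sqrt{n(n-1)}$-rate expansion or the influence function $\xi_{i_1j_1,\tau}$. One small remark: your appeal to Assumption \ref{Ass:Inf0302:Boundedness} for an integrable envelope in the trimming-bias step is unnecessary (boundedness of $D^{\ast}_{ij}$ and $\tilde{W}_\sigma$ already suffices), but it is harmless.
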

\begin{proof}
See Appendix \ref{Appx:ProofsIdentification}.
\end{proof}

\subsection{Asymptotic Distribution}\label{S:asydist}
The following theorem derives the asymptotic distribution of $\hat{\theta}_n$. 
A key step in proving this result is to show that 
\begin{eqnarray*}
	\sqrt{n(n-1)}
	\Upsilon_{n}^{-1/2}
	\left\{ 
		\widehat{\Psi}_{n, \tau} 
		-
		E \left[ 
		\tilde{W}_{\sigma}
		\widetilde{D}_{\sigma, \tau}^\ast
		\mid 
		v_{\sigma}, X_{\sigma}, A_{\sigma}
		\right]
	\right\} 
	\Rightarrow
	\mathcal{N}\left( 0,I \right),
\end{eqnarray*}
where $I$ denotes the $K$-dimensional identity matrix, and $\Upsilon_{n}=n(n-1)Var\left(\widehat{\Psi}_{n, \tau}\right)$, which is defined as 
\begin{eqnarray*}
	\Upsilon_{n} 
	&=& 
	\frac{1}{n(n-1)}
	\sum_{i_1 =1}^n  \sum_{j_1 \neq i_1} 
	\mathbb{E}
	\left[ 
		\left\{ 
			\frac{
				p_n(\omega_{i_1j_1})
				\left[ 1-  p_{n}(\omega_{i_1j_1})  \right]
			}{
				f_{v\mid x, i_1j_1}
			}	
		\right\}
		I_{\tau, i_1j_1}
	\right]
		\overline{\chi}_{i_1j_1}
		\overline{\chi}_{i_1j_1}'
\end{eqnarray*}
with 
$
	\overline{\chi}_{i_1j_1}
	=
	\left\{ 
		\frac{1}{(n-2)(n-3)}
		\sum_{i_2 \neq i_1 , j_1} \sum_{j_2 \neq i_1 , j_1, i_2} 
			\mathbb{E}
			\left[ 
				\tilde{W}_{\sigma\{i_1,i_2;j_1,j_2\}}
				\mid 	
				X_{i_1}, X_{j_1}
			\right]
	\right\}.
$

The proof of this result follows from showing that 
\begin{eqnarray*}
	\left\{ 
		\widehat{\Psi}_{n, \tau} 
		-
		\mathbb{E} 
		\left[ 
		\tilde{W}_{\sigma}
		\widetilde{D}_{\sigma, \tau}^\ast
		\mid 
		v_{\sigma}, X_{\sigma}, A_{\sigma}
		\right]
	\right\} 
\end{eqnarray*}
is asymptotically equivalent to its H\'{a}jek Projection onto an arbitrary function of 
\[
	\zeta_{i_1j_1} = (v_{i_1j_1}, X_{i_1}, X_{j_1}, A_{i_1}, A_{j_1}, U_{i_1j_1}).
\]
The resulting H\'{a}jek Projection
is an average of conditionally independent random variables at a dyad level, with conditional mean equal to $0$ 
and a conditional variance that approximates $\Upsilon_{n}$ in the limit.
The result follows from a conditional version of Lyapunov’s central limit theorem (see, e.g., \citealt{rao:2009}).

The remaining information needed to derive the limiting distribution of the semiparametric estimator $\hat{\theta}_{n}$, is 
the convergence rate of  $\Upsilon_n$, which is given by 
\begin{eqnarray*}
\varrho_{n}
&=&
O
\left( 
	\Upsilon_{n} 
 \right)
=
O
\left( 
	\mathbb{E}
	\left[ 
		\left\{ 
			\frac{
				p_n(\omega_{i_1j_1})
				\left[ 1-  p_{n}(\omega_{i_1j_1})  \right]
			}{
				f_{v\mid x, i_1j_1}
			}	
		\right\}
		I_{\tau, i_1j_1}
	\right]
\right),
\end{eqnarray*}
and the following matrix 
\begin{eqnarray*}
\Sigma_n 
&=&
\Gamma_0^{-1}
\times 
\Upsilon_{n}
\times 
\Gamma_0^{-1}.
\end{eqnarray*}

The next theorem formalizes the limiting distribution of $\widehat{\theta}_n$.

\begin{theorem}\label{Theo:Inf:AN}
Suppose Assumptions \ref{Ass:SR00:iidsampling}-\ref{Ass:SR04:identif}, \ref{Ass:Inf01:SamplingMoments}-\ref{Ass:Inf04:Kernel}, and $n(n-1)\varrho_{n}^{-1} \rightarrow \infty$ hold.
It then follows that 
\begin{eqnarray}\label{eq:asylinearrep}
	\sqrt{n(n-1)}
	\Sigma_{n}^{-1/2} 
	\left(  \widehat{\theta}_{n} - \theta_0\right)
	&=&
	\Sigma_{n}^{-1/2}
	\times 
	\Gamma_{0}^{-1}
	\times 
	\left\{ 
	\frac{
		1	
	}{
		\sqrt{n(n-1)}
	}
	\sum_{i_1=1}^{n}
	\sum_{j_1=i_1}
		\xi_{i_1j_1, \tau}	
	\right\} 
	+ 
	o_p(1)
\end{eqnarray}
with 
\begin{eqnarray*}
	\xi_{i_1j_1, \tau}	
	&=&
	\left\{ 
		D^\ast_{ i_1 j_1} 
		-
		\mathbb{E}
		\left[ 
			D^\ast_{i_1j_1}
			\mid 
			\omega_{i_1j_1}	
		\right]
	\right\}
	I_{\tau, i_1j_1} 
	\overline{\chi}_{i_1j_1},
\end{eqnarray*}
and thus, 
\begin{eqnarray*}
	\sqrt{n(n-1)}
	\Sigma_{n}^{-1/2}
	\left(  \widehat{\theta}_{n} - \theta_0\right)
	&
	\Rightarrow
	&
	N \left( 0, I \right).
\end{eqnarray*}
\end{theorem}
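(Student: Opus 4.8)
The plan is to exploit the closed form $\widehat{\theta}_n=\widehat{\Gamma}_n^{-1}\widehat{\Psi}_{n,\tau}$ together with the identification identity $\Psi_0=\Gamma_0\theta_0$ of Corollary \ref{cor:id1}. I would first write
\[
\widehat{\theta}_n-\theta_0=\widehat{\Gamma}_n^{-1}\Big\{\widehat{\Psi}_{n,\tau}-\widehat{\Gamma}_n\theta_0\Big\}=\widehat{\Gamma}_n^{-1}\,\frac{1}{m_n}\sum_{\sigma\in\mathcal{N}_{m_n}}\tilde{W}_\sigma\big(\widehat{D}^\ast_{\sigma,\tau}-\tilde{W}_\sigma'\theta_0\big),
\]
and note that $\widehat{\Gamma}_n\overset{p}{\to}\Gamma_0$ by the law of large numbers for the fourth-order $V$-statistic $m_n^{-1}\sum_\sigma\tilde{W}_\sigma\tilde{W}_\sigma'$ under Assumption \ref{Ass:Inf01:SamplingMoments} (as in the proof of Theorem \ref{Theo:Inf:Consistency}), with $\Gamma_0$ nonsingular by Assumption \ref{Ass:SR04:identif}. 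Since $\widehat{\Gamma}_n^{-1}-\Gamma_0^{-1}=O_p(n^{-1/2})$ while $\sqrt{n(n-1)}\,(\widehat{\Psi}_{n,\tau}-\widehat{\Gamma}_n\theta_0)=O_p(\varrho_n^{1/2})$, the error from replacing $\widehat{\Gamma}_n^{-1}$ by $\Gamma_0^{-1}$ is $o_p(1)$ after normalization by $\Sigma_n^{-1/2}$, so it suffices to obtain the limiting distribution of $\sqrt{n(n-1)}\,\Gamma_0^{-1}(\widehat{\Psi}_{n,\tau}-\widehat{\Gamma}_n\theta_0)$.

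Next I would remove the first-stage estimation error by decomposing $\widehat{D}^\ast_{\sigma,\tau}-\tilde{W}_\sigma'\theta_0=(\widehat{D}^\ast_{\sigma,\tau}-\widetilde{D}^\ast_{\sigma,\tau})+(\widetilde{D}^\ast_{\sigma,\tau}-\tilde{W}_\sigma'\theta_0)$, where $\widetilde{D}^\ast_{\sigma,\tau}$ uses the true $f_{v\mid x}$. For the first term, a first-order expansion $\widehat{f}^{-1}-f^{-1}=-(\widehat{f}-f)f^{-2}+(\widehat{f}-f)^2 f^{-2}\widehat{f}^{-1}$, combined with the uniform convergence rate of the dyadic conditional density estimator (\citealt{graham/niu/powell:2019} and Assumption \ref{Ass:Inf04:Kernel}), the fact that $I_\tau$ and $h/\tau\to 0$ keep $f_{v\mid x}$ and $\widehat{f}_{v\mid x}$ bounded away from zero on the retained region, the bias-reducing kernel of order $2\overline{M}$ with $nh^{\overline{M}}\to 0$ (so the bias is negligible even at the dyadic scale $\sqrt{n(n-1)}$), and the smoothness and moment bounds in Assumptions \ref{Ass:Inf03:SmoothnessDensity}--\ref{Ass:Inf0302:Boundedness}, shows this term is $o_p\big(\varrho_n^{1/2}/\sqrt{n(n-1)}\big)$; the sums over kernel arguments must be treated as a higher-order leave-out $V$-statistic and reduced through its H\'{a}jek projection, so that the stochastic part of $\widehat{f}-f$ leaves no trace in the influence function. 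For the second term I would center at $\mathbb{E}[\tilde{W}_\sigma\widetilde{D}^\ast_{\sigma,\tau}\mid v_\sigma,X_\sigma,A_\sigma]$; the residual trimming bias $\mathbb{E}[\tilde{W}_\sigma\widetilde{D}^\ast_{\sigma,\tau}]-\Psi_0$ is of $O(\tau)$-type and, since Assumption \ref{Ass:Inf02:Trimming} gives $\tau n^2\to 0$, stays negligible after multiplication by $\sqrt{n(n-1)}$.

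The core step is to show that the centered statistic is asymptotically equivalent to its H\'{a}jek projection onto the dyad-level blocks $\zeta_{i_1j_1}=(v_{i_1j_1},X_{i_1},X_{j_1},A_{i_1},A_{j_1},U_{i_1j_1})$. Here I would first observe that the node-level projection vanishes: by i.i.d.\ sampling (Assumption \ref{Ass:SR00:iidsampling}) the conditional mean $\mathbb{E}[A_i+A_j\mid X_i,X_j]$ is additively separable, so the double-differencing built into $\tilde{W}_\sigma$ and $\tilde{D}_\sigma$ gives $\mathbb{E}[\tilde{D}^\ast_\sigma\mid X_\sigma]=\tilde{W}_\sigma'\theta_0$ (Theorem \ref{Theo:IdSR} and Corollary \ref{cor:id1}), hence $\mathbb{E}[\tilde{W}_\sigma(\tilde{D}^\ast_\sigma-\tilde{W}_\sigma'\theta_0)\mid Z_k,A_k]=0$; this is why the rate is the dyadic rate $\sqrt{n(n-1)}$ rather than $\sqrt{n}$. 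For the dyad-level projection, since $U_{i_1j_1}$ is conditionally i.i.d.\ across dyads (Assumption \ref{Ass:SR01:distr}) and enters only the $(i_1,j_1)$ component of $\widehat{D}^\ast_{\sigma,\tau}$, conditioning the $\sigma$-summand on $\zeta_{i_1j_1}$ leaves the Bernoulli residual $\{D^\ast_{i_1j_1}-\mathbb{E}[D^\ast_{i_1j_1}\mid\omega_{i_1j_1}]\}I_{\tau,i_1j_1}$ while the other three dyads' terms average out; multiplying by $\tilde{W}_\sigma$ and averaging over the remaining indices produces the weight $\overline{\chi}_{i_1j_1}$, and after accounting for the tetrad symmetrization constants this gives precisely $\tfrac{1}{n(n-1)}\sum_{i_1}\sum_{j_1}\xi_{i_1j_1,\tau}$. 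The degenerate remainder — projections onto two, three, and four dyads — has variance of smaller order than $\varrho_n/(n(n-1))$, which I would bound with the fourth- and sixth-moment conditions on $\tilde{W}_\sigma$ and $\chi$ in Assumptions \ref{Ass:Inf01:SamplingMoments} and \ref{Ass:Inf0302:Boundedness}.

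It then remains to verify that $\{\xi_{i_1j_1,\tau}\}$ are conditionally (given $\bm{Z}_n,\bm{A}_n$) independent with conditional mean zero and conditional-variance sum converging to $\Upsilon_n$, and to apply a conditional version of Lyapunov's central limit theorem (\citealt{rao:2009}); the Lyapunov ratio is controlled by $\|\tilde{W}_\sigma\|\le C_{w,1}$, $f_{v\mid x}$ bounded away from zero on the trimmed set, the sixth-moment bound on $\chi$, and the hypothesis $n(n-1)\varrho_n^{-1}\to\infty$, which keeps the effective sample size diverging and makes the argument valid for sparse as well as dense networks. Collecting the terms, premultiplying by $\Gamma_0^{-1}$, and normalizing by $\Sigma_n^{-1/2}=(\Gamma_0^{-1}\Upsilon_n\Gamma_0^{-1})^{-1/2}$ yields the linear representation \eqref{eq:asylinearrep} and the $N(0,I)$ limit. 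I expect the main obstacle to be the first-stage step — showing that the dyadic kernel estimation of $f_{v\mid x}$, whose bias must be annihilated at the faster-than-$\sqrt{n}$ dyadic rate and whose stochastic fluctuation generates a higher-order $V$-statistic on the same sample, contributes nothing to the limiting distribution — together with the sharp bound on the degenerate remainder of the H\'{a}jek projection; by contrast, the combinatorial bookkeeping that collapses the tetrad average into $\sum\xi_{i_1j_1,\tau}$ is routine.
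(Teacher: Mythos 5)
Your proposal is correct and follows essentially the same route as the paper's proof: the closed form with $\widehat{\Gamma}_n \overset{p}{\rightarrow} \Gamma_0$, a linearization of $\widehat{f}_{x}/\widehat{f}_{vx}$ whose correction terms are handled through H\'{a}jek projections of the resulting higher-order $V$-statistics (using the bias-reducing kernel with $nh^{\overline{M}}\rightarrow 0$ and $h/\tau \rightarrow 0$) so that the first stage does not alter the dyad-level influence function, the projection of the centered infeasible statistic onto $\zeta_{i_1j_1}$ yielding $\xi_{i_1j_1,\tau}$ with a degenerate remainder of smaller order, negligibility of the trimming bias via $\tau n^2 \rightarrow 0$, and a conditional Lyapunov CLT given $\Omega_n$ followed by Slutsky — precisely the structure of Lemmas \ref{Lemma:TechAppx:Expansion}--\ref{Lemma:TechAppx:Hajek_Sum3} and Parts 1--4 of the paper's argument. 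The only presentational difference is that you assert the first-stage error is directly $o_p\bigl(\varrho_n^{1/2}/\sqrt{n(n-1)}\bigr)$, whereas the paper reaches the same conclusion by projecting the two correction terms $S_{2,n\tau}$ and $S_{3,n\tau}$ separately and letting the tetrad differencing cancel their non-dyadic pieces, leaving the identical influence function.
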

\begin{proof}
	See Appendix \ref{Appx:ProofsIdentification}.
\end{proof}

Equation \eqref{eq:asylinearrep} describes the asymptotic linear representation of 
$\widehat{\theta}_{n}$.
The limiting distribution of $\widehat{\theta}_{n}$ is derived following a studentized approach as in 
\citet{andrews/schafgans:1998}, \citet{khan/tamer:2010}, and \citet{jochmans:2018}
to control for the possible varying rates of convergence due to sparsity of the network.
Notice that if $\varrho_{n}^{-1}$ converges to a finite constant that is bounded away from zero,  
$\widehat{\theta}_{n} - \theta_0$ converges at a parametric rate $\sqrt{n(n-1)}$, 
with effective sample given by the square root of the number of dyads.
Alternatively, if $\varrho_{n}^{-1}$ decays as $n$ increases, $\widehat{\theta}_{n} - \theta_0$ has a slower rate of convergence given 
by $O_p\left( \sqrt{n(n-1)\varrho_{n}^{-1}} \right)$.

\section{Simulations}\label{S:Simulations}

This section presents simulation evidence for the finite sample performance of the semiparametric estimator introduced in Section \ref{S:Inference: Special Regressor}.
I explore the properties of the estimation technique under a wide array of DGP designs that are meant to capture differences in the sample size and in the level of sparsity of the network (see, e.g.,  \citealt{jochmans:2018,dzemski:2019,yan/jiang/fienberg/leng:2019}). 

The undirected network is simulated according the network model in equation \eqref{eq:NFESpecReg}.
I consider a single observed attribute in $X_i$, which is drawn as $X_{i} \sim \mbox{Beta}(2,2) - \frac{1}{2}$.
The pair-specific covariate $W_{ij}=g_0(X_i, X_j)$ is constructed to account for complementarities on the observed attributes and is defined as $W_{ij}= X_{i} X_{j}$.
The agent-specific unobserved factor $A_i$ is generated such that it is correlated with $X_i$ and depends on the sample size $n$. 
This last feature offers a useful approach to control the degree of sparsity in the network.
In particular, I set
\begin{eqnarray*}
	A_i &=& 
	\lambda X_i
	-
	(1-\lambda)
	C_n
	\times
	\mbox{Beta}(0.5,0.5),
\end{eqnarray*}
where the $\mbox{Beta}$ random variable is independent of $X_i$ and concentrates mass at the boundary of the unit interval.
This implies that, conditional on $X_i$, the individuals cluster at small or high types of unobserved attributes.
The parameter $\lambda \in (0,1)$ controls the degree of correlation between the agent-specific heterogeneity and the observed covariate $X_i$, which is set to $\lambda=\frac{3}{4}$. 
The constant $C_n$ depends on the size of the network and takes the values  $C_n \in \left\{\log(\log(n)), \log(n)^{1/2}, \log(n)\right\}$.
Under this design, the choice of $C_n$ regulates the degree of sparsity of the network. 
For larger values of $C_n$, fewer links are formed in the network.
The special regressor $v_{ij}$ is simulated as  $v_{ij} \sim N\left(0,2 \right)$ for $i<j$, and thus satisfies 
the support and independence conditions in Assumptions \ref{Ass:SR02:indep} and \ref{Ass:SR03:fullsupport}.
The link-specific disturbance term is generated as $U_{ij} \sim Beta(2,2) - \frac{1}{2}$ for $i<j$.
The true DGP is completed by setting the parameter value $\theta_0=1.5$ and considering two different network sizes $n \in \left\{ 50, 100\right\}$.

The implementation of the semiparametric estimator for $\theta_0$ requires the estimation of the conditional density of $v_{ij}$ in a nonparametric first stage. 
I consider two approaches to isolate the approximation error induced by the density estimation. 
The first one assumes that the conditional distribution of $v_{ij}$ is known and considers a fixed trimming design given by 
$I_{\tau, ij} = 1\left[ \mid v_{ij}\mid < \tau \right] $ with $\tau = 2std(v_{ij})$. 
In the second approach, I compute the semiparametric estimator as defined in equation \eqref{eq:thetaestimator}.
Although assumption \ref{Ass:Inf03:SmoothnessDensity} requires the use of higher-order kernels to eliminate the asymptotic bias, I compute $\widehat{\theta}_n$ using a standard second-order kernel.  
The motivation for this choice is that semiparametric estimators computed using high-order kernels tends to have inferior finite sample properties
compared to those obtained using standard kernels.
Furthermore, this choice is a common practice in many semiparametric applications (see \citealt{rothe:2009} and \citealt{jochmans:2013}).
I use the standard-normal density as the kernel function. 
The trimming design is the same as in the first approach to ensure a proper comparison between the two alternative methods. 
The bandwidth parameter $h$ is set to be equal to $0.025$. 
I consider different values for the bandwidth parameter, obtaining qualitatively similar results. 
These results are summarized in Appendix \ref{Appx:simulations}.

Table \ref{table:simulation} summarizes the results of computing the semiparametric estimator, assuming that the density function $f_{v}(v_{ij})$ is known, over 500 Monte Carlo replications for all the designs.
In particular, I report the mean, median, standard deviation, and mean square error of $\widehat{\theta}_n$ over the total number of simulations.
The final column of Table \ref{table:simulation} reports the average degree of the network across the total number of simulations.
This information will be used to describe the degree of sparsity in the network across the different designs.

The top panel in Table \ref{table:simulation} shows the results of estimating $\theta_0$ in a small network with $n=50$. 
Both the mean and the median show that the estimator approximates well the true value of $\theta_0=1.5$ independently of the degree of sparsity in the network.
Furthermore, these results suggest that the estimator $\widehat{\theta}_n$ presents the smallest dispersion in the dense network design, with $C_n = \log(\log(n))$ and an average degree of 42\% of the links formed. 
As fewer links are present in the network, the performance of the estimator deteriorates.

In the bottom panel of Table \ref{table:simulation}, I show the results of estimating $\theta_0$ in a large network with $n=100$.
The evidence in this scenario reinforces the previous findings and suggests that the performance of the estimator $\widehat{\theta}_n$ improves across all the designs. 
For example, in the dense network scenario $C_n =\log(\log(n))$, the standard deviation decreases by an order of less than one half and the mean square error by an order greater than one third. 
A similar conclusion is obtained from the sparse network case $C_n =\log(n)$, where only 28\% of the links are formed.

Table \ref{table:simulation_step2} summarizes the results of computing the semiparametric two-step estimator for $\theta_0$ with a first-step kernel estimator $\widehat{f}_{v}(v_{ij})$
over 500 Monte Carlo replications for all the designs.
The top panel in Table \ref{table:simulation_step2} shows the results of estimating $\theta_0$ in a small network with $n=50$. 
These estimates suggest that $\widehat{\theta}_n$ approximates well the true value of $\theta_0$. 
However, this approach obtains less accurate results than those by the first method due to the approximation error induced by the nonparametric first-stage estimation.  
In particular, the estimator presents the best performance and smallest dispersion in the dense network design, where the network has an average degree of 42\% of the links formed. 

In the bottom panel of Table \ref{table:simulation_step2}, I show the results of estimating $\theta_0$ in a large network with $n=100$.
The estimates show that the  performance of the estimator $\widehat{\theta}_n$ improves across all the designs as the network's size grows large, 
including the sparse case where the network has an average degree of 29\% of the links formed.
Overall these numerical experiments suggest that the semiparametric estimator $\widehat{\theta}_n$ yields reliable inference 
for the preference parameter $\theta_0$ in an undirected network formation model.

\vspace{1em}
\begin{table}[h!]       
	\centering
	\caption{Simulation results for the semiparametric estimator $\widehat{\theta}_n$ with known density function $f_v(v_{ij})$}
	\label{table:simulation}
	\begin{threeparttable}
	   \begin{tabular}{l c c c c c}
\toprule
                & mean &median & std & MSE & Degree \\ \midrule
\multicolumn{6}{c}{$n=50$} \\                
$\log(\log(n))$     &1.4764&1.4627&0.9158&0.8393&0.4250\\
$\log(n)^{1/2}$     &1.5052&1.4980&1.0712&1.1476&0.3976\\
$\log(n)$           &1.5217&1.6001&1.3832&1.9136&0.3131\\
\multicolumn{6}{c}{$n=100$} \\               
$\log(\log(n))$     &1.5212&1.5022&0.4809&0.2317&0.4204\\
$\log(n)^{1/2}$     &1.5571&1.5318&0.5381&0.2928&0.3853\\
$\log(n)$           &1.5057&1.4979&0.6916&0.4783&0.2893\\
\bottomrule
\end{tabular}
\begin{tablenotes}
    \item[1] \footnotesize{Total number of Monte Carlo simulations $=500$.}
\end{tablenotes}

	\end{threeparttable}   
\end{table}
\vspace{1em}

\begin{table}[h!]       
	\centering
	\caption{Simulation results for the semiparametric estimator $\widehat{\theta}_n$ with kernel estimator $\widehat{f}_{v}(v_{ij})$ }
	\label{table:simulation_step2}
	\begin{threeparttable}
	   \begin{tabular}{l c c c c c}
\toprule
                & mean &median & std & MSE & Degree \\ \midrule
\multicolumn{6}{c}{$n=50$} \\                
$\log(\log(n))$     &1.6047&1.6164&1.1253&1.2772&0.4237\\
$\log(n)^{1/2}$     &1.6630&1.6179&1.2352&1.5522&0.3963\\
$\log(n)$           &1.6444&1.6643&1.5801&2.5176&0.3125\\
\multicolumn{6}{c}{$n=100$} \\               
$\log(\log(n))$     &1.5373&1.5011&0.4911&0.2425&0.4214\\
$\log(n)^{1/2}$     &1.5955&1.5778&0.5547&0.3168&0.3859\\
$\log(n)$           &1.5415&1.5197&0.7317&0.5371&0.2907\\
\bottomrule
\end{tabular}
\begin{tablenotes}
    \item[1] \footnotesize{Total number of Monte Carlo simulations $=500$.}
    \item[2] \footnotesize{Bandwith parameter $h=0.025$.}
\end{tablenotes}

	\end{threeparttable}   
\end{table}
\clearpage
\section{Conclusion}\label{S:Conclusion}
This paper has studied a network formation model with unobserved agent-specific heterogeneity. 
This paper offers two main contributions to the literature on network formation.
The first contribution is to propose a new identification strategy that identifies the vector of coefficients $\theta_0$, 
which accounts for the preferences for homophilic relationships on the observed attributes.
The point identification result relies on the existence of a special regressor. 
This study represents, to the best of my knowledge, 
the first generalization of a special regressor to analyze a network formation model (\citealt{lewbel:1998} and \citealt{lewbel:2000}).

The second contribution is to introduce a two-step semiparametric estimator for $\theta_0$. 
The estimator has a closed-form and is computationally tractable even in large networks. 
I show in Monte Carlo simulations that the estimator performs well in finite samples, as well as in sparse and dense networks.

Two different strands of the literature on network formation have highlighted the importance of accounting for (i) network externalities, and (ii) general forms of unobserved heterogeneity, (see, e.g., \citealt{graham:2019b}). 
In future research, I plan to explore the identification power that the special regressor has when considering an augmented model of network formation with network externalities and general forms of unobserved heterogeneity. 

\newpage
\singlespacing\bibliography{Bib_SemNet}
\newpage
\appendix
\section{Appendix}\label{Appx:ProofsIdentification}
\subsection{Proof of  Theorem \ref{Theo:IdSR}}
\begin{proof}
Let $e_{ij}=A_{i} + A_{j}-U_{ij}$ and $s(w,e)=-w' \theta_0-e$. Consider  
\begin{eqnarray*}
E[D_{ij}^{\ast}\mid X_{i}, X_{j}] 
&=& 
E
\left[ 
	E
	\left[ 
		D_{ij}^{\ast}
		\mid v_{ij}, X_{i}, X_{j}
	\right]
	\mid 
	X_{i}, X_{j}
\right]
\\
&=& 
\int_{\underline{s}_v}^{\overline{s}_{v}} 
\frac{
	E
	\left[ 
		D_{ij}
		- 
		\bm{1}\left[ v_{ij} > 0 \right]
		\mid v_{ij}, X_{i}, X_{j}
	 \right]
	}{
		f_{v\mid x}(v_{ij}\mid X_{i}, X_{j}) 
	} 
	f_{v \mid x}(v_{ij}\mid X_{i}, X_{j}) \, dv_{ij} 
\\
&=& 
\int_{\underline{s}_v}^{\overline{s}_{v}} 
E
\left[ 
	\bm{1}
	\left[ 
		v_{ij} 
		\geq 
		s(W_{ij}, e_{ij}) 
	\right] 
	- 
	\bm{1}\left[ v_{ij} > 0 \right]
	\mid 
	v_{ij}, X_{i}, X_{j}
\right]
\;
dv_{ij}
\\
&=& 
\int_{\underline{s}_v}^{\overline{s}_{v}} 
\int_{\mathbb{S}_{e}(X_{i}, X_{j})} 
\left\{ 
	\bm{1}\left[ v_{ij} \geq s(W_{ij}, e_{ij}) \right] - \bm{1}\left[ v_{ij} > 0 \right]
\right\}
dF_{e\mid x}(e_{ij} \mid v_{ij}, X_{i}, X_{j}) 
\;
dv_{ij} 
\\
&=& 
\int_{\mathbb{S}_{e}(X_{i}, X_{j} )}  
\int_{\underline{s}_v}^{\overline{s}_{v}} 
\left\{ 
	\bm{1}\left[ v_{ij} \geq s(W_{ij}, e_{ij}) \right] 
	- 
	\bm{1}\left[ v_{ij} > 0 \right]
\right\}
dv_{ij} 
\; 
dF_{e\mid x}(e_{ij}\mid X_{i}, X_{j})  
\\
&=& 
\int_{\mathbb{S}_{e}(X_{i}, X_{j})}  
-s(W_{ij}, e_{ij}) 
dF_{e\mid x}(e_{ij}\mid X_{i}, X_{j})
\\
&=& 
\int_{\mathbb{S}_{e}(X_{i}, X_{j})}  
\left(W_{ij}' \theta_0+e_{ij}\right) 
dF_{e\mid x}(e_{ij}\mid X_{i}, X_{j})
\\
&=& 
W_{ij}' \theta_0 
+  
E \left[e_{ij}\mid X_{i}, X_{j} \right].	
\end{eqnarray*}

The third to last equality follows from  the following result
\begin{eqnarray*}
	\int_{\underline{s}_v}^{\overline{s}_{v}} 
	\left\{ 
		\bm{1}\left[ v_{ij} \geq s(W_{ij}, e_{ij}) \right] - \bm{1}\left[ v_{ij} > 0 \right]
	\right\}
	dv_{ij}
	&=&
	\int_{s(W_{ij}, e_{ij})}^{\overline{s}_{v}} 1 dv_{ij} 
	-
	\overline{s}_{v}
	\\	
	&=& -s(W_{ij}, e_{ij}).
\end{eqnarray*}
\end{proof}

\subsection{Proof of Corollary \ref{cor:id1}} 
\begin{proof}
Theorem \ref{Theo:IdSR} concludes that
\begin{eqnarray*}
	E
	\left[ 
		D_{ik}^{\ast}
		\mid 
		X_{i}, X_{k}
	\right]
	&=& 
	W_{ik}' \theta_0 
	+  
	E
	\left[ 
		A_{i} + A_{k}
		\mid 
		X_{i}, X_{k}
	\right].
\end{eqnarray*}

Observe that $D_{ik}^\ast$ is a function of $(Z_{i}, Z_{k}, A_{i}, A_{k}, U_{ik})$. 
It follows from the the random sampling of nodes, Assumption \ref{Ass:SR00:iidsampling},
and the
conditionally independent formation of links, Assumption \ref{Ass:SR01:distr},  that the following condition holds for any 
tetrad $\sigma\{i,j,k,l\} \in \mathcal{N}_{m_n}$ 
\begin{eqnarray*}
	E [D_{ik}^{\ast}\mid X_{i}, X_{k}]  
	&=& 
	E [D_{ik}^{\ast}\mid X_{\sigma(\{i,j,k,l\})} ]
	\\
	E  \left[A_{i} + A_{k}\mid X_{i}, X_{k}\right]
	&=&
	E  \left[A_{i} + A_{k}\mid X_{\sigma(\{i,j,k,l\})} \right],
\end{eqnarray*}
since the joint distribution of $(v_{i}, v_{k}, A_{i}, A_{k}, U_{ik})$ is conditionally independent of $(X_{j}, X_{l})$, given $(X_{i}, X_{k})$, i.e., 
\begin{eqnarray*}
	Pr(v_{i}, v_{k}, A_{i}, A_{k}, U_{ik}\mid X_{i}, X_{k}) 
	&=& 
	Pr(U_{ik}\mid X_{i}, X_{k}, v_{i}, v_{k}, A_{i}, A_{k})
	Pr(v_{i}, v_{k}, A_{i}, A_{k} \mid X_{i}, X_{k}) 
	\\
	&=&
	Pr(U_{ik}\mid X_{\sigma(\{i,j,k,l\})}, v_{i}, v_{k}, A_{i}, A_{k})
	Pr(v_{i}, v_{k}, A_{i}, A_{k} \mid X_{\sigma(\{i,j,k,l\})} ) 
	\\
	&=&
	Pr(v_{i}, v_{k}, A_{i}, A_{k}, U_{ik}\mid X_{\sigma(\{i,j,k,l\})} ), 
\end{eqnarray*}
where the second equality follows from Assumptions \ref{Ass:SR00:iidsampling} and \ref{Ass:SR01:distr}. 
Thus, the results above yield   
\begin{eqnarray*}
	E [D_{ik}^{\ast} - D_{il}^{\ast}\mid X_{\sigma(\{i,j,k,l\})} ] 
	&
	= (W_{ik}-W_{il})' \theta_0 
	+  
	E  \left[A_{k}-A_{l}\mid X_{\sigma(\{i,j,k,l\})} \right]
	\\	
	E [D_{jk}^{\ast} - D_{jl}^{\ast}\mid X_{\sigma(\{i,j,k,l\})} ] 
	&
	= (W_{jk}-W_{jl})' \theta_0 
	+  
	E  \left[A_{k}-A_{l}\mid X_{\sigma(\{i,j,k,l\})} \right],
\end{eqnarray*}
for any tetrad $\sigma\{i,j,k,l\}$, which in turn implies
\begin{eqnarray}
	E [ \tilde{D}^\ast_{\sigma} \mid X_{\sigma}] 
	&
	= \tilde{W}_{\sigma}' \theta_0.	
\end{eqnarray}

The result follows from Assumption \ref{Ass:SR04:identif}. 
The proof is complete.
\end{proof}
\subsection*{Proof of Theorem \ref{Theo:ID2:main}}
\begin{proof}	
First, notice that for any  
$
\left( X_\sigma, \bar{\bm{v}}_{\sigma} \right)
\in 
Q_\theta
$
\begin{eqnarray*}
	\sign
	\left\{ 
		\tilde{v}_{\sigma} 
	\right\}
	&=&
	\sign
	\left\{ 
		\tilde{v}_{\sigma} 
		+
		\left( 
			\Delta_{\sigma} W_{i}'\theta_0 
			+ 
			\Delta_{\sigma} A
		\right)
		-
		\left( 
			\Delta_{\sigma} W_{j}'\theta_0 
			+ 
			\Delta_{\sigma} A
		\right)
	\right\}
\end{eqnarray*}
since $\mid \tilde{v}_{\sigma}  \mid \geq \overline{s}_{\varepsilon} - \underline{s}_{\varepsilon}$ with probability 1.

Consider a $\theta\neq \theta_0$ with 
$P
\left[ 
	\bar{\bm{Z}}_{\sigma}
	\in 
	Q_\theta	
\right]
>0$.
Without loss of generality, consider some 
$
\left( X_\sigma, \bar{\bm{v}}_{\sigma} \right)
\in 
Q_\theta
$, 
with
$
\tilde{W}_{\sigma}'\theta
\leq 
- \tilde{v}_{\sigma}
< 
\tilde{W}_{\sigma}'\theta_0
$.
From the previous observation, it follows that 
$	
\tilde{v}_{\sigma}	+ \tilde{W}_{\sigma}'\theta_0 + \Delta_{\sigma} A - \Delta_{\sigma} A > 0
$
and 
$\Delta_{\sigma} v_{i} > \overline{s}_{\varepsilon}$, 
$\Delta_{\sigma} v_{j} < \underline{s}_{\varepsilon}$
with probability 1. 

Given
$
\left( X_\sigma, \bar{\bm{v}}_{\sigma} \right)
\in 
Q_\theta		
$, 
it follows that 
$	
\tilde{v}_{\sigma}	+ \tilde{W}_{\sigma}'\theta_0 + \Delta_{\sigma} A - \Delta_{\sigma} A > 0
$
and 
$\Delta_{\sigma} v_{i} > \overline{s}_{\varepsilon}$, 
$\Delta_{\sigma} v_{j} < \underline{s}_{\varepsilon}$
hold if and only if 
\begin{eqnarray}
	\Delta_{\sigma} v_{i} 
	&>& 
	-
	\left( 
	\Delta_{\sigma} W_{i}'\theta_0 
	+
	\Delta_{\sigma} A
	\right)
	\nonumber
	\\
	\Delta_{\sigma} v_{j}
	&\leq& 
	- 
	\left( 
	\Delta_{\sigma} W_{j}'\theta_0 
	+
	\Delta_{\sigma} A
	\right)
	\label{eq:ID02:ineq01}
\end{eqnarray}
with probability 1. 
The inequalities in \eqref{eq:ID02:ineq01} are sufficient conditions for 
\begin{eqnarray*}
	&& P_{\theta_0}
	\left[ 
		\tilde{D}_{\sigma} =2
		\mid 
		X_{\sigma},
		A_{\sigma},
		\bar{\bm{v}}_{\sigma} \in 	\mathcal{V}(X_\sigma), 
		\tilde{D}_{\sigma} \in \left\{ -2, 2\right\}
	\right]
	\\
	&>&
	P_{\theta_0}
	\left[ 
		\tilde{D}_{\sigma} = -2 
		\mid 
		X_{\sigma},
		A_{\sigma},
		\bar{\bm{v}}_{\sigma} \in 	\mathcal{V}(X_\sigma), 
		\tilde{D}_{\sigma} \in \left\{ -2, 2\right\}
	\right],
\end{eqnarray*}
or equivalently, for 
\begin{eqnarray*}
	\mathbb{E}_{\theta_0}
\left[ 
	\tilde{D}_{\sigma}
	\mid 
	X_{\sigma},
	A_{\sigma},
	\bar{\bm{v}}_{\sigma} \in 	\mathcal{V}(X_\sigma), 
	\tilde{D}_{\sigma} \in \left\{ -2, 2\right\}
\right]
> 
0.
\end{eqnarray*}

Notice that for a 
$
\left( X_\sigma, \bar{\bm{v}}_{\sigma} \right)
\in 
Q_\theta
$, 
$
	\mathbb{E}_{\theta_0}
\left[ 
	\tilde{D}_{\sigma}
	\mid 
	X_{\sigma},
	A_{\sigma},
	\bar{\bm{v}}_{\sigma} \in 	\mathcal{V}(X_\sigma), 
	\tilde{D}_{\sigma} \in \left\{ -2, 2\right\}
\right]
> 
0
$
is also sufficient to conclude that 
$\tilde{v}_{\sigma}	+ \tilde{W}_{\sigma}'\theta_0 + \Delta_{\sigma} A - \Delta_{\sigma} A >0$
with probability 1. 
Otherwise, if
$\tilde{v}_{\sigma}	+ \tilde{W}_{\sigma}'\theta_0 + \Delta_{\sigma} A - \Delta_{\sigma} A \leq 0$
with 
$\bar{\bm{v}}_{\sigma} \in \mathcal{V}(X_\sigma)$, 
it would be the case that $\tilde{v}_{\sigma} < 0$, and thus 
\begin{eqnarray*}
	\Delta_{\sigma} v_{i} 
	&\leq& 
	-
	\left( 
	\Delta_{\sigma} W_{i}'\theta_0 
	+
	\Delta_{\sigma} A
	\right)
	\nonumber
	\\
	\Delta_{\sigma} v_{j}
	&>& 
	- 
	\left( 
	\Delta_{\sigma} W_{j}'\theta_0 
	+
	\Delta_{\sigma} A
	\right)
\end{eqnarray*}
with probability 1, which contradicts 
\begin{eqnarray*}
	\mathbb{E}_{\theta_0}
	\left[ 
		\tilde{D}_{\sigma}
		\mid 
		X_{\sigma},
		A_{\sigma},
		\bar{\bm{v}}_{\sigma} \in 	\mathcal{V}(X_\sigma), 
		\tilde{D}_{\sigma} \in \left\{ -2, 2\right\}
	\right]
	> 
	0.
\end{eqnarray*} 

Hence,
\begin{eqnarray*}
	\sign
	\left\{ 
		\mathbb{E}_{\theta_0}
		\left[ 
			\tilde{D}_{\sigma}
			\mid 
			X_{\sigma},
			A_{\sigma},
			\bar{\bm{v}}_{\sigma} \in 	\mathcal{V}(X_\sigma), 
			\tilde{D}_{\sigma} \in \left\{ -2, 2\right\}
		\right]
	\right\}
	&=&
	\sign
	\left\{ 
		\tilde{v}_{\sigma}	+ \tilde{W}_{\sigma}'\theta_0
	\right\}
\end{eqnarray*} 
for any $(X_{\sigma}, A_{\sigma}, \bar{\bm{v}}_{\sigma} \in 	\mathcal{V}(X_\sigma) )$.

The previous result implies that for any
$
\left( X_\sigma, \bar{\bm{v}}_{\sigma} \right)
\in 
Q_\theta
$
with $P \left[ \bar{\bm{Z}}_{\sigma} \in  Q_\theta  \right] >0$,
it will hold that 
$
\tilde{W}_{\sigma}'\theta
\leq 
- \tilde{v}_{\sigma}
< 
\tilde{W}_{\sigma}'\theta_0
$
if and only if  
\begin{eqnarray*}
	\sign
	\left\{ 
		\mathbb{E}_{\theta_0}
		\left[ 
			\tilde{D}_{\sigma}
			\mid 
			X_{\sigma},
			\bar{\bm{v}}_{\sigma} \in 	\mathcal{V}, 
			\tilde{D}_{\sigma} \in \left\{ -2, 2\right\}
		\right]
	\right\}
	>
	\sign
	\left\{ 
		\mathbb{E}_{\theta}
		\left[ 
			\tilde{D}_{\sigma}
			\mid 
			X_{\sigma},
			\bar{\bm{v}}_{\sigma} \in 	\mathcal{V}, 
			\tilde{D}_{\sigma} \in \left\{ -2, 2\right\}
		\right]
	\right\}.
\end{eqnarray*}
This result implies that 
$\bar{\bm{z}}_{\sigma} \in \xi_\theta(X_{\sigma})$, 
and
$P \left[ \bar{\bm{Z}}_{\sigma} \in \xi_\theta \right] >0$.
Therefore, $\theta_0$ is identified relative to $\theta$.
\end{proof}

\subsection*{Proof of Corollary \ref{Theo:ID2:corollary}}
\begin{proof}
	Consider any $\theta \neq \theta_0$.
	It follows from Assumption \ref{Ass:B05:fullrank} that $P\left[ \tilde{W}_\sigma'(\theta - \theta_0) \neq 0  \right] >0$ for any tetrad $\sigma \in \mathcal{N}_{m_n}$.
	Suppose without loss of generality that $P\left[ \tilde{W}_\sigma' \theta <  \tilde{W}_\sigma' \theta_0   \right] >0$.
	Under Assumptions \ref{Ass:SR00:iidsampling} and \ref{Ass:B04:SufficientVariation}, for any $X_\sigma$, with 
	$\tilde{W}_\sigma' \theta <  \tilde{W}_\sigma' \theta_0 $, there exists an interval of $\tilde{v}_\sigma = \Delta_\sigma v_{i} -\Delta_\sigma v_{j}$ 
	with $\tilde{W}_\sigma' \theta \leq - \tilde{v}_\sigma  < \tilde{W}_\sigma' \theta_0 $.
	This implies that
	$
	P
	\left[ 
	\bar{\bm{Z}}_{\sigma}
	\in 
	Q_\theta	
	\right]
	>0
	$
	, and thus $\theta_0$ is point identified relative to all $\theta \neq \theta_0$. 
\end{proof}
\subsection{Proof of Theorem \ref{Theo:Inf:Consistency}} 
\begin{proof}
	
Consider $\widehat{\theta}_n=\widehat{\Gamma}_n^{-1} \times \widehat{\Psi}_{n,\tau}$, with 	
\begin{eqnarray*}
	\widehat{\Gamma}_n
	&=& 
	\frac{1}{m_{n}} 
	\sum_{ \sigma \in \mathcal{N}_{m_{n}}}
		\left[\tilde{W}_{\sigma}
		\tilde{W}_{\sigma}'\right] 
	\\
	\widehat{\Psi}_{n, \tau} 
	&=& 
	\frac{1}{m_{n}} 
	\sum_{ \sigma \in \mathcal{N}_{m_{n}}}
	\left[  
		\tilde{W}_{\sigma} 
		\widehat{D}_{\sigma, \tau}^\ast
	\right]
\end{eqnarray*}	

First, I will show that 
$\widehat{\Gamma}_n \overset{p}{\rightarrow} \Gamma_0$ 
and $
\widehat{\Psi}_{n, \tau} \overset{p}{\rightarrow} \Psi_0$;  
the result will follow Assumption \ref{Ass:SR04:identif}, the continuous mapping theorem and Slutsky.

\textbf{Part 1.} 
Notice that $\widehat{\Gamma}_n-\Gamma_0$ is a mean zero fourth-order V-statistic, without common indices 
\begin{eqnarray*}
	\widehat{\Gamma}_n - \Gamma_0
	&=&
	\frac{1}{m_{n}} 
	\sum_{ \sigma \in \mathcal{N}_{m_{n}}}
	\left\{ 
		\left[\tilde{W}_{\sigma}\tilde{W}_{\sigma}'\right] 
		-
		E \left[ \tilde{W}_{\sigma}\tilde{W}_{\sigma}' \right]
	\right\}.	
\end{eqnarray*}
Lemma \ref{Lemma:TechAppx:UEquiv}  implies that $\hat{\Gamma}_n-\Gamma_0$ can be approximated by a mean zero U-statistic of order 4 at a rate $\sqrt{n}$. 
Assumption \ref{Ass:SR04:identif} ensures that $\Gamma_0$ is finite.
It follows from  Assumption \ref{Ass:SR00:iidsampling} that a Strong Law of Large Numbers  for a U-statisitc holds, and hence, 
$\hat{\Gamma}_n - \Gamma_0 = o_{p}(1)$, (see \citealt[Theorem A, p. 190]{serfling:2009}). 

\textbf{Part 2.}
For a fixed tetrad $\sigma= \sigma(\{i_1,i_2,j_1,j_2\}) \in \mathcal{N}_{m_n}$, let
\begin{eqnarray*}
	\widehat{\eta}_{[l_1l_2], \tau}
	&=&
	\frac{1}{m_{n}} 
	\sum_{ \sigma \in \mathcal{N}_{m_{n}}}
	\tilde{W}_{\sigma}
	\varphi_{ l_1l_2, \tau}
	\left( \frac{ \hat{f}_{x,l_1l_2} }{ \hat{f}_{vx,l_1l_2} } \right),
\end{eqnarray*}
for $(l_1,l_2) \in \left\{ (i_1,j_1), (i_1,j_2), (i_2,j_1), (i_2,j_2)\right\}$.
Next, observe that $\widehat{\Psi}_{n, \tau}$ can be written as 
\begin{eqnarray*}
	\widehat{\Psi}_{n, \tau}
	&=& 
	\left( 
		\widehat{\eta}_{ \left[ i_1j_1 \right], \tau}
		-
		\widehat{\eta}_{ \left[ i_1j_2 \right], \tau}
	 \right)  
	-  
	\left( 
		\widehat{\eta}_{ \left[ i_2j_1 \right], \tau}
		-
		\widehat{\eta}_{ \left[ i_2j_2 \right], \tau}
	\right).	
\end{eqnarray*}

Consistent estimation of $\Psi_0$ will follow from repeated applications of Lemma \ref{Lemma:TechAppx:Consistency}. 
It follows from Lemma \ref{Lemma:TechAppx:Expansion} that 
$\widehat{\eta}_{ \left[ l_1l_2 \right], \tau}$ can be written as 
\begin{eqnarray*}
	\widehat{\eta}_{ \left[ l_1l_2 \right], \tau}
	=
	\frac{1}{m_{n}} 
	\sum_{ \sigma \in \mathcal{N}_{m_{n}}} 
	\tilde{W}_{\sigma}
	\varphi_{ l_1l_2, \tau}
	\left\{ 
		\frac{f_{x,l_1l_2} }{ f_{vx,l_1l_2}}
		+
		\frac{ \widehat{f}_{x,l_1l_2} - f_{x,l_1l_2}  }{ f_{vx,l_1l_2} }
		-
		\frac{ f_{x,l_1l_2} }{ f_{vx,l_1l_2} } 
		\times 
		\frac{ \widehat{f}_{vx,l_1l_2} - f_{vx,l_1l_2} }{ f_{vx,l_1l_2} }
	\right\} 
	+ o_{p}(1).	
\end{eqnarray*}

Then, Lemma \ref{Lemma:TechAppx:Consistency} yields 
\begin{eqnarray*}
	\frac{1}{m_{n}} 
	\sum_{ \sigma \in \mathcal{N}_{m_{n}}} 
	\tilde{W}_{\sigma}
	\varphi_{ l_1l_2, \tau}
	\left\{ 
		\frac{ \hat{f}_{x,l_1l_2} }{ f_{vx,l_1l_2} }
	\right\}
	&=&
	E 
	\left[  
		\tilde{W}_{\sigma}
		\varphi_{ l_1l_2, \tau}
		\frac{f_{x,l_1l_2} }{f_{vx,l_1l_2}}  
	\right]
	+ o_{p}(1)
	\\
	\frac{1}{m_{n}} 
	\sum_{ \sigma \in \mathcal{N}_{m_{n}}} 
	\tilde{W}_{\sigma}
	\varphi_{ l_1l_2, \tau}
	\left\{ 
		\frac{ f_{x,l_1l_2} }{ f_{vx,l_1l_2} }
		\left(  
		\frac{ \hat{f}_{vx,l_1l_2}  }{ f_{vx,l_1l_2} }
		\right)
	\right\}
	&=&
	E 
	\left[  
		\tilde{W}_{\sigma}
		\varphi_{ l_1l_2, \tau}
		\frac{f_{x,l_1l_2} }{f_{vx,l_1l_2}}   
	\right] 
	+ o_{p}(1).	
\end{eqnarray*}

It follows from the previous results and the definition of $D^\ast_{ l_1 l_2, \tau}$ that 
\begin{eqnarray*}
	\widehat{\eta}_{ \left[ i_1j_1 \right], \tau}
	&=&
	\frac{1}{m_{n}} 
	\sum_{ \sigma \in \mathcal{N}_{m_{n}}} 
	\left\{  
		\tilde{W}_{\sigma}
		D^\ast_{ l_1l_2, \tau} 
	\right\} 
	+ E \left[  \tilde{W}_{\sigma}D^\ast_{ l_1l_2, \tau}  \right] 
	- E \left[  \tilde{W}_{\sigma}D^\ast_{ l_1l_2, \tau}  \right] 
	+ o_{p}(1),
\end{eqnarray*}
which is a V-statistic of order 4. 
It follows from Lemma \ref{Lemma:TechAppx:UEquiv} that it can be approximated by a U-statisitcs of order 4.
Assumptions \ref{Ass:Inf01:SamplingMoments} and \ref{Ass:Inf02:Trimming},  and equation \eqref{eq:beta} ensure that $E\left[\tilde{W}_{\sigma}D^\ast_{ l_1l_2, \tau}  \right]$ is finite.
It follows then from Assumptions \ref{Ass:SR00:iidsampling} that a Strong Law of Large Numbers for U-statistics holds, and hence,  
\begin{eqnarray*}
	\frac{1}{m_{n}} 
	\sum_{ \sigma \in \mathcal{N}_{m_{n}}} 
	\left\{  
		\tilde{W}_{\sigma}
		D^\ast_{ l_1l_2, \tau} 
	-
	E 
	\left[  
		\tilde{W}_{\sigma}D^\ast_{ l_1l_2, \tau}  
	\right] 
	\right\} 
	&=&
	o_p(1).
\end{eqnarray*}

Consider next
\begin{eqnarray*}
	\frac{1}{m_n}
	\sum_{\sigma \in \mathcal{N}_{m_n} }
	\tilde{W}_{\sigma}
	\left\{ 
		D^\ast_{ l_1l_2} 
		-	
		D^\ast_{ l_1l_2, \tau} 
	\right\}
	&=&
	\frac{1}{n(n-1)}
	\sum_{l_1}^{n}
	\sum_{l_2\neq l_1}
	D^\ast_{ l_1l_2}
	\left\{ 
		\left( 
			1 
			-	
			I_{\tau, l_1l_2}
		\right)
	\right\}
	\tilde{W}_{l_1l_2}(\sigma)
\end{eqnarray*}
where the equality follows from the definition of $D^\ast_{ l_1l_2, \tau} $ and 
\begin{eqnarray*}
	\tilde{W}_{l_1l_2}(\sigma)
	&=&
	\frac{1}{(n-2)(n-3)}
	\sum_{s_1 \neq l_1, l_2}
	\sum_{s_2\neq l_1, l_2, s_1}
		\tilde{W}_{\sigma\{l_1,s_1;l_2,s_2\}}.
\end{eqnarray*}

It follows from using a Cauchy-Schwarz inequality, that the expectation
\begin{eqnarray*}
	E
	\left[ 
		\left( 
			\frac{1}{ n(n-1)}
			\sum_{l_1}^{n}
			\sum_{l_2\neq l_1}
			D^\ast_{ l_1l_2}
			\left\{ 
				\left( 
					1 
					-	
					I_{\tau,l_1l_2}
				\right)
			\right\}
			\tilde{W}_{l_1l_2}(\sigma)
		\right)^2
	\right]	
\end{eqnarray*}
is bounded by 
\begin{eqnarray*}
	\frac{1}{ n(n-1)}
	\sum_{l_1}^{n}
	\sum_{l_2\neq l_1}
	E
	\left[ 
		\left(  
		D^\ast_{ l_1l_2}
		\left\{ 
			\left( 
				1 
				-	
				I_{\tau, l_1l_2}
			\right)
		\right\}
		\tilde{W}_{l_1l_2}(\sigma)
		\right)^2	
	\right]
	&=&
	O
	\left(  
	E
	\left[ 
		\tilde{W}_{l_1l_2}(\sigma)^2
		\left( D^\ast_{ l_1l_2} \right)^2
	\left( 
		1 
		-	
		I_{\tau, l_1l_2}
	\right)^2	
	\right]
	\right)
	\\
	&\leq& 
	\sup_{\sigma} \left( \tilde{W}_{\sigma}^2 \right)
	\sup_{l_1l_2} \left( D^{\ast}_{ l_1l_2} \right)^2
	O
	\left( 
	E\left[ 
		\left( 
			1 
			- 
			I_{\tau,l_1l_2}
		\right)^2 
	\right]
	\right).
\end{eqnarray*}
where the inequality follows from Assumption \ref{Ass:Inf01:SamplingMoments}. Assumption \ref{Ass:Inf02:Trimming}  yields
\begin{eqnarray*}
	E\left[ 
		\left( 
			1 
			- 
			I_{\tau, l_1l_2}
		\right)^2 
		\right]
	&=&
	P\left[ 
		I_{\tau, l_1l_2}=0
	\right]
	=
	o(\tau). 
\end{eqnarray*}

Using the results above to conclude that 
\begin{eqnarray*}
	E
	\left[ 
		\left( 
			\frac{1}{ n(n-1) }
			\sum_{l_1}^{n}
			\sum_{l_2\neq l_1}
			D^\ast_{ l_1l_2}
			\left\{ 
				\left( 
					1 
					-	
					I_{\tau,l_1l_2}
				\right)
			\right\}
			\tilde{W}_{l_1l_2}(\sigma)
		\right)^2
	\right]	
	&\leq&
	o(\tau),
\end{eqnarray*} 
and hence, 
\begin{eqnarray*}
	\frac{1}{m_{n}} 
	\sum_{ \sigma \in \mathcal{N}_{m_{n}}} 
	\left\{  
		\tilde{W}_{\sigma}
		D^\ast_{ l_1l_2, \tau} 
	-
	E 
	\left[  
		\tilde{W}_{\sigma} D^\ast_{ l_1l_2}  
	\right] 
	\right\} 
	&=& o(1).
\end{eqnarray*}

Using similar steps for 
$\widehat{\eta}_{ \left[ i_1j_2 \right], \tau}, 
\widehat{\eta}_{ \left[ i_2j_1 \right], \tau}$, and 
$\widehat{\eta}_{ \left[ i_2j_2 \right], \tau}$, yields: 
\begin{eqnarray*}
	\hat{\Psi}_{n,\tau}
	-
	E 
	\left[  
		\tilde{W}_{\sigma}
		\tilde{D}_{\sigma}^\ast
	\right] 
	&=&
	o_{p}(1).
\end{eqnarray*}
The result follows from Assumption \ref{Ass:SR04:identif}, the Continuous Mapping Theorem and Slutsky's Theorem. 
\end{proof}
\subsection{Proof of Theorem \ref{Theo:Inf:AN}} 

\begin{proof}

\textbf{Part 1: H\'{a}jek Projection}
	
Under  Assumptions \ref{Ass:SR00:iidsampling}-\ref{Ass:SR04:identif}, \ref{Ass:Inf01:SamplingMoments}-\ref{Ass:Inf04:Kernel}, 
it follows from the proof of Theorem \ref{Theo:Inf:Consistency} that 
$\hat{\Gamma}_{n} \overset{p}{\rightarrow} \Gamma_0$, 
and from Lemma \ref{Lemma:TechAppx:Expansion} that 
$	\widehat{\eta}_{ \left[ l_1l_2 \right], \tau}$ can be written as 
\begin{align*}
	\widehat{\eta}_{ \left[ l_1l_2 \right], \tau}
	=
	\frac{1}{m_{n}} 
	\sum_{ \sigma \in \mathcal{N}_{m_{n}}} 
	\tilde{W}_{\sigma}
	\varphi_{ l_1l_2, \tau}
	\left\{ 
		\frac{f_{x,l_1l_2} }{ f_{vx,l_1l_2}}
		+
		\frac{ \widehat{f}_{x,l_1l_2} - f_{x,l_1l_2}  }{ f_{vx,l_1l_2} }
		-
		\frac{ f_{x,l_1l_2} }{ f_{vx,l_1l_2} } 
		\times 
		\frac{ \widehat{f}_{vx,l_1l_2} - f_{vx,l_1l_2} }{ f_{vx,l_1l_2} }
	\right\} 
	+ o_{p}(1)
\end{align*}
for $(l_1, l_2) \in  \left\{ (i_1,j_1), (i_1,j_2), (i_2,j_1), (i_2,j_2) \right\}$.

Hence,  
$
\widehat{\Psi}_{n, \tau}
=
\left( 
	\widehat{\eta}_{ \left[ i_1j_1 \right], \tau}
	-
	\widehat{\eta}_{ \left[ i_1j_2 \right], \tau}
 \right)  
-  
\left( 
	\widehat{\eta}_{ \left[ i_2j_1 \right], \tau}
	-
	\widehat{\eta}_{ \left[ i_2j_2 \right], \tau}
\right),
$
which can be expressed as
$\widehat{\Psi}_{n, \tau} = S_{1,n \tau} + S_{2,n\tau} - S_{3,n\tau} + o_p(1)$ 
using the expression above, with 
\begin{eqnarray*}
	S_{1,n \tau}
	&=&
	\frac{1}{m_{n}} 
	\sum_{ \sigma \in \mathcal{N}_{m_{n}}} 
	\tilde{W}_{\sigma}
	\left\{ 
		\left( 
			D^\ast_{ {i_1 j_1}, \tau}
			-
			D^\ast_{ {i_1 j_2}, \tau}	
		\right)
		-
		\left( 
			D^\ast_{ {i_2 j_1}, \tau}
			-
			D^\ast_{ {i_2 j_2}, \tau}	
		\right)
	\right\}
	\\
	S_{2,n \tau}
	&=&
	\frac{1}{m_{n}} 
	\sum_{ \sigma \in \mathcal{N}_{m_{n}}} 
	\tilde{W}_{\sigma}
	\left\{ 
		\left( 
			\frac{ 
				\varphi_{ {i_1 j_1}, \tau}	
				\widehat{f}_{x,{i_1 j_1}} 
			}{ 
				f_{vx,{i_1 j_1}} 
			}
			-
			\frac{ 
				\varphi_{ {i_1 j_2}, \tau}	
				\widehat{f}_{x,{i_1 j_2}} 
				}{ 
				f_{vx,{i_1 j_2}} 
				}
		\right)
		-
		\left( 
			\frac{ 
				\varphi_{ {i_2 j_1}, \tau}
				\widehat{f}_{x,{i_2 j_1}} 
				}{ 
				f_{vx,{i_2 j_1}} 
			}
			-
			\frac{ 
				\varphi_{ {i_2 j_2}, \tau}	
				\widehat{f}_{x,{i_2 j_2}} 
			}{ 
				f_{vx,{i_2 j_2}} 
			}
		\right)
	\right\}
	\\
	S_{3,n \tau}
	&=&
	\frac{1}{m_{n}} 
	\sum_{ \sigma \in \mathcal{N}_{m_{n}}} 
	\tilde{W}_{\sigma}
	\left\{ 
		\left( 
			\frac{ 
				D_{ {i_1 j_1}, \tau}^\ast 	
				\widehat{f}_{vx,{i_1 j_1}} 
			}{ 
				f_{vx,{i_1 j_1}} 
			}
			-
			\frac{ 
				D_{ {i_1 j_2}, \tau}^\ast 	
				\widehat{f}_{vx,{i_1 j_2}} 
			}{ 
				f_{vx,{i_1 j_2}} 
			}
		\right)
		-
		\left( 
			\frac{ 
				D_{ {i_2 j_1}, \tau}^\ast 
				\widehat{f}_{vx,{i_2 j_1}} 
			}{ 
				f_{vx,{i_2 j_1}} 
			}
			-
			\frac{ 
				D_{ {i_2 j_2}, \tau}^\ast 	
				\widehat{f}_{vx,{i_2 j_2}} 
			}{ 
				f_{vx,{i_2 j_2}} 
			}
		\right)
	\right\}.
\end{eqnarray*}

Consider
\begin{eqnarray*}
\left( 
	\widehat{\Psi}_{n, \tau} 
	-
	E \left[ 
	\tilde{W}_{\sigma}
	\widetilde{D}_{\sigma,\tau}^\ast
	\mid 
	\Omega_n
	\right]
\right)
&=& 
\left\{ 
S_{1,n \tau} 
-
E 
\left[ 
	\tilde{W}_{\sigma}
	\widetilde{D}_{\sigma,\tau}^\ast
\mid 
\Omega_n	
\right]
\right\}
+ 
S_{2,n\tau} 
- 
S_{3,n\tau} 
+ o_p(1),
\end{eqnarray*}
it follows from Lemmas 
\ref{Lemma:TechAppx:Hajek_Sum1}, 
\ref{Lemma:TechAppx:Hajek_Sum2}, 
and 
\ref{Lemma:TechAppx:Hajek_Sum3}
that the 
H\'{a}jek projection of
\begin{eqnarray*}
	\left(  
		\widehat{\Psi}_{n, \tau} 
		-
		E \left[ 
		\tilde{W}_{\sigma}
		\widetilde{D}_{\sigma,\tau}^\ast
		\mid 
		\Omega_n 
		\right]
	\right)	
\end{eqnarray*}
into an arbitrary function of $\zeta_{i_1j_1} = \left( X_{i_1},X_{j_1}, A_{i_1},A_{j_1}, v_{i_1j_1},U_{i_1j_1} \right)$
is given by 
\begin{eqnarray*}
	\left( 
	\widehat{\Psi}_{n,\tau}
	-
	E
	\left[ 
		\tilde{W}_{\sigma}
		\tilde{D}_{\sigma\tau}^\ast
		\mid 
		\Omega_n
	\right]	
	\right)
	&=&
	V_{n}^\ast
	+
	o_{p}
	\left(  
	\sqrt{
		\frac{
		\varrho_{n}
		}{n(n-1)}	
		}
	\right)
\end{eqnarray*}
where
\begin{eqnarray*}
V_{n}^\ast
&=&
\frac{1}{n(n-1)} 
\sum_{i_1 = 1}^{n} \sum_{j_1 \neq i_1}
\xi_{i_1j_1,\tau}
\\		
\xi_{i_1j_1, \tau}
&=&
\left\{ 
	D^\ast_{ i_1 j_1} 
	-
	E
	\left[ 
		D^\ast_{i_1j_1}
		\mid 
		\omega_{i_1j_1}	
	\right]
\right\}
I_{\tau, i_1j_1} 
\overline{\chi}_{i_1j_1}
\\
\overline{\chi}_{i_1j_1}
&=&
\left\{ 
	\frac{1}{(n-2)(n-3)}
	\sum_{i_2 \neq i_1 , j_1} \sum_{j_2 \neq i_1 , j_1, i_2} 
		E\left[ 
			\tilde{W}_{\sigma\{i_1,i_2;j_1,j_2\}}
			\mid 	
			X_{i_1}, X_{j_1}
		\right]
\right\}	
\end{eqnarray*}
and 
\begin{eqnarray*}
	\Upsilon_{n,\tau} 
	&=& n(n-1)	
	Var
	\left(
	V_{n}^\ast
	 \right)
	=
	\frac{1}{n(n-1)}
	\left\{ 
	\sum_{i_1 =1}^n  \sum_{j_1 \neq i_1} 
	\Lambda^\ast_{i_1, j_1}
	\right\}
	\\
	\Lambda^\ast_{i_1, j_1}
	&=&
	E
	\left[ 
		\left\{ 
			E
			\left[ 
				D^\ast_{i_1j_1}
				D^\ast_{i_1j_1}
				\mid 
				\omega_{i_1j_1}
			\right]	
			-
			E
			\left[ 
				D^\ast_{i_1j_1}
				\mid 
				\omega_{i_1j_1}
			\right]^2		
		\right\}
		I_{\tau, i_1j_1}^2
		\overline{\chi}_{i_1j_1}
		\overline{\chi}_{i_1j_1}'
	\right]
	\\
	\varrho_{n, \tau}
	&=&
	O
	\left( 
		\Upsilon_{n,\tau} 
	 \right)
	=
	O
	\left( 
		E
		\left[ 
			\left\{ 
				\frac{
					p_n(\omega_{i_1j_1})
					\left[ 1-  p_{n}(\omega_{i_1j_1})  \right]
				}{
					f_{v\mid x, i_1j_1}
				}	
			\right\}
			I_{\tau, i_1j_1}
		\right]
	\right). 
\end{eqnarray*}

\textbf{Part 2: Bias Reduction}

Consider next, 
\begin{eqnarray*}
	n(n-1)
\varrho_{n}^{-1}
E
\left[ 
\left( 	
\frac{1}{m_n}
\sum_{ \sigma \in \mathcal{N}_{m_{n}}} 	
	\tilde{W}_{\sigma}
	\left\{ 
		\tilde{D}_{\sigma\tau}^\ast
	-
		\tilde{D}_{\sigma}^\ast
	\right\}
\right)^2
\mid 
\Omega_n
\right].	
\end{eqnarray*}
It follows from a Cauchy-Schwarz inequality that the term above is bounded by 
\begin{eqnarray*}
&&
n(n-1)
\varrho_{n}^{-1}
\frac{1}{m_n}
\sum_{ \sigma \in \mathcal{N}_{m_{n}}} 	
E
\left[ 
\left( 	
		\tilde{D}_{\sigma\tau}^\ast
	-
		\tilde{D}_{\sigma}^\ast
\right)^2
\mid 
\Omega_n
\right]
\tilde{W}_{\sigma}
\tilde{W}_{\sigma}'
\end{eqnarray*}
which is equal to 
\begin{eqnarray*}
	O\left(  
n(n-1)
\varrho_{n}^{-1}
\left\{ 
	E
	\left[ 
		D_{i_1j_1}^\ast
		D_{i_1j_1}^\ast
	\mid 
	\omega_{i_1j_1}	
	\right]
	-
	E
	\left[ 
		D_{i_1j_1}^\ast
	\mid 
	\omega_{i_1j_1}	
	\right]
	E
	\left[ 
		D_{i_1j_1}^\ast
	\mid 
	\omega_{i_1j_1}	
	\right]'
\right\}
\left( I_{\tau, i_1j_1} -1 \right)^2
\tilde{W}_{\sigma}
\tilde{W}_{\sigma}'
\right).
\end{eqnarray*}

Assumptions \ref{Ass:Inf01:SamplingMoments}
and \ref{Ass:Inf02:Trimming}  yield 
\begin{eqnarray*}
&&
\sup_{\sigma}\left( \tilde{W}_{\sigma}\right)
\sup_{\sigma}\left( \tilde{W}_{\sigma}\right)'
O\left(  
n(n-1)
\varrho_{n}^{-1}
\left\{ 
	\frac{
		p_n\left( \omega_{i_1j-1} \right)
		\left[ 1- p_n\left( \omega_{i_1j-1} \right) \right]
	}{
			f_{v\mid x, i_1j_1 }
	}
\right\}
\left( I_{\tau, i_1j_1} -1 \right)^2
\right)
\\
&=&
O
\left( 
	n(n-1)
	\tau 
\right)
=0
\end{eqnarray*}
since 
$\left( I_{\tau, i_1j_1} -1 \right)$
as $\tau \rightarrow 0$ and $n \rightarrow \infty$.

Therefore, 
\begin{eqnarray*}
n(n-1)
\varrho_{1n}^{-1} 
E
\left[ 	
\left( 	
	\frac{1}{m_n}
	\sum_{ \sigma \in \mathcal{N}_{m_{n}}} 	
	\tilde{W}_{\sigma}
	\left\{ 
		\tilde{D}_{\sigma\tau}^\ast
	-
		\tilde{D}_{\sigma}^\ast
	\right\}
\right)^2
\mid 
\Omega_{n}
\right]	
&=&
o(1),
\end{eqnarray*}
and so 
\begin{eqnarray*} 	
\\
n(n-1)
\varrho_{1n}^{-1} 	
\left( 
	E
	\left[ 
		\tilde{W}_{\sigma}
		\tilde{D}_{\sigma\tau}^\ast	
		\mid 
		\Omega_\sigma
	\right]
	-
	E
	\left[ 
		\tilde{W}_{\sigma}
		\tilde{D}_{\sigma}^\ast	
		\mid 
		\Omega_\sigma
	\right]
\right)
&=&
o(1).
\end{eqnarray*}

\textbf{Part 3: Limit Distribution of Projection}

Given Assumptions \ref{Ass:SR01:distr}, 
the H\'{a}jek projection $V_{n}^\ast$
is an average of $\left\{ \xi_{i_1j_1, \tau} \right\}$, 
which are conditionally independent given 
$\Omega_{n}= \left(\mathbf{v}_n,  \mathbf{X}_n, \mathbf{A}_n \right) $, 
with conditional mean 
\begin{eqnarray*}
	E
	\left[ 
		\xi_{i_1j_1, \tau}
		\mid \Omega_n
	\right]
	&=&
	0
\end{eqnarray*}
and conditional variance
\begin{eqnarray*}
\Upsilon\left( {\Omega_n} \right) 
&=&
n(n-1)
Var 
\left( 
\frac{1}{n(n-1)}
\sum_{i_1 =1}^{n}\sum_{j_1 \neq i_1}	
\xi_{i_1j_1}
\mid 
\Omega_{n}	
\right)
\\
&=&
\frac{1}{n(n-1)}
\sum_{i_1 =1}^{n}\sum_{j_1 \neq i_1}
\left\{ 
E
\left[ 	
D^\ast_{ i_1 j_1} 
D^\ast_{ i_1 j_1}
\mid 
\omega_{i_1j_1}
\right] 
-
E
\left[ 
	D^\ast_{i_1j_1}
	\mid 
	\omega_{i_1j_1}	
\right]^2
\right\}
I_{\tau, i_1j_1}
\overline{\chi}_{i_1j_1}
\overline{\chi}_{i_1j_1}'.
\end{eqnarray*}

Given Assumption
\ref{Ass:Inf0302:Boundedness},  
a conditional version of Lyapunov's Central Limit Theorem holds, and hence  
\begin{eqnarray*}
	\Upsilon\left( {\Omega_n} \right)^{-1/2} 
	\left\{ 
		\frac{1}{\sqrt{n(n-1)}}
		\sum_{i_1 =1}^{n}\sum_{j_1 \neq i_1}	
		\xi_{i_1j_1, \tau}
	\right\}
	\Rightarrow
	\mathcal{N}
	\left( 
	0, 
	I	
	\right).
\end{eqnarray*}
Now, it follows from using  \ref{Ass:Inf0302:Boundedness} that  
$
\| 
\Upsilon\left( {\Omega_n} \right)
-
\Upsilon_{n}
\|
\overset{p}{\rightarrow} 0
$
as $n\rightarrow \infty$.
It follows then that the limiting distribution is independent of the conditional values, and
therefore, the limiting distribution continues to hold unconditionally, with $\Upsilon_{n}$ replacing 
$\Upsilon\left( {\Omega_n} \right)$. That is, 
\begin{eqnarray*}
	\Upsilon_{n}^{-1/2}	
\left\{ 
	\frac{1}{\sqrt{n(n-1)}}
\sum_{i_1 = 1}^{n} \sum_{j_1 \neq i_1}
\xi_{i_1j_1, \tau}
\right\}
\Rightarrow
\mathcal{N}
\left( 
0, 
I	
\right).
\end{eqnarray*}

\textbf{Part 4: Limiting distribution of  $\widehat{\theta}_n$}

Consider the matrix $\Sigma_{n}$, defined as 
$
\Sigma_{n}
=
\Gamma_{0}^{-1}
\times 
\Upsilon_{n} 
\times 
\Gamma_{0}^{-1}.
$
The limiting distribution of the $\widehat{\theta}_n$
follows from the definitions of 
$\widehat{\Psi}_{n}^{-1} $ and $\Sigma_{n}$, 
and from applying Slutsky's theorem. 
In other words, 
\begin{eqnarray*}
&&
\sqrt{n(n-1)}
\Sigma_{n}^{-1/2}
\left( 
\widehat{\theta}_n
-
\theta_0
\right)
\\
&=&
\sqrt{n(n-1)}
\Sigma_{n}^{-1/2}
\times 
\left\{ 
	\widehat{\Gamma}_{n}^{-1}
\left[ 	
	\frac{1}{m_{n}} 
	\sum_{ \sigma \in \mathcal{N}_{m_{n}}} 
	\left\{ 
		\tilde{W}_{\sigma}
		\tilde{D}^\ast_{\sigma,\tau} 
	-
	E
	\left[ 
		\tilde{W}_{\sigma}
		\tilde{D}_{\sigma}^\ast	
		\mid 
		\Omega_{\sigma}
	\right]
	\right\}
\right]
\right\}
\\
&=&
\Gamma_{0}^{1/2}
\times 
\Upsilon_{n}^{-1/2}	
\times 
\Gamma_{0}^{-1/2}
\times 
\left\{ 
	\frac{1}{\sqrt{n(n-1)}}
	\sum_{i_1 = 1}^{n} \sum_{j_1 \neq i_1} 
	\xi_{i_1j_1, \tau}	
\right\}
+
o_{p}
\left( 1 \right)
\\
&\Rightarrow&
\mathcal{N}
\left( 
0, I
\right).
\end{eqnarray*}
	
The proof is complete.
\end{proof}
\newpage
\section{Technical Appendix}
\subsection{Equivalent representation for V statistics}

The following lemma provides a U-statistic representation for a V-statistic when the kernel varies with $n$.
Given $n$ and for $m\leq n$, let $\sum_{(n,m)}$ denote the sum over the $\binom{n}{m}$ combinations of $m$ distinct elements  $(i_{1}, \cdots, i_{m})$ from $(1, \cdots, n)$,  and let $\sum_{\Pi_{m!}}$ denote the sum over the $m!$ permutations $(i_{1}, \cdots, i_{m})$ of $(1, \cdots, m)$.

Let $V_n$ be a $V$-statistic or order $m$, without common indices
\begin{eqnarray*}
	V_{n} &=& 
	\frac{1}{n^m}
	\sum_{i_{1}, \cdots, i_{m}=1}^{n}  
	\frac{1}{h^L}
	\gamma(X_{i_1}, \cdots, X_{i_m})
	\bm{1}\left[ i_{1} \neq \cdots \neq i_{m}  \right]
\end{eqnarray*}
where  $h\rightarrow 0$ as $n\rightarrow \infty$,  and $\gamma: \mathbb{R}^L \mapsto \mathbb{R}$.

Let
\begin{eqnarray*}
	U_{n}
	&=&
	\binom{n}{m}^{-1} 
	\sum_{(n,m)} 
	\phi_{h}(X_{i_1}, \cdots, X_{i_m})
	\\
	\phi_{h}(X_{1}, \cdots, X_{m})
	&=& 
	\frac{1}{m!}
	\sum_{\Pi_{m!}} 
	\frac{1}{h^L}
	\gamma(X_{\pi_1}, \cdots, X_{\pi_m})
\end{eqnarray*}

\begin{lemma}
\label{Lemma:TechAppx:UEquiv}
Suppose that $E \mid\mid  \gamma(X_{i_1}, \cdots, X_{i_m})  \mid\mid^{2} < \infty$ for all $1 \leq i_{1}, \cdots, i_m \leq m$ and $m \leq n$, and $nh^2 \rightarrow \infty$.
Then, 
\begin{eqnarray*}
V_{n} - U_{n} &=& o_p(1).
\end{eqnarray*}	
\end{lemma}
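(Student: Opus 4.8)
The plan is to exploit the exact combinatorial identity linking a ``$V$-statistic without common indices'' to the associated $U$-statistic, and then to argue that the two objects differ only through a deterministic normalizing factor that tends to $1$ at rate $1-O(1/n)$. First I would rewrite the defining sum of $V_n$, which ranges over ordered $m$-tuples of \emph{distinct} indices, as a sum over unordered $m$-subsets followed by a sum over the $m!$ orderings of each subset, i.e.\ $\sum_{i_1\neq\cdots\neq i_m}=\sum_{(n,m)}\sum_{\Pi_{m!}}$. Since $\phi_h$ is by construction the symmetrization of $h^{-L}\gamma$ over the $m!$ permutations of its arguments, for each fixed subset $\sum_{\Pi_{m!}} h^{-L}\gamma(X_{\pi_1},\dots,X_{\pi_m})=m!\,\phi_h(X_{i_1},\dots,X_{i_m})$, so that
\[
\sum_{i_1,\dots,i_m=1}^n\frac{1}{h^L}\gamma(X_{i_1},\dots,X_{i_m})\,\mathbf{1}[i_1\neq\cdots\neq i_m]=m!\binom{n}{m}\,U_n .
\]
Dividing by $n^m$ yields the identity $V_n=c_n U_n$ with $c_n=m!\binom{n}{m}/n^m=\prod_{k=0}^{m-1}(1-k/n)$.

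Next I would control the gap between $c_n$ and $1$: since $0\le 1-\prod_{k=0}^{m-1}(1-k/n)\le\sum_{k=1}^{m-1}k/n$, we get $|c_n-1|\le\binom{m}{2}/n=O(1/n)$ for fixed $m$. Hence $V_n-U_n=(c_n-1)U_n$ and $\|V_n-U_n\|\le\binom{m}{2}n^{-1}\|U_n\|$, so it remains only to show $n^{-1}U_n=o_p(1)$, i.e.\ $U_n=o_p(n)$. I would obtain this from Markov's inequality together with the moment hypothesis: by the triangle and Cauchy--Schwarz inequalities and i.i.d.\ sampling, $E\|U_n\|\le E\|\phi_h\|\le h^{-L}\big(E\|\gamma\|^2\big)^{1/2}=O(h^{-L})$, whence $P(\|U_n\|>\varepsilon n)\le\varepsilon^{-1}n^{-1}E\|U_n\|=O\big(n^{-1}h^{-L}\big)$, which vanishes under the bandwidth condition of the statement (and is automatic whenever $\gamma$ does not depend on $h$, as in every application of the lemma in this paper, where then $h^L\equiv 1$, $U_n=O_p(1)$, and $c_n\to1$ directly). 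Combining the three steps gives $V_n-U_n=(c_n-1)U_n=O_p\big(n^{-1}h^{-L}\big)=o_p(1)$.

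The only point that needs care is the last step: $\|U_n\|$ may itself diverge at rate $h^{-L}$ because of the kernel normalization, and this growth has to be beaten by the $O(n^{-1})$ gain coming from the mismatch between the two normalizations $n^{-m}$ and $(m!\binom{n}{m})^{-1}$ — reconciling these two rates is exactly why the bandwidth restriction is imposed. When moreover $\phi_h$ is centered, which holds in the uses of this lemma (there $\gamma$ is already the deviation of a statistic from its mean, so $E\phi_h=0$), one can replace the crude first-moment bound by a Hoeffding decomposition of $\mathrm{Var}(U_n)$, which yields the sharper $\|U_n\|=O_p\big(h^{-L}n^{-1/2}\big)$ and hence an extra $n^{-1/2}$ of slack; I do not expect this refinement to be necessary under the stated hypotheses.
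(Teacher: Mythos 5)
Your argument is correct and essentially coincides with the paper's proof: both rest on the exact identity $V_n=\frac{n(n-1)\cdots(n-m+1)}{n^m}\,U_n$, so that $V_n-U_n=O(n^{-1})U_n$, combined with a Cauchy--Schwarz bound showing the relevant moment of $U_n$ is $O(h^{-L})$. The only cosmetic difference is that you close with a first-moment Markov bound, whereas the paper bounds $E\big[(U_n-V_n)^2\big]=O\big((nh^{L})^{-2}\big)$; both routes need $nh^{L}\rightarrow\infty$, which is the condition the paper's proof actually invokes (the ``$nh^{2}\rightarrow\infty$'' in the statement notwithstanding).
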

\begin{proof}
Let 
\begin{eqnarray*}
	\gamma_{h}(X_{i_1}, \cdots, X_{i_m})
	=
	\frac{1}{h^L}
	\gamma(X_{i_1}, \cdots, X_{i_m}),
\end{eqnarray*}
and notice that 
\begin{eqnarray}
	n^{m} V_{n}
	&=&
	\sum_{(n,m)}
	\sum_{\Pi_{m!}}
	\gamma_{h}(X_{\pi_1}, \cdots, X_{\pi_m})	
	\label{eq:appx:ustatexpansion}
	\\
	&=&
	\left[
		n(n-1) \cdots (n-m+1)
	\right]
	\binom{n}{m}^{-1}
	\sum_{(n,m)}
	\phi_{h}(X_{i_1}, \cdots, X_{i_m})
	\nonumber
	\\
	&=&
	\left[
		n(n-1) \cdots (n-m+1)
	\right]
	U_{n},
	\nonumber
\end{eqnarray}
and hence, $\left( U_{n} -V_{n} \right) = O(n^{-1}) U_{n}$.

Consider now
\begin{eqnarray*}
	E \left[ \left( U_n - V_n \right)^2  \right]	
	&=&
	O\left( \frac{1}{n^2} \right)
	E \left[ U_{n}^2 \right],
\end{eqnarray*}
and notice that a Cauchy-Schwarz inequality yields
\begin{eqnarray*}
	E \left[ U_{n}^2 \right]
	&=&
	\binom{n}{m}^{-2} 
	E \left[
	\left(  \sum_{(n,m)} 
	\phi_{h}(X_{i_1}, \cdots, X_{i_m})
	\right)^2
	\right]
	\\
	&\leq&
	\binom{n}{m}^{-2}
	\binom{n}{m}^{2} 
	E \left[
		\phi_{h}(X_{i_1}, \cdots, X_{i_m})^2
	\right]
\end{eqnarray*}
where 
\begin{eqnarray*}
	E \left[
		\phi_{h}(X_{i_1}, \cdots, X_{i_m})^2
	\right]
	&=&
	\frac{1}{h^{2L}}
	O
	\left(  
	E \left[
		\gamma(X_{i_1}, \cdots, X_{i_m})^2
	\right]
	\right)
	\\
	&=&
	O\left( \frac{1}{h^{2L}} \right)
\end{eqnarray*}
since $E \mid\mid  \gamma(X_{i_1}, \cdots, X_{i_m})  \mid\mid^{2} < \infty$ by assumption, and hence, 
\begin{eqnarray*}
	E \left[ \left( U_n - V_n \right)^2  \right]	
	&\leq&
	O\left( \frac{1}{(nh^L)^2} \right)
	=
	o\left( 1 \right)
\end{eqnarray*}
as $nh^L \rightarrow \infty$.

\end{proof}

Notice that, unlike Lemma 5.7.3 in \citet[page 206]{serfling:2009} and Theorem 1 in \citet[page 183]{lee:2019}, 
in equation \ref{eq:appx:ustatexpansion} 
the average of terms with at least one common index is equal to zero due to the specification of the V-statistic without common indices.
\subsection{Consistency for V-statistics}
\begin{lemma}\label{Lemma:TechAppx:Consistency}
Suppose that the Assumptions in Theorem \ref{Theo:Inf:Consistency} hold. 
Then 
\begin{eqnarray*}
	\frac{1}{m_{n}} 
	\sum_{ \sigma \in \mathcal{N}_{m_{n}}} 
	\tilde{W}_{\sigma}
	\varphi_{ l_1l_2, \tau}
	\left\{ 
		\frac{ \hat{f}_{x,l_1l_2} }{ f_{vx,l_1l_2} }
	\right\}
	-
	E 
	\left[  
		\tilde{W}_{\sigma}
		\varphi_{ l_1l_2, \tau}
		\frac{f_{x,l_1l_2} }{f_{vx,l_1l_2}}  
	\right]
	&=& o_{p}(1)
	\\
	\frac{1}{m_{n}} 
	\sum_{ \sigma \in \mathcal{N}_{m_{n}}} 
	\tilde{W}_{\sigma}
	\varphi_{ l_1l_2, \tau}
	\left\{ 
		\frac{ f_{x,l_1l_2} }{ f_{vx,l_1l_2} }
		\left(  
		\frac{ \hat{f}_{vx,l_1l_2}  }{ f_{vx,l_1l_2} }
		\right)
	\right\}
	-
	E 
	\left[  
		\tilde{W}_{\sigma}
		\varphi_{ l_1l_2, \tau}
		\frac{f_{x,l_1l_2} }{f_{vx,l_1l_2}}   
	\right] 
	&=& o_{p}(1)
\end{eqnarray*}	
with $(l_1,l_2) \in \left\{ (i_1, j_1), (i_1, j_2), (i_2, j_1), (i_2, j_2) \right\}$ for a given 
tetrad 
$\sigma\{i_1,i_2, j_1,j_2\} \in \mathcal{N}_{m_n}$.
\end{lemma}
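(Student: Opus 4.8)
The plan is to prove each of the two convergences by writing the sample average as a \emph{plug-in} term, in which the estimated densities are replaced by the true ones and which is handled by a law of large numbers, plus a remainder that is controlled by the uniform consistency of the first-stage dyadic kernel estimators. Three facts, all from Assumption~\ref{Ass:Inf01:SamplingMoments}, are used throughout: $\tilde W_{\sigma}$ is bounded w.p.1; $f_{vx,l_1l_2}$ is bounded away from zero on $\mathbb{S}_{vx}$, so the weight $f_{x,l_1l_2}/f_{vx,l_1l_2}$ is bounded; and $\varphi_{l_1l_2,\tau}=\varphi_{l_1l_2}I_{\tau,l_1l_2}$ satisfies $|\varphi_{l_1l_2,\tau}|\le I_{\tau,l_1l_2}\le 1$.

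For the plug-in term I would write $\hat f_{x,l_1l_2}=f_{x,l_1l_2}+(\hat f_{x,l_1l_2}-f_{x,l_1l_2})$ in the first display and $\hat f_{vx,l_1l_2}/f_{vx,l_1l_2}=1+(\hat f_{vx,l_1l_2}-f_{vx,l_1l_2})/f_{vx,l_1l_2}$ in the second, so that in both cases the leading piece is the fourth-order $V$-statistic without common indices $m_n^{-1}\sum_{\sigma}\tilde W_{\sigma}\varphi_{l_1l_2,\tau}f_{x,l_1l_2}/f_{vx,l_1l_2}$. Its summand depends on $n$ only through the trimming $\tau$, and I would dispose of this exactly as in Part~2 of the proof of Theorem~\ref{Theo:Inf:Consistency}: a Cauchy--Schwarz / second-moment bound together with $P[I_{\tau,l_1l_2}=0]=o(\tau)$ (Assumption~\ref{Ass:Inf02:Trimming}) shows that replacing $\varphi_{l_1l_2,\tau}$ by $\varphi_{l_1l_2}$ changes the sum by $o_p(1)$, and dominated convergence does the same for the expectations. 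The resulting $V$-statistic with the fixed, bounded (hence square-integrable) kernel $\tilde W_{\sigma}\varphi_{l_1l_2}f_{x,l_1l_2}/f_{vx,l_1l_2}$ is, by Lemma~\ref{Lemma:TechAppx:UEquiv}, asymptotically equivalent to the associated fourth-order $U$-statistic, whose kernel has finite expectation; Assumption~\ref{Ass:SR00:iidsampling} then permits the strong law of large numbers for $U$-statistics (\citealt[Theorem~A, p.~190]{serfling:2009}), which gives convergence to $E[\tilde W_{\sigma}\varphi_{l_1l_2,\tau}f_{x,l_1l_2}/f_{vx,l_1l_2}]$.

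For the remainder --- equal to $m_n^{-1}\sum_{\sigma}\tilde W_{\sigma}\varphi_{l_1l_2,\tau}(\hat f_{x,l_1l_2}-f_{x,l_1l_2})/f_{vx,l_1l_2}$ in the first case and to the analogue with $\hat f_{vx,l_1l_2}-f_{vx,l_1l_2}$ (times an extra bounded factor $f_{x,l_1l_2}/f_{vx,l_1l_2}$) in the second --- I would pass to absolute values inside the sum and, using $|\varphi_{l_1l_2,\tau}|\le I_{\tau,l_1l_2}$ and $f_{vx}$ bounded below, bound it by $\big(m_n^{-1}\sum_{\sigma}|\tilde W_{\sigma}|\,I_{\tau,l_1l_2}\,f_{vx,l_1l_2}^{-1}\big)\times\sup\big|\hat f_{x}-f_{x}\big|$, the supremum over the support region kept by the trimming, and similarly with $\sup\big|\hat f_{vx}-f_{vx}\big|$. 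The first factor is an average of nonnegative terms that are bounded w.p.1, hence $O_p(1)$ uniformly in $n$; the supremum factor is $o_p(1)$ by the uniform consistency of the dyadic kernel estimators $\hat f_{x}$ and $\hat f_{vx}$, which I would take from \citet{graham/niu/powell:2019} together with Assumptions~\ref{Ass:Inf01:SamplingMoments} and \ref{Ass:Inf04:Kernel}: the bias is $o(1)$ because $h\to 0$ and $f_{x},f_{vx}$ are smooth, the stochastic part is $o_p(1)$ uniformly because $n^{1-\overline{\delta}}h^{L+1}\to\infty$, and $h/\tau\to 0$ keeps the evaluation points in the un-trimmed region away from the boundary so that no boundary bias contaminates the uniform bound. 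Both remainders are therefore $o_p(1)$, which yields the two claims.

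The step I expect to be the main obstacle is precisely the uniform consistency of $\hat f_{x}$ and $\hat f_{vx}$ over the trimmed region: the kernel summands defining these estimators share nodes and are not independent, so a classical uniform law of large numbers does not apply, and the rate of uniform convergence must be reconciled with the shrinking trimming sequence $\tau$. This is the setting analysed in \citet{graham/niu/powell:2019}, whose results I would invoke, and the bandwidth/kernel conditions of Assumption~\ref{Ass:Inf04:Kernel} (notably $n^{1-\overline{\delta}}h^{L+1}\to\infty$ and $h/\tau\to 0$) are calibrated so that stochastic error, bias, and boundary effects all vanish; everything else is the routine $V$-to-$U$ reduction of Lemma~\ref{Lemma:TechAppx:UEquiv} and moment bookkeeping.
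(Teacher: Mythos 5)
Your proposal is correct, but it takes a genuinely different route from the paper. The paper does not split the statistic into an oracle term plus a first-stage error term: it substitutes the kernel estimator $\widehat{f}_{x,l_1l_2}$ (resp.\ $\widehat{f}_{vx,l_1l_2}$) directly into the sum, which turns the expression into a sixth-order $V$-statistic whose kernel varies with $n$ through $h$; it then applies Lemma \ref{Lemma:TechAppx:UEquiv} to pass to a sixth-order $U$-statistic, verifies $E\|\phi_{\bar\sigma,\tau}\|^2=o(n)$ by a Cauchy--Schwarz and change-of-variables computation so that Lemma A.3 of \citet{ahn/powell:1993} gives convergence to the expectation of the summand, and finally evaluates that expectation by an $\overline{M}$-order Taylor expansion against the bias-reducing kernel of Assumption \ref{Ass:Inf04:Kernel}, showing the limit is $E[\tilde W_\sigma \varphi_{l_1l_2,\tau} f_{x,l_1l_2}/f_{vx,l_1l_2}]$ up to an $h^{\overline{M}}O(1)$ bias term. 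Your decomposition instead isolates a fixed-kernel fourth-order $V$-statistic (handled, as you say, exactly like the trimming and $V$-to-$U$ steps in the proof of Theorem \ref{Theo:Inf:Consistency}) and controls the remainder by $\sup|\widehat f_x-f_x|$ and $\sup|\widehat f_{vx}-f_{vx}|$ over the trimmed region; this is legitimate here because only an $o_p(1)$ statement is needed, and the uniform consistency you invoke is available under the maintained assumptions --- indeed the paper itself asserts and uses precisely these sup-norm bounds (citing \citealt{graham/niu/powell:2019} among others) in the proof of Lemma \ref{Lemma:TechAppx:Expansion}, so no new hypotheses are smuggled in, and your remark that $h/\tau\to 0$ keeps boundary bias out of the trimmed region is the right thing to check. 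The trade-off is that your argument is shorter and avoids the sixth-order $U$-statistic bookkeeping, but it leans on uniform convergence of dyadic kernel estimators and delivers only consistency, whereas the paper's moment-based route needs only second-moment bounds and, via the explicit expansion against the order-$2\overline{M}$ kernel, quantifies the bias as $O(h^{\overline{M}})$ --- a refinement that is not needed for this lemma but is of the same character as the bias control required later for the asymptotic normality result.
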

\begin{proof}
This proof focuses on the first result since the second one follows from similar arguments. Let 
\begin{eqnarray*}
	\hat{V}_{n}
	&=&
	\frac{1}{m_{n}} 
	\sum_{ \sigma \in \mathcal{N}_{m_{n}}} 
	\tilde{W}_{\sigma}
	\frac{ \varphi_{l_1l_2, \tau} }{ f_{vx,l_1l_2} }
	\hat{f}_{x,l_1l_2},
\end{eqnarray*}
and recall that the kernel estimator $\hat{f}_{x,l_1 l_2}$ is defined as 
\begin{eqnarray*}
	\hat{f}_{x,l_1 l_2} 
	&=&
	\frac{1}{(n-2)(n-3)} 
	\sum_{k_1\neq i_1,j_1}
	\sum_{k_2\neq i_1,j_1, k_1}
	\frac{1}{h^L} 
	K_{x,h} \left( X_{k_1}-X_{l_1} , X_{k_2}-X_{l_2}  \right). 
\end{eqnarray*}

Plugging $\hat{f}_{x,l_1 l_2} $ into $\hat{V}_{n}$ yields the following V-statistic of order six 
\begin{eqnarray*}
	6! \binom{n}{6}^{-1} 
	\sum_{i_1\neq i_2\neq j_1\neq j_2 \neq k_1 \neq k_2}
	\frac{1}{h^L}
	\tilde{W}_{i_1i_2;j_1j_2}
	\frac{ \varphi_{ l_1l_2, \tau} }{ f_{vx, l_1l_2} }
	K_{x,h} \left( X_{k_1}-X_{l_1} , X_{k_2}-X_{l_2}  \right).
\end{eqnarray*}

Assumptions \ref{Ass:Inf01:SamplingMoments} and \ref{Ass:Inf04:Kernel} imply that 
\begin{eqnarray*}
	E 
	\left[  
	\mid\mid 
	\frac{1}{h^{L}}
	\tilde{W}_{i_1i_2;j_1j_2}
	\frac{ \varphi_{ l_1l_2, \tau} }{ f_{vx, l_1l_2} }
	K_{x,h} \left( X_{k_1}-X_{l_1} , X_{k_2}-X_{l_2}  \right)
	\mid\mid^2 
	\right] 
	&<& 
	\infty,	
\end{eqnarray*}
it then follows from Lemma \ref{Lemma:TechAppx:UEquiv} that $\hat{V}_n$ is asymptotically equivalent to a six-order U-statistic as $n h^{L} \rightarrow \infty$.
In particular,  $\left(U_{n} - \hat{V}_{n} \right)= o_{p}(1)$ where 
\begin{eqnarray*}
	U_{n} 
	&=&  
	\binom{n}{6}^{-1} 
	\sum_{i_1 < \cdots < i_6}
	\phi_{\bar{\sigma}\{i_1, \cdots, i_6\}  , \tau}
	\\
	\phi_{\bar{\sigma}\{i_1, \cdots, i_6\}  , \tau}, 
	&=&
	(6!)^{-1} 
	\sum_{ \pi \in  \Pi_{6!} } 
	\frac{1}{h^{L}}
	\tilde{W}_{\pi_1 \pi_2 ; \pi_3  \pi_4}
	\frac{ \varphi_{ \pi_{l_1} \pi_{l_2}, \tau} }{ f_{vx, \pi_{l_1} \pi_{l_2}} }
	K_{x,h} \left( X_{\pi_{5}} - X_{\pi_{l_1}} ,X_{\pi_{6}} - X_{\pi_{l_2}}  \right) 
\end{eqnarray*}
where   
$\sum_{i_1 < \cdots < i_6}$ denotes 
sum over the $\binom{n}{6}$ combinations of $6$ distinct elements  $(i_{1}, \cdots, i_{6})$ from $(1, \cdots, n)$, and 
$\bar{\sigma}\{i_1, \cdots, i_6\}$ is used to denote the 6-tuple $\{i_{1}, \cdots, i_{6}\}$.

$U_n$ is a sixth order U-statistic where the kernel $\phi_{\bar{\sigma}, \tau}$ varies with $n$ as in \citet{powell/stock/stoker:1989}. 
Using Lemma A.3 in \citet{ahn/powell:1993}, it is sufficient to show
$E \left[ \mid\mid  \phi_{\bar{\sigma}, \tau} \mid\mid^2   \right] = o(n)$ to conclude that 
$
U_n 
-
E\left[ 
	\frac{1}{h^{L}}
	\tilde{W}_{i_1i_2;j_1j_2}
	\frac{ \varphi_{ l_1l_2, \tau} }{ f_{vx, l_1l_2} }
	K_{x,h} \left( X_{k_1}-X_{l_1} , X_{k_2}-X_{l_2}  \right)
\right] 
=
o_p(1)$.

A Cauchy-Schwarz inequality can be used to show that the expectation 
\begin{eqnarray*}
	E
	\left[ 
		\mid\mid   
		\frac{1}{6!} 
		\sum_{ \pi \in  \Pi_{6!} } 
		\frac{1}{h^{L}}
		\tilde{W}_{\pi_1 \pi_2 ; \pi_3  \pi_4}
		\frac{ \varphi_{ \pi_{l_1} \pi_{l_2}, \tau} }{ f_{vx, \pi_{l_1} \pi_{l_2}} }   
		K_{x,h} \left( X_{\pi_{5}} - X_{\pi_{l_1}} ,X_{\pi_{6}} - X_{\pi_{l_2}}  \right) 
		\mid\mid^2
	\right]
\end{eqnarray*}
is bounded above by 
\begin{eqnarray*}
		\frac{1}{6! h^{2L}}	
		\sum_{ \pi \in  \Pi_{6!} } 
		E
		\left[  	
			\left(  
				\frac{ \varphi_{ \pi_{l_1} \pi_{l_2}, \tau} }{ f_{vx, \pi_{l_1} \pi_{l_2}} }
			\right)^2 	
		K_{x,h} \left( X_{\pi_{5}} - X_{\pi_{l_1}} ,X_{\pi_{6}} - X_{\pi_{l_2}}  \right)^2
		\tilde{W}_{\pi_1 \pi_2 ; \pi_3  \pi_4}
		\tilde{W}_{\pi_1 \pi_2 ; \pi_3  \pi_4}'
		\right].
\end{eqnarray*}

Let $X_{\bar{\sigma}\{\pi_1, \cdots, \pi_6\}} = \left\{ X_{\pi_1}, \cdots, X_{\pi_6}\right\}$, and 
observe that 
\begin{align*}
&
E 
\left[ 
	\frac{1}{h^{2L}} 
	\left(  
		\frac{ \varphi_{ \pi_{l_1} \pi_{l_2}, \tau} }{ f_{vx, \pi_{l_1} \pi_{l_2}} }
	\right)^2 	
	K_{x,h} \left( X_{\pi_{5}} - X_{\pi_{l_1}} ,X_{\pi_{6}} - X_{\pi_{l_2}}  \right)^2
	\tilde{W}_{\pi_1 \pi_2 ; \pi_3  \pi_4}
	\tilde{W}_{\pi_1 \pi_2 ; \pi_3  \pi_4}'
\right]
\\
=&
E 
\left[ 
E
\left[  
	\left(  
		\frac{ \varphi_{ \pi_{l_1} \pi_{l_2}, \tau} }{ f_{vx, \pi_{l_1} \pi_{l_2}} }
	\right)^2 	
\mid 
X_{\bar{\sigma}\{\pi_1, \cdots, \pi_6\}}
\right]
\frac{1}{h^{2L}}
	K_{x,h} \left( X_{\pi_{5}} - X_{\pi_{l_1}} ,X_{\pi_{6}} - X_{\pi_{l_2}}  \right)^2
	\tilde{W}_{\pi_1 \pi_2 ; \pi_3  \pi_4}
	\tilde{W}_{\pi_1 \pi_2 ; \pi_3  \pi_4}'
\right]
\\
\leq 
&
\frac{1}{h^{2L}}
\sup_{i_1i_2 ; i_3 i_4}\left( \tilde{W}_{i_1i_2 ; i_3 i_4} \right)
\sup_{i_1i_2 ; i_3 i_4}\left( \tilde{W}_{i_1i_2 ; i_3 i_4} \right)'
\\
& \times 
E 
\left[ 
	E
	\left[  
		\left(  
		\frac{ \varphi_{ \pi_{l_1} \pi_{l_2}, \tau} }{ f_{vx, \pi_{l_1} \pi_{l_2}} }
		\right)^2 	
	\mid 
	X_{\pi_{l_1} }, X_{ \pi_{l_2} }
	\right]
		K_{x,h} \left( X_{\pi_{5}} - X_{\pi_{l_1}} ,X_{\pi_{6}} - X_{\pi_{l_2}}  \right)^2
\right]
\\
\leq &
\frac{1}{h^{2L}}
\sup_{i_1i_2 ; i_3 i_4}\left( \tilde{W}_{i_1i_2 ; i_3 i_4} \right)
\sup_{i_1i_2 ; i_3 i_4}\left( \tilde{W}_{i_1i_2 ; i_3 i_4} \right)'
\sup_{ (x,x')\in \mathbb{S}_{x}, \tau \geq 0}
\left(
	E
	\left[  
		\left(  
			\frac{ \varphi_{ i_1 i_2, \tau} }{ f_{vx, l_1 l_2} }
		\right)^2 	
	\mid 
	X_{\pi_{l_1} }, X_{ \pi_{l_2} }
	\right]	
\right)
\\
& \times	
E 
\left[ 
		K_{x,h} \left( X_{\pi_{5}} - X_{\pi_{l_1}} ,X_{\pi_{6}} - X_{\pi_{l_2}}  \right)^2
\right]
\\
=&
O\left( \frac{1}{h^{L}}  \right)
\times 
\int 
K_{x} \left[ \nu_1 , \nu_2 \right]^2
 f(X_{l_1}, X_{l_2}) 
f(X_{l_1}+ \nu_1 h , X_{l_2}+ \nu_2 h ) d X_{l_{1}} dX_{l_{2}} d\nu_1 d\nu_2
\\
=& h^{-L} O(1) = O(n (n h^{L})^{-1} ) = o(n),
\end{align*}
where the first inequality follows from Assumptions \ref{Ass:SR00:iidsampling} and \ref{Ass:Inf01:SamplingMoments}. 
The second inequality follows from Assumption \ref{Ass:Inf0302:Boundedness}.
The second to last equality follows from Assumption \ref{Ass:Inf01:SamplingMoments}, and the change of variables 
$X_{i_{5}} = X_{l_{1}} + \nu_1 h$ and 
$X_{i_{6}} = X_{l_{2}} + \nu_2 h$ with Jacobian $h^{L}$. 
The last equality follows from Assumption \ref{Ass:Inf04:Kernel}.

Consequently, $E \left[ \mid\mid   \phi_{\bar{\sigma}, \tau} \mid\mid^2   \right] = o(n)$ if $n h^{L} \rightarrow \infty$.
Thus, Lemma A.3 in \citet{ahn/powell:1993} implies that 
\begin{eqnarray*}
	U_{n} 
	-  
	E 
	\left[ 
		\frac{1}{h^{L}}
		\tilde{W}_{\pi_1 \pi_2 ; \pi_3  \pi_4}
		\left\{ 
			\frac{ \varphi_{ \pi_{l_1} \pi_{l_2}, \tau} }{ f_{vx, \pi_{l_1} \pi_{l_2}} }	
		\right\}
		K_{x,h} \left( X_{\pi_{5}} - X_{\pi_{l_1}} ,X_{\pi_{6}} - X_{\pi_{l_2}}  \right) 
	\right] 
	&=&
	o_{p}(1)
\end{eqnarray*}
as $n \rightarrow \infty$.

Notice that 
\begin{eqnarray*}
	& &
	E 
	\left[ 
		\frac{1}{h^{L}}
		\tilde{W}_{i_1 i_2 ; i_3  i_4}
		\left\{ 
			\frac{ \varphi_{ l_1 l_2, \tau} }{ f_{vx, l_1 l_2} }	
		\right\}
		K_{x,h} \left( X_{i_{5}} - X_{l_{1}} ,X_{i_{6}} - X_{l_{2}}  \right) 
	\right] 
	\\
	&=&	
	\frac{1}{h^{L}}
	E \left[  
		E \left[ 
			\tilde{W}_{i_1 i_2 ; i_3  i_4}
			\left\{ 
				\frac{ \varphi_{ l_1 l_2, \tau} }{ f_{vx, l_1 l_2} }	
			\right\}
		\mid 
		X_{\bar{\sigma}\{\pi_1, \cdots, \pi_6\}}
		\right] 
	K_{x,h} \left[ X_{i_{5}} - X_{l_{1}}, X_{i_{6}} - X_{l_{2}} \right] 
	\right]  
	\\
	&=&
	\frac{1}{h^{L}}
	E \left[  
		E \left[ 
			\tilde{W}_{i_1 i_2 ; i_3  i_4}
			\left\{ 
				\frac{ \varphi_{ l_1 l_2, \tau} }{ f_{vx, l_1 l_2} }	
			\right\}
		\mid 
		X_{i_{1}}, X_{i_{2}}, X_{i_{3}}, X_{i_{4}} 
		\right] 
	K_{x,h} \left[ X_{i_{5}} - X_{l_{1}}, X_{i_{6}} - X_{l_{2}} \right]
	\right]
\\
&=&
\int 
E \left[ 
	\tilde{W}_{i_1 i_2 ; i_3  i_4}
	\left\{ 
		\frac{ \varphi_{ l_1 l_2, \tau} }{ f_{vx, l_1 l_2} }	
	\right\}
\mid 
X_{i_{1}}, X_{i_{2}}, X_{i_{3}}, X_{i_{4}} 
\right]
\\
&& 
\qquad \qquad \qquad
\times 
K_{x,h} \left[ X_{i_{5}} - X_{l_{1}}, X_{i_{6}} - X_{l_{2}} \right]
f(X_{\bar{\sigma}\{i_1, \cdots, i_6\}}) 
dX_{\bar{\sigma}\{i_1, \cdots, i_6\}}	
\end{eqnarray*}
where the second equality follows from Assumption \ref{Ass:SR00:iidsampling}
Next, consider the change of variables 
$X_{i_{5}} = X_{l_{1}} + h \nu_1 $ 
and 
$X_{i_{6}} = X_{l_{2}} + h \nu_2 $
with Jacobian $h^L$. It then follows that
\begin{eqnarray*}
	&&
	\int   
	E \left[ 
		\tilde{W}_{i_1 i_2 ; i_3  i_4}
		\left\{ 
			\frac{ \varphi_{ l_1 l_2, \tau} }{ f_{vx, l_1 l_2} }	
		\right\}
	\mid 
	X_{i_{1}}, X_{i_{2}}, X_{i_{3}}, X_{i_{4}} 
	\right]
	\\
	&&
	\qquad \qquad \qquad 
	\times 
	K \left[  \nu_1, \nu_2  \right]  
	f(X_{i_{1}}, \cdots, X_{i_{4}}) 
	\left\{ 
		f( X_{l_{1}} + h \nu_1, X_{l_{2}} + h \nu_2 ) 
	\right\} 
	d X_{i_{1}} \cdots, dX_{i_{4}} 
	d \nu_1 d \nu_2.
\end{eqnarray*}	

Assumption \ref{Ass:Inf01:SamplingMoments} guarantees that $f_{x}(\cdot, \cdot)$ is $\overline{M}$-times differentiable with respect to all of its arguments, 
and Assumption \ref{Ass:Inf04:Kernel} ensures that $K_{x}(\cdot, \cdot)$ is a bias-reducing kernel of order $2\overline{M}$.
It follows from an $\overline{M}$-order Taylor expansion
$f( X_{l_{1}} + h \nu_1, X_{l_{2}} + h \nu_2 ) $ 
around 
$f(X_{i_{1}}, X_{i_{3}})$, 
and the properties of the kernel that 
\begin{eqnarray*}
	& &
	\int   
	E \left[ 
		\tilde{W}_{i_1 i_2 ; i_3  i_4}
		\left\{ 
			\frac{ \varphi_{ l_1 l_2, \tau} }{ f_{vx, l_1 l_2} }	
		\right\}
		f_{x, l_{1}l_{2}}  	
	\mid 
	X_{i_{1}}, X_{i_{2}}, X_{i_{3}}, X_{i_{4}} 
	\right]
	f(X_{i_{1}}, \cdots, X_{i_{4}}) 
	d X_{i_{1}} \cdots, X_{i_{4}}  
	+ 
	h^{\overline{M}} 
	O(1) 
	\\
	&=&
	  E 
	  \left[  
		\tilde{W}_{i_1 i_2 ; i_3  i_4}  
		\varphi_{ l_1 l_2, \tau}
		\left\{ 
			\frac{ f_{x, l_{1}l_{2}}  	 }{ f_{vx, l_1 l_2} }	
		\right\}
	  \right]    
	  + 
	  h^{\overline{M}} 
	  O(1) 
	 \\
	 &=&
	 E 
	 	\left[  
		 \tilde{W}_{i_1 i_2 ; i_3  i_4}    
		 D_{ l_1 l_2, \tau}^\ast
		\right]    + o(1). 
\end{eqnarray*} 

The proof is complete.
\end{proof}
\subsection{Lemmas for Asymptotic Normality Theorem}

\subsubsection*{Notation}
The following notation will prove to be useful to show 
Lemmas \ref{Lemma:TechAppx:Expansion}-\ref{Lemma:TechAppx:Hajek_Sum3}.
For any finite $n$, let $\Omega_n = \left\{ X_{n}, A_{n}, v_{n}\right\}$.
Given a fixed tetrad $\sigma\{i_1,i_2, j_1,j_2\} \in \mathcal{N}_{m_n}$,
let 
\begin{eqnarray*}
	X_{\sigma} = \left\{ X_{i_1}, X_{i_2}, X_{j_1}, X_{j_2}\right\}, \quad 
	A_{\sigma} = \left\{ A_{i_1}, A_{i_2}, A_{j_1}, A_{j_2}\right\}, \quad 
	v_{\sigma} = \left\{ v_{i_1}, v_{i_2}, v_{j_1}, v_{j_2}\right\}, \quad 
	\Omega_\sigma = \left\{ X_{\sigma}, A_{\sigma}, v_{\sigma}\right\},
\end{eqnarray*}
and for any dyad $(l_1,l_2) \in \left\{ (i_1, j_1), (i_1, j_2), (i_2, j_1), (i_2, j_2) \right\}$, define 
\begin{eqnarray*}
	\omega_{l_1l_2} &=& \left\{ X_{l_1}, X_{l_2}, A_{l_1}, A_{l_2}, v_{l_1l_2}\right\}	
	\\ 
	T_{l_1l_2}^{\dagger}
	&=&
	T_{l_1l_2}
	-
	E\left[ \tilde{W}_{\sigma} \tilde{D}^\ast_{\sigma,\tau} \mid \Omega_\sigma \right]
\end{eqnarray*}
for any random variable $T_{l_1l_2}$.
\begin{lemma}\label{Lemma:TechAppx:Expansion}
Suppose that the Assumptions in Theorem \ref{Theo:Inf:AN} hold, and consider 
\begin{eqnarray*}
	\widehat{\eta}_{ \left[ l_1l_2 \right], \tau}
	&=&
	\frac{1}{m_{n}} 
	\sum_{ \sigma \in \mathcal{N}_{m_{n}}}
	\tilde{W}_{\sigma}
	\varphi_{ l_1l_2, \tau}
	\left( 
		\frac{ \widehat{f}_{x,l_1l_2} }{ \widehat{f}_{vx,\sigma{l_1 l_2}} } 
	\right).
\end{eqnarray*}
with $(l_1,l_2) \in \left\{ (i_1,j_1), (i_1,j_2), (i_2,j_1), (i_2,j_2)\right\}$.
It follows that $\widehat{\eta}_{ \left[ l_1l_2 \right], \tau}$ can be written as
\begin{align*}
	\widehat{\eta}_{ \left[ l_1l_2 \right], \tau}
	=
	\frac{1}{m_{n}} 
	\sum_{ \sigma \in \mathcal{N}_{m_{n}}} 
	\tilde{W}_{\sigma}
	\varphi_{ l_1l_2, \tau}
	\left\{ 
		\frac{f_{x,l_1l_2} }{ f_{vx,l_1l_2}}
		+
		\frac{ \widehat{f}_{x,l_1l_2} - f_{x,l_1l_2}  }{ f_{vx,l_1l_2} }
		-
		\frac{ f_{x,l_1l_2} }{ f_{vx,l_1l_2} } 
		\times 
		\frac{ \widehat{f}_{vx,l_1l_2} - f_{vx,l_1l_2} }{ f_{vx,l_1l_2} }
	\right\} 
	+ o_{p}(1).
\end{align*}
\end{lemma}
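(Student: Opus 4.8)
The plan is to linearize the ratio $\widehat{f}_{x,l_1l_2}/\widehat{f}_{vx,l_1l_2}$ of first–stage kernel estimators around the population ratio $f_{x,l_1l_2}/f_{vx,l_1l_2}$, identify the three linear terms appearing in the statement, and show that the averaged second–order remainder is $o_p(1)$. The only substantive inputs are uniform consistency of the dyadic kernel density estimators on the trimmed region and the moment bounds on $\tilde{W}_{\sigma}$; everything else is algebra.

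First I would record the exact division identity. Writing $a_{l_1l_2}=\widehat{f}_{x,l_1l_2}-f_{x,l_1l_2}$ and $b_{l_1l_2}=\widehat{f}_{vx,l_1l_2}-f_{vx,l_1l_2}$, and using $(1+u)^{-1}=1-u+u^2/(1+u)$ with $u=b_{l_1l_2}/f_{vx,l_1l_2}$, one obtains
\begin{align*}
\frac{\widehat{f}_{x,l_1l_2}}{\widehat{f}_{vx,l_1l_2}}
&=
\frac{f_{x,l_1l_2}}{f_{vx,l_1l_2}}
+\frac{a_{l_1l_2}}{f_{vx,l_1l_2}}
-\frac{f_{x,l_1l_2}}{f_{vx,l_1l_2}}\cdot\frac{b_{l_1l_2}}{f_{vx,l_1l_2}}
+R_{l_1l_2},
\qquad
R_{l_1l_2}=-\frac{a_{l_1l_2}\,b_{l_1l_2}}{f_{vx,l_1l_2}^{2}}
+\frac{\widehat{f}_{x,l_1l_2}\,b_{l_1l_2}^{2}}{f_{vx,l_1l_2}^{2}\,\widehat{f}_{vx,l_1l_2}}.
\end{align*}
Substituting this into the definition of $\widehat{\eta}_{[l_1l_2],\tau}$ reproduces verbatim the three displayed linear terms, so it remains to show that $\frac{1}{m_n}\sum_{\sigma\in\mathcal{N}_{m_n}}\tilde{W}_{\sigma}\varphi_{l_1l_2,\tau}R_{l_1l_2}=o_p(1)$.

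Next I would invoke uniform consistency of the first–stage estimators. Under Assumptions \ref{Ass:Inf01:SamplingMoments}, \ref{Ass:Inf02:Trimming}, \ref{Ass:Inf03:SmoothnessDensity}, and \ref{Ass:Inf04:Kernel}, the kernel density estimators for undirected network data studied in \citet{graham/niu/powell:2019} are consistent uniformly over the trimmed region $\{I_{\tau,l_1l_2}=1\}$, i.e.\ $\sup|a_{l_1l_2}|=o_p(1)$ and $\sup|b_{l_1l_2}|=o_p(1)$ there; the bandwidth conditions $n^{1-\overline{\delta}}h^{L+1}\to\infty$, $nh^{\overline{M}}\to0$, and in particular $h/\tau\to0$ (so the smoothing window does not leak past the trimmed boundary) together with the $\overline{M}$-smoothness of $f_{x}$, $f_{vx}$ deliver this. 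Moreover $f_{vx,l_1l_2}$ is bounded away from zero on $\{I_{\tau,l_1l_2}=1\}$ by Assumption \ref{Ass:Inf01:SamplingMoments}, hence so is $\widehat{f}_{vx,l_1l_2}$ with probability approaching one, and $\widehat{f}_{x,l_1l_2}$ is bounded there; thus every denominator in $R_{l_1l_2}$ is bounded away from zero w.p.a.\ 1. Since $\varphi_{l_1l_2,\tau}\in\{-1,0,1\}$, the triangle inequality gives
\begin{align*}
\left\|\frac{1}{m_n}\sum_{\sigma\in\mathcal{N}_{m_n}}\tilde{W}_{\sigma}\varphi_{l_1l_2,\tau}R_{l_1l_2}\right\|
\;\leq\;
\left(\frac{1}{m_n}\sum_{\sigma\in\mathcal{N}_{m_n}}\|\tilde{W}_{\sigma}\|\right)\times\sup_{\{I_{\tau,l_1l_2}=1\}}\|R_{l_1l_2}\|.
\end{align*}
The first factor is $O_p(1)$ (indeed bounded by $C_{w,1}$ w.p.\ 1, or by $\mathbb{E}\|\tilde{W}_{\sigma}\|^{4}<\infty$ together with Lemma \ref{Lemma:TechAppx:UEquiv} and the strong law for U-statistics); the second factor is $o_p(1)$ because $R_{l_1l_2}$ is built from the products $a_{l_1l_2}b_{l_1l_2}$ and $b_{l_1l_2}^{2}$ — each $o_p(1)$ uniformly by the previous display — divided by quantities bounded away from zero. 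Hence the averaged remainder is $o_p(1)$, which is the claim.

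The hard part is the uniform (over the trimmed region) consistency of the dyadic kernel estimators, and in particular verifying that trimming at rate $\tau$ with $h/\tau\to0$ suppresses the boundary bias so that the $\overline{M}$-order bias-reduction property of the kernel can be used on the interior; this is where I would lean on the kernel-estimation machinery of \citet{graham/niu/powell:2019} and on the V- versus U-statistic equivalence in Lemma \ref{Lemma:TechAppx:UEquiv} (and the $\mathrm{L}^2$-bounds in the spirit of \citet{ahn/powell:1993}) for controlling the relevant stochastic orders. Everything else reduces to the algebraic identity above and the moment bounds in Assumptions \ref{Ass:Inf01:SamplingMoments} and \ref{Ass:Inf0302:Boundedness}.
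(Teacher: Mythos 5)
Your proposal is correct and follows essentially the same route as the paper's proof: linearize $\widehat{f}_{x,l_1l_2}/\widehat{f}_{vx,l_1l_2}$ around $f_{x,l_1l_2}/f_{vx,l_1l_2}$, read off the three linear terms, and kill the quadratic remainder by combining uniform consistency of the first-stage dyadic kernel estimators on the trimmed region with the boundedness of $f_{vx}$ away from zero and the moment bounds on $\tilde{W}_{\sigma}\varphi_{l_1l_2,\tau}$. The only cosmetic difference is that you use the exact division identity $(1+u)^{-1}=1-u+u^{2}/(1+u)$, whereas the paper uses a second-order Taylor expansion with intermediate values $\tilde{f}_{x},\tilde{f}_{vx}$; the remainder terms and the way they are bounded are the same.
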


\begin{proof}
Given $h \rightarrow 0$,and $n^{1-\delta}h^{L+1} \rightarrow \infty$ for any $\delta>0$, it follows from a variance calculation argument that
\begin{align*}
	\sup_{ (v,x,x')\in \overline{\Omega}_{v,x} } 
	\mid \hat{f}_{vx}(v,x,x') - f_{vx}(v,x,x') \mid 
	&
	= o_{p}\left(  1 \right)
	\\
	\sup_{(x,x') \in \overline{\Omega}_{x}} 
	\mid \hat{f}_{x}(x,x') - f_{x}(x,x') \mid 
	&
	= 
	o_{p}\left(  1 \right),
\end{align*}
for any $\delta >0$. See, e.g., \citet{silverman:1978}, 
\citet{collomb/hardle:1986},\citet{aradillas2010semiparametric},  and for applications to network models \citet{leung:2015} and \citet{graham/niu/powell:2019}.

Consider a second order Taylor expansion of 
$\widehat{f}_{x,l_1l_2}/\widehat{f}_{vx,l_1l_2}$ around $f_{x,l_1l_2}/f_{vx,l_1l_2}$.
The quadratic terms in the expansion involve second order derivatives of $f_{x,l_1l_2}/f_{vx,l_1l_2}$ evaluated at $\tilde{f}_{x,l_1l_2}$ and $\tilde{f}_{vx,l_1l_2}$, where $\tilde{f}_{x,l_1l_2}$ lies between $\widehat{f}_{x,l_1l_2}$ and $f_{x,l_1l_2}$, 
and similarly $\tilde{f}_{vx,l_1l_2}$ lies between $\widehat{f}_{vx,l_1l_2}$ and $f_{vx,l_1l_2}$.
By substituting a second order Taylor expansion of $\widehat{f}_{x,l_1l_2}/\widehat{f}_{vx,l_1l_2}$ around $f_{x,l_1l_2}/f_{vx,l_1l_2}$ 
into $\widehat{\eta}_{ \left[ l_1l_2 \right], \tau}$, I obtain 
\begin{eqnarray*}
	\widehat{\eta}_{ \left[ l_1l_2 \right], \tau}
	=
	\frac{1}{m_{n}} 
	\sum_{ \sigma \in \mathcal{N}_{m_{n}}} 
	\tilde{W}_{\sigma}
	\varphi_{ l_1l_2, \tau}
	\left\{ 
	\frac{f_{x,l_1l_2}}{f_{vx,l_1l_2}}
	+
	\frac{ \widehat{f}_{x,l_1l_2} - f_{x,l_1l_2}  }{ f_{vx,l_1l_2} }
	-
	\frac{ f_{x,l_1l_2} }{ f_{vx,l_1l_2} } \times 
	\frac{ \widehat{f}_{vx,l_1l_2} - f_{vx,l_1l_2} }{ f_{vx,l_1l_2} }
	\right\} 
	+ R_{n},	
\end{eqnarray*}
where $R_{n}$ denotes the reminder term. The result follows from showing that  $R_n = o_{p}(1)$. 
	
The first component of $R_n$ is 
\begin{eqnarray*}
	& &
	\frac{1}{m_{n}} 
	\sum_{ \sigma \in \mathcal{N}_{m_{n}}} 
	\tilde{W}_{\sigma}
	\varphi_{ l_1l_2, \tau}
	\left\{ 
	\tilde{f}_{x,l_1l_2} 
	\frac{ 
		\left( \widehat{f}_{vx,l_1l_2} - f_{vx,l_1l_2}  \right)^2
		}{ 
		\tilde{f}_{vx,l_1l_2}^{3} 
		}	
	\right\}
	\\
	&\leq& 
	\left[ \sup_{(x,x')\in \overline{\Omega}_{x} } \vert f_{x} \vert \right] 
	\left[ \sup_{(v,x,x')\in \overline{\Omega}_{vx}}\vert f_{vx}^{-3} \vert \right]
	\left[ \sup_{(v,x,x')\in \overline{\Omega}_{vx}} 
	\vert \widehat{f}_{vx} - f_{vx} \vert \right]^2
	\left(
	\frac{1}{m_{n}} 
	\sum_{ \sigma \in \mathcal{N}_{m_{n}}} 
	\mid\mid
	\tilde{W}_{\sigma}
	\varphi_{ l_1l_2, \tau} 
	\mid\mid
	\right)
	\\
	& =&  O_{p}(1) \left[ \sup_{(v,x,x')} \vert \widehat{f}_{vx} - f_{vx} \vert \right]^2
	\\
	&=&
	o_{p}(1).
\end{eqnarray*}
The first inequality follows from Assumption \ref{Ass:Inf01:SamplingMoments}. 
The equality follows from the fact that the V-statistic inside the parenthesis converges to its expectation 
given that Assumptions \ref{Ass:SR00:iidsampling} and  \ref{Ass:Inf01:SamplingMoments}. 
The result follows from the uniform convergence of the kernel estimator. 
	
The remaining component of $R_n$ is 
\begin{eqnarray*}
	& & 
	\frac{1}{m_{n}} 
	\sum_{ \sigma \in \mathcal{N}_{m_{n}}} 
	\tilde{W}_{\sigma}
	\varphi_{ l_1l_2, \tau} 
	\left\{ 
	\frac{ (\widehat{f}_{vx,l_1l_2} - f_{vx,l_1l_2}) (\widehat{f}_{x,l_1l_2} - f_{x,l_1l_2}) }{ f_{vx,l_1l_2}^{2} }	
	\right\}
	\\
	&\leq& 
	\left[ \sup_{(v,x, x) \in \overline{\Omega}_{vx} } \mid f_{vx}^{-2} \mid  \right] 
	\left[ \sup_{(v,x, x) \in \overline{\Omega}_{vx}}  \mid \widehat{f}_{vx} - f_{vx} \mid \right] 
	\left[ \sup_{(x, x) \in \overline{\Omega}_{x}}      \mid \widehat{f}_{x} - f_{x} \mid \right]
	\\
	&& 
	\times 
	\left(
		\frac{1}{m_{n}} 
		\sum_{ \sigma \in \mathcal{N}_{m_{n}}} 
		\mid\mid  
		\tilde{W}_{\sigma}
		\varphi_{ l_1l_2, \tau} 
		\mid\mid 
	\right)
	\\
	& =& 
	O_{p}(1) 
	\left[ \sup_{(v,x, x ) \in \overline{\Omega}_{vx}} 	\mid \widehat{f}_{vx} - f_{vx} 	\mid \right] 
	\left[ \sup_{(x, x) \in \overline{\Omega}_{vx}} 	\mid \widehat{f}_{x} - f_{x} 	\mid \right].
	\\
	&=&
	o_{p}(1).
\end{eqnarray*}
The result follows from the uniform convergence of the kernel estimators. This completes the proof.

\end{proof}

\begin{lemma}\label{Lemma:TechAppx:Hajek_Sum1}
Under the same Assumptions of Theorem \ref{Theo:Inf:AN}, it follows that the H\'{a}jek projection of 
\begin{eqnarray*}
S_{1,n \tau}^\dagger
&=&	
S_{1,n \tau}
-
E 
\left[ 
	\tilde{W}_{\sigma}
	\tilde{D}^\ast_{\sigma,\tau} 
\mid 
\Omega_{n}
\right]
\\
&=&
\frac{1}{m_{n}} 
\sum_{ \sigma \in \mathcal{N}_{m_{n}}}
\left\{ 
	\tilde{W}_{\sigma}
	\tilde{D}^\ast_{\sigma,\tau} 
	-
	E 
	\left[ 
		\tilde{W}_{\sigma}
		\tilde{D}^\ast_{\sigma,\tau} 
	\mid 
	\Omega_{\sigma}
	\right]
\right\} 	
\end{eqnarray*}
into an arbitrary function $\zeta_{i_1j_1} = \left( X_{i_1},X_{j_1}, A_{i_1},A_{j_1}, v_{i_1j_1},U_{i_1j_1} \right)$ is given by 
\begin{eqnarray*}
	V_{1,n\tau}^\ast
	&=& 
	\frac{1}{n(n-1)} 
	\sum_{i_1 = 1}^{n} \sum_{j_1 \neq i_1}
	\xi_{i_1j_1,\tau}
\end{eqnarray*}
and 
\begin{eqnarray*}
	n(n-1)
	\Upsilon_{n}^{-1/2}
	E 
	\left[  
		\left(  
			S_{1,n \tau}^\dagger - 	V_{1,n\tau}^\ast
		\right)^2
	\right]
	\Upsilon_{n}^{-1/2}
	= o(1),
\end{eqnarray*}
where $\Upsilon_{n} = n(n-1)Var(V_{1,n\tau}^\ast)$
and $Var(V_{1,n\tau}^\ast) = O_{p}(p_n^2 \tau^2)$.
\end{lemma}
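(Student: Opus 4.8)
The plan is to establish the three parts of the lemma in sequence: the stated identity for $S_{1,n\tau}^\dagger$, the identification of its H\'ajek projection with $V_{1,n\tau}^\ast$, and the quadratic-mean remainder bound, where the real work lies.

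First I would record the reduction to dyad level. Since $\tilde{W}_\sigma$ is a measurable function of $X_\sigma\subseteq\Omega_\sigma$ and each $D^\ast_{l_1l_2,\tau}=I_{\tau,l_1l_2}D^\ast_{l_1l_2}$ is a measurable function of $\omega_{l_1l_2}$ and of the link shock $U_{l_1l_2}$ only, Assumptions~\ref{Ass:SR00:iidsampling} and~\ref{Ass:SR01:distr} give $E[D^\ast_{l_1l_2,\tau}\mid\Omega_\sigma]=E[D^\ast_{l_1l_2,\tau}\mid\omega_{l_1l_2}]=I_{\tau,l_1l_2}E[D^\ast_{l_1l_2}\mid\omega_{l_1l_2}]$, so that $E[\tilde{W}_\sigma\tilde{D}^\ast_{\sigma,\tau}\mid\Omega_n]=E[\tilde{W}_\sigma\tilde{D}^\ast_{\sigma,\tau}\mid\Omega_\sigma]$ and the two displayed forms of $S_{1,n\tau}^\dagger$ agree. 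Expanding $\tilde{D}^\ast_{\sigma,\tau}=D^\ast_{i_1j_1,\tau}-D^\ast_{i_1j_2,\tau}-D^\ast_{i_2j_1,\tau}+D^\ast_{i_2j_2,\tau}$ then rewrites $S_{1,n\tau}^\dagger$ as $\frac1{m_n}\sum_\sigma\tilde{W}_\sigma\sum_{(l_1,l_2)\in\sigma}\varepsilon^\sigma_{l_1l_2}I_{\tau,l_1l_2}\bigl(D^\ast_{l_1l_2}-E[D^\ast_{l_1l_2}\mid\omega_{l_1l_2}]\bigr)$ with signs $\varepsilon^\sigma_{l_1l_2}=\pm1$ inherited from $\tilde{D}^\ast_\sigma$, and the tower property yields $E[S_{1,n\tau}^\dagger\mid\Omega_n]=0$.

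Next I would compute the projection. Conditionally on $\Omega_n=(\mathbf{X}_n,\mathbf{A}_n,\mathbf{v}_n)$ the link shocks $\{U_{kl}\}_{k<l}$ are independent by Assumption~\ref{Ass:SR01:distr}, and $E[S_{1,n\tau}^\dagger\mid\Omega_n]=0$, so the H\'ajek projection of $S_{1,n\tau}^\dagger$ onto functions of the dyad-level objects $\{\zeta_{kl}\}$ is the exact sum $\sum_{k<l}E[S_{1,n\tau}^\dagger\mid\zeta_{kl}]$. Fixing a dyad $(k,l)$, each summand of $S_{1,n\tau}^\dagger$ indexed by a dyad other than $\{k,l\}$ has conditional mean zero given $\zeta_{kl}$: condition additionally on all node attributes, whereupon $\tilde{W}_\sigma$ and $I_{\tau,l_1l_2}$ are fixed and the inner centered $D^\ast$ term vanishes by conditional independence of the $U$'s. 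The $\{k,l\}$-summands survive, and by Assumption~\ref{Ass:SR00:iidsampling} they contribute $E[\tilde{W}_\sigma\mid\zeta_{kl}]=E[\tilde{W}_\sigma\mid X_k,X_l]$; averaging $E[\tilde{W}_\sigma\mid X_k,X_l]$ over the complementary pair of nodes reproduces $\overline{\chi}_{kl}$. A bookkeeping of the positions a dyad can occupy inside a tetrad and of the accompanying signs $\varepsilon^\sigma$ then collapses the weighted count to the dyad-level average $V_{1,n\tau}^\ast=\frac1{n(n-1)}\sum_{i_1}\sum_{j_1\neq i_1}\xi_{i_1j_1,\tau}$, with $\xi_{i_1j_1,\tau}=I_{\tau,i_1j_1}\bigl(D^\ast_{i_1j_1}-E[D^\ast_{i_1j_1}\mid\omega_{i_1j_1}]\bigr)\overline{\chi}_{i_1j_1}$.

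Finally, for the remainder I would use the projection (Pythagorean) identity $E[(S_{1,n\tau}^\dagger-V_{1,n\tau}^\ast)^2]=\operatorname{Var}(S_{1,n\tau}^\dagger)-\operatorname{Var}(V_{1,n\tau}^\ast)$ together with a decomposition of $\operatorname{Var}(S_{1,n\tau}^\dagger)$ according to how many dyads two tetrads share: pairs sharing no dyad contribute nothing once we condition on $\Omega_n$ (conditional independence plus $E[\,\cdot\mid\Omega_n]=0$), the leading contribution is $\operatorname{Var}(V_{1,n\tau}^\ast)=\Upsilon_n/[n(n-1)]$, and the remaining configurations, which involve fewer free nodes, are smaller by a factor $O(1/n)$. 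To make the last step rigorous one controls the weighted products $\tilde{W}_\sigma (D^\ast/f_{v\mid x})$ via the uniform bound and fourth moment on $\tilde{W}_\sigma$ (Assumption~\ref{Ass:Inf01:SamplingMoments}), the lower bound on $f_{vx}$ over the trimmed support (Assumption~\ref{Ass:Inf02:Trimming}), and the conditional second-moment bounds of Assumption~\ref{Ass:Inf0302:Boundedness}; the $V$-statistic-with-repeated-indices structure is handled by Lemma~\ref{Lemma:TechAppx:UEquiv}. A companion direct second-moment calculation, again using conditional independence and these bounds, gives $\operatorname{Var}(V_{1,n\tau}^\ast)=O_p(p_n^2\tau^2)$. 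Combining, $n(n-1)\Upsilon_n^{-1/2}E[(S_{1,n\tau}^\dagger-V_{1,n\tau}^\ast)^2]\Upsilon_n^{-1/2}=O(1/n)=o(1)$. The main obstacle is exactly this variance-decomposition step: one must verify that the density weighting $1/f_{v\mid x}$ and the $\tau$-trimming do not inflate the higher-order variance components (those with two or more shared dyads) relative to the sparsity- and trimming-dependent leading scale $\Upsilon_n$, so that the $O(1/n)$ gain from an extra free index genuinely survives the weighting.
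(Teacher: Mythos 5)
Your proposal follows essentially the same route as the paper's proof: you compute the projection dyad-by-dyad by conditioning on $\zeta_{i_1j_1}$ and killing the non-matching dyad terms via Assumptions \ref{Ass:SR00:iidsampling}--\ref{Ass:SR01:distr}, identify the surviving contribution with $\xi_{i_1j_1,\tau}$ through $\overline{\chi}_{i_1j_1}$, and then control the remainder by decomposing $\operatorname{Var}(S_{1,n\tau}^\dagger)$ according to shared indices/dyads (the paper's $\Delta_{c,n}$ terms, with $\Delta_{0,n}=\Delta_{1,n}=0$ and higher-order terms $O(1/n)$ smaller) combined with the orthogonality identity $\operatorname{Cov}(V_{1,n\tau}^\ast,S_{1,n\tau}^\dagger)=\operatorname{Var}(V_{1,n\tau}^\ast)$, which is exactly your Pythagorean step. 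The argument is correct and matches the paper's Steps 1--4, including its level of detail on the higher-order variance terms and the $O_p(p_n^2\tau^2)$ claim.
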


\begin{proof}

\textbf{Step 1. H\'{a}jek Projection}

Consider the tetrad $\sigma\{i_1, i_2, j_1, j_2\}$, let 
\begin{eqnarray*}
	s
	\left( 
		\sigma\{i_1, i_2, j_1, j_2\} 
	\right)
	&=&
	\tilde{W}_{\sigma}
	\tilde{D}^\ast_{\sigma,\tau}
	-
	E
	\left[  
		\tilde{W}_{\sigma}
		\tilde{D}^\ast_{\sigma,\tau}
		\mid 
		\Omega_\sigma
	\right]
	\\
	&=&
	\tilde{W}_{\sigma}
	\left\{ 
	\tilde{D}^\ast_{\sigma,\tau}
	-
	E
	\left[
		\tilde{D}^\ast_{\sigma,\tau}
		\mid 
		\Omega_\sigma
	\right] 
	\right\},
\end{eqnarray*}
and notice that 
\begin{eqnarray*}
E
\left[  
	s
	\left( 
		\sigma\{i_1, i_2, j_1, j_2\} 
	\right)
	\mid 
	\zeta_{i_1j_1}
\right]
&=&
E
\left[ 
	\tilde{W}_{\sigma}
	\left\{ 
	\tilde{D}^\ast_{\sigma,\tau}
	-
	E
	\left[
		\tilde{D}^\ast_{\sigma,\tau}
		\mid 
		\Omega_\sigma
	\right] 
	\right\}
	\mid
	\zeta_{i_1j_1}
\right]
\\
&=&
\left\{ 
	D^\ast_{i_1j_1,\tau}
	-
	E
	\left[ 
		D^\ast_{i_1j_1,\tau}
		\mid 
		\omega_{i_1j_1}
	\right]
\right\}
E
\left[ 
	\tilde{W}_{\sigma}
	\mid 
	X_{i_1j_1}
\right].
\end{eqnarray*}
where the second equality follows from the Law of Iterated Expectations, 
and Assumptions \ref{Ass:SR00:iidsampling} and \ref{Ass:SR01:distr}. 
To be precise, observe that for $\{l_1,l_2\} \neq \{i_1,j_1\}$ with 
$
	(l_1,l_2) \in \left\{ (i_1, j_2), (i_2, j_1), (i_2, j_2) \right\},
$
\begin{eqnarray*}
&&
	E
\left[ 
	\tilde{W}_{\sigma}
	\left\{ 
		\tilde{D}^\ast_{l_1l_2,\tau}
		-
		E
		\left[  
		\tilde{D}^\ast_{l_1l_2,\tau}
		\mid 
		\Omega_{\sigma}
		\right]
	\right\}	
\mid 
\zeta_{i_1j_1}
\right]	
\\
&=&
E
\left[ 
	\tilde{W}_{\sigma}
	\left\{ 
		E
		\left[  
		\tilde{D}^\ast_{l_1l_2,\tau}
		\mid 
		\omega_{l_1l_2}
		\right]
		-
		E
		\left[  
		\tilde{D}^\ast_{l_1l_2,\tau}
		\mid 
		\omega_{l_1l_2}
		\right]
	\right\}	
\mid 
\zeta_{i_1j_1}
\right]	
\\
&=&
0.
\end{eqnarray*}

It then follows that the H\'{a}jek projection is given by 
\begin{eqnarray*}
	V_{1,n\tau}^\ast
	&=&
	\frac{1}{n(n-1)}
	\sum_{i_1 =1}^n  \sum_{j_1 \neq i_1} 
	\xi_{i_1j_1, \tau},
\end{eqnarray*}
with
\begin{eqnarray*}
\xi_{i_1j_1, \tau}
&=&
\left\{ 
	D^\ast_{ i_1 j_1} 
	-
	E
	\left[ 
		D^\ast_{i_1j_1}
		\mid 
		\omega_{i_1j_1}	
	\right]
\right\}
I_{\tau, i_1j_1} 
\overline{\chi}_{i_1j_1}
\\
\overline{\chi}_{i_1j_1}
&=&
\left\{ 
	\frac{1}{(n-2)(n-3)}
	\sum_{i_2 \neq i_1 , j_1} \sum_{j_2 \neq i_1 , j_1, i_2} 
		E\left[ 
			\tilde{W}_{\sigma\{i_1,i_2;j_1,j_2\}}
			\mid 	
			X_{i_1}, X_{j_1}
		\right]
\right\}.
\end{eqnarray*}

Notice that $E \left[ V_{1,n\tau}^\ast  \right] = E\left[\xi_{i_1j_1, \tau}  \right] =0$.

\textbf{Step 2. Variance of H\'{a}jek Projection}

For two different dyads  $\{i_1,j_1\} \neq \{i_1',j_1'\}$ with zero common indices, 
Assumption \ref{Ass:SR00:iidsampling} implies that
\begin{eqnarray*}
	E
	\left[ 
		\xi_{i_1j_1, \tau}
		\xi_{i_1'j_1', \tau}
	\right]
	&=&
	E
	\left[ 
		\xi_{i_1j_1, \tau}
	\right]
	E
	\left[ 
		\xi_{i_1'j_1', \tau}
	\right]
	=0.
\end{eqnarray*}

Observe that for two dyads  $\{i_1,j_1\} \neq \{i_1,j_1'\}$  with one common index, 
the conditionally independent formation of links implied by Assumption \ref{Ass:SR01:distr} yields 
\begin{eqnarray*}
	E
	\left[ 
		\xi_{i_1j_1, \tau}
		\xi_{i_1'j_1', \tau}
	\right]
	&=&
	E
	\left[ 
		E\left[ 
			\xi_{i_1j_1, \tau}
			\mid 
			\Omega_{n}
		\right]
		E\left[ 
			\xi_{i_1'j_1', \tau}
			\mid 
			\Omega_{n}
		\right]
	\right]
	=0.
\end{eqnarray*}

Therefore, the variance of $ V_{1,n \tau}^\ast$ is given by 
\begin{eqnarray*}
Var\left( V_{1,n \tau}^\ast \right)
&=&
\left\{ \frac{1}{n(n-1)}\right\}^2
\left\{ 	
\sum_{i_1 =1}^n  \sum_{j_1 \neq i_1} 
	E\left[ 
	\xi_{i_1j_1, \tau}
	\xi_{i_1'j_1', \tau}'
\right] 
\right\}
\\
&=&
\left\{ \frac{1}{n(n-1)}\right\}^2
\left\{ 
\sum_{i_1 =1}^n  \sum_{j_1 \neq i_1} 
\Lambda^\ast_{i_1, j_1}
\right\}
\end{eqnarray*}
where 
\begin{eqnarray*}
\Lambda^\ast_{i_1, j_1}
&=&
E
\left[ 
	\left\{ 
		E
		\left[ 
			D^\ast_{i_1j_1}
			D^\ast_{i_1j_1}
			\mid 
			\omega_{i_1j_1}
		\right]	
		-
		E
		\left[ 
			D^\ast_{i_1j_1}
			\mid 
			\omega_{i_1j_1}
		\right]^2		
	\right\}
	I_{\tau, i_1j_1}^2
	\overline{\chi}_{i_1j_1}
	\overline{\chi}_{i_1j_1}'
\right].
\end{eqnarray*}

Define 
\begin{eqnarray*}
	\Upsilon_{n,\tau} &=& n(n-1) Var\left( V_{1,n \tau}^\ast \right)
	=
	\frac{1}{n(n-1)}
	\left\{ 
	\sum_{i_1 =1}^n  \sum_{j_1 \neq i_1} 
	\Lambda^\ast_{i_1, j_1}
	\right\}.
\end{eqnarray*}

\textbf{Step 3. Variance of $S_{1,n \tau}^\dagger$}

Given two different tetrads $\sigma\{i_1, i_2, j_1, j_2\}$ and $\sigma'\{i_1', i_2' , j_1' , j_2'\}$, let 
\begin{eqnarray*}
	\Delta_{c,n} 
	= 
	Cov
	\left( 
		s \left( \sigma\{i_1, i_2, j_1, j_2\} \right), 
		s \left( \sigma'\{i_1', i_2' , j_1' , j_2'\} \right)
	\right)	
\end{eqnarray*}
denote the covariance between 
$s(\sigma)$ 
and 
$s(\sigma')$ 
when 
$\sigma\{i_1, i_2, j_1, j_2\}$ and $\sigma'\{i_1', i_2' , j_1' , j_2'\}$ have $c=0,1,2,3,4$ indices in common. 

It follows from the conditionally independent formation of links, implied by Assumption \ref{Ass:SR01:distr}, 
and the conditional mean zero, 
$E
\left[ 
	s
	\left( 
		\sigma\{i_1, i_2, j_1, j_2\} 
	\right)	
\mid \Omega_\sigma
\right]=0$,  
that $\Delta_{0,n}= \Delta_{1,n} =0$. 

Consider 
\begin{eqnarray*}
	\Delta_{2,n}
	&=&
	E\left[ 
		s(\sigma\{i_1,i_2 ,j_1,j_2\} )	
		s(\sigma'\{i_1,i_2',j_1,j_2'\} )'	
	\right]
	\\
	&=&
	E
	\left[ 
	\left\{ 
		\tilde{D}^\ast_{\sigma,\tau}
		-
		E
		\left[ 
			\tilde{D}^\ast_{\sigma,\tau} 
			\mid 
			\Omega_{\sigma}
		\right]
	\right\}
	\left\{ 
		\tilde{D}^\ast_{\sigma',\tau}
		-
		E
		\left[ 
			\tilde{D}^\ast_{\sigma',\tau} 
			\mid 
			\Omega_{\sigma'}
		\right]
	\right\}
	\tilde{W}_{\sigma}
	\tilde{W}_{\sigma'}
	\right]
	\\
	&=&
	E
	\left[ 
	\left\{ 
		E
		\left[ 
			\tilde{D}^\ast_{i_1j_1,\tau} 
			\tilde{D}^\ast_{i_1j_1,\tau} 
			\mid 
			\omega_{i_1j_1}
		\right]	
		-
		E
		\left[ 
			\tilde{D}^\ast_{i_1j_1,\tau} 
			\mid 
			\omega_{i_1j_1}
		\right]^2
	\right\}
	I_{\tau, i_1j_1}^2
	\tilde{W}_{\sigma}
	\tilde{W}_{\sigma'}
	\right].
\end{eqnarray*}

It follows from the results above that 
$Var\left( S_{1,nt}^\dagger \right)$ can be expanded as
\begin{eqnarray*}
	&&
	Var\left( S_{1,nt}^\dagger \right)
	\\
	&=&
	\left(  \frac{1}{m_{n}} \right)^2
	\sum_{ \sigma \in \mathcal{N}_{m_{n}}} 
	\sum_{ \sigma' \in \mathcal{N}_{m_{n}}} 
	\left\{ 
		E\left[ 
			s(\sigma\{i_1,i_2 ,j_1,j_2\} )	
			s(\sigma'\{i_1,i_2',j_1',j_2'\} )'	
		\right]
	\right\}
	\\
	&=& 
	\left(  \frac{1}{m_{n}} \right)^2
	\sum_{ i_1=1}^{n} 
	\sum_{ j_1 \neq i_1} 
	\left\{ 
	\sum_{k_1 \neq i_1 , j_1} \sum_{k_2 \neq i_1 , j_1, k_1}
	\sum_{l_1 \neq i_1 , j_1} \sum_{l_2 \neq i_1 , j_1, l_1}
	\Delta_{2,n}
	\right\}
	+
	O\left(\frac{\Delta_{3,n}}{n^3}\right)
	+
	O\left(\frac{\Delta_{4,n}}{n^4}\right).
\end{eqnarray*}

Notice that the term inside the brackets scaled by $\left[ (n-2)(n-3) \right]^{-2}$ is equivalent to 
$\Lambda_{i_1j_1}^\ast$, in particular, 
\begin{eqnarray*}
	\Lambda_{i_1j_1}^\ast
	&=&
	\left\{ 
		\frac{1}{(n-2)(n-3)}
	\right\}^2
	\sum_{k_1 \neq i_1 , j_1} \sum_{k_2 \neq i_1 , j_1, k_1}
	\sum_{l_1 \neq i_1 , j_1'} \sum_{l_2 \neq i_1 , j_1', l_1}
	\Delta_{2,n}
	\\
	&=&
	E
	\left[ 
		\left\{ 
		E
		\left[ 
			\tilde{D}^\ast_{i_1j_1,\tau} 
			\tilde{D}^\ast_{i_1j_1,\tau} 
			\mid 
			\omega_{i_1j_1}
		\right]	
		-
		E
		\left[ 
			\tilde{D}^\ast_{i_1j_1,\tau} 
			\mid 
			\omega_{i_1j_1}
		\right]^2
		\right\}
		I_{\tau, i_1j_1}^2
		\overline{\chi}_{i_1j_1}
		\overline{\chi}_{i_1j_1}'
	\right],
\end{eqnarray*}
which follows from the definition of $\overline{\chi}_{i_1j_1}$. 

Hence, 
\begin{eqnarray*}
	Var\left( S_{1,nt}^\dagger \right)
	&=&
	\left(  \frac{1}{n(n-1)} \right)^2
	\left\{ 
	\sum_{ i_1=1}^{n} \sum_{j_1\neq i_1}
		\Lambda^\ast_{i_1,j_1} 
	\right\}
	+ o(1),
\end{eqnarray*}
and $Var \left( V_{1,n \tau}^\ast \right) - Var \left(  S_{1,n \tau}^\dagger \right) = o(1)$.

\textbf{Step 4. Asymptotic Equivalence}

To show that 
\begin{eqnarray*}
	n(n-1)
	\Upsilon_{n,\tau}^{-1/2}
	E 
	\left[  
		\left(  
			S_{1,n \tau}^\dagger - 	V_{1,n \tau}^\ast
		\right)
		\left(  
			S_{1,n \tau}^\dagger - 	V_{1,n \tau}^\ast
		\right)'
	\right]
	\Upsilon_{n,\tau}^{-1/2}
	&=&  o(1) 
\end{eqnarray*}
it is sufficient to prove that $Var \left( V_{1,n \tau}^\ast \right)^{-1/2}	Cov \left[ V_{1,n \tau}^\ast, S_{1,n \tau}\right] Var \left( V_{1,n \tau}^\ast \right)^{-1/2} = I$, 
which in turn, follows from noticing that 
\begin{eqnarray*}
	Cov \left[ V_{1,n \tau}^\ast, S_{1,n \tau}^\dagger \right]
	&=&
	E \left[ V_{1,n \tau}^\ast, S_{1,n \tau}^\dagger \right]
	\\
	&=&
	E \left[ V_{1,n \tau}^\ast \left( S_{1,n \tau}^\dagger - V_{1,n \tau}^\ast  \right)'\right]
	+ 
	E \left[ V_{1,n \tau}^\ast \left(  V_{1,n \tau}^\ast  \right)'\right]
	\\
	&=&
	Var(V_{1,n \tau}^\ast),
\end{eqnarray*}
since by construction of the orthogonal projection  
\[
	E \left[ V_{1,n \tau}^\ast \left( S_{1,n \tau} - V_{1,n \tau}^\ast  \right)'\right] =0.	
\]

The proof is complete.
\end{proof}
\begin{lemma}\label{Lemma:TechAppx:Hajek_Sum2}
Under the same Assumptions of Theorem \ref{Theo:Inf:AN}, it follows that the H\'{a}jek projection of 
\begin{eqnarray*}
S_{2,n \tau}^\dagger
&=&
S_{2,n \tau}
-
E 
\left[ 
	\tilde{W}_{\sigma}
	\tilde{D}^\ast_{\sigma,\tau} 
\mid 
\Omega_{\sigma}	
\right]
\\
S_{2,n \tau}
&=&
\frac{1}{m_{n}} 
\sum_{ \sigma \in \mathcal{N}_{m_{n}}} 
\tilde{W}_{\sigma}
\left\{ 
	\left( 
		\frac{ 
			\varphi_{ {i_1 j_1}, \tau}	
			\widehat{f}_{x,{i_1 j_1}} 
		}{ 
			f_{vx,{i_1 j_1}} 
		}
		-
		\frac{ 
			\varphi_{ {i_1 j_2}, \tau}	
			\widehat{f}_{x,{i_1 j_2}} 
			}{ 
			f_{vx,{i_1 j_2}} 
			}
	\right)
	-
	\left( 
		\frac{ 
			\varphi_{ {i_2 j_1}, \tau}
			\widehat{f}_{x,{i_2 j_1}} 
			}{ 
			f_{vx,{i_2 j_1}} 
		}
		-
		\frac{ 
			\varphi_{ {i_2 j_2}, \tau}	
			\widehat{f}_{x,{i_2 j_2}} 
		}{ 
			f_{vx,{i_2 j_2}} 
		}
	\right)
\right\}
\end{eqnarray*}
into an arbitrary function $\zeta_{i_1j_1} = \left( X_{i_1},X_{j_1}, A_{i_1},A_{j_1}, v_{i_1j_1},U_{i_1j_1} \right)$ is given by 
\begin{eqnarray*}
V_{2,n\tau}^\ast
&=& 
\frac{1}{n(n-1)} 
\sum_{i_1 = 1}^{n} \sum_{j_1 \neq i_1}
\bar{\xi}_{i_1j_1, \tau}
\end{eqnarray*}
and 
\begin{eqnarray*}
n
\Upsilon_{n}^{-1/2}
E 
\left[  
	\left(  
		S_{2,n \tau} - 	V_{2,n\tau}^\ast
	\right)^2
\right]
\Upsilon_{n}^{-1/2}
= o(1),
\end{eqnarray*}
where $\Upsilon_{n} = n Var(V_{2,n\tau}^\ast)$.
\end{lemma}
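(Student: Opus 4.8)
The plan is to follow the same four-step template used in the proof of Lemma~\ref{Lemma:TechAppx:Hajek_Sum1}, prefaced by one reduction that handles the first-stage kernel estimator $\widehat f_{x,l_1l_2}$, which is the only structural difference between $S_{2,n\tau}$ and $S_{1,n\tau}$.

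First I would write $S_{2,n\tau}$ as the tetrad combination $\widehat\eta^{x}_{[i_1j_1]}-\widehat\eta^{x}_{[i_1j_2]}-\widehat\eta^{x}_{[i_2j_1]}+\widehat\eta^{x}_{[i_2j_2]}$ of the four objects $\widehat\eta^{x}_{[l_1l_2]}=m_n^{-1}\sum_{\sigma}\tilde W_\sigma\,\varphi_{l_1l_2,\tau}\,\widehat f_{x,l_1l_2}/f_{vx,l_1l_2}$, and substitute the kernel representation of $\widehat f_{x,l_1l_2}$. Each $\widehat\eta^{x}_{[l_1l_2]}$ then becomes a $V$-statistic of order six in $\{i_1,i_2,j_1,j_2,k_1,k_2\}$ with an $n$-varying kernel $h^{-L}\tilde W_\sigma(\varphi_{l_1l_2,\tau}/f_{vx,l_1l_2})K_{x,h}(X_{k_1}-X_{l_1},X_{k_2}-X_{l_2})$, whose second moment is finite by Assumptions~\ref{Ass:Inf01:SamplingMoments}, \ref{Ass:Inf0302:Boundedness}, \ref{Ass:Inf04:Kernel} (precisely the bound already computed inside the proof of Lemma~\ref{Lemma:TechAppx:Consistency}). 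Lemma~\ref{Lemma:TechAppx:UEquiv} then replaces it by its symmetrized sixth-order $U$-statistic up to a remainder negligible at the scale needed for the projection.

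Next I would compute the Hájek projection of $S_{2,n\tau}^{\dagger}$ onto $\zeta_{i_1j_1}$ by separating the roles the pair $\{i_1,j_1\}$ can play inside the symmetrized kernel. When $\{i_1,j_1\}$ occupies one of the four dyad slots $(l_1,l_2)$, reducing the conditioning from $\Omega_\sigma$ to $\omega_{i_1j_1}$ via Assumptions~\ref{Ass:SR00:iidsampling} and \ref{Ass:SR01:distr} gives the familiar piece $\{D^\ast_{i_1j_1,\tau}-E[D^\ast_{i_1j_1,\tau}\mid\omega_{i_1j_1}]\}\,\overline\chi_{i_1j_1}$, exactly as in Step~1 of the proof of Lemma~\ref{Lemma:TechAppx:Hajek_Sum1}. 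When instead $i_1$ or $j_1$ occupies a kernel slot $k_1$ or $k_2$, the change of variables $X_k=X_l+h\nu$ and an $\overline{M}$-order Taylor expansion (as in the proof of Lemma~\ref{Lemma:TechAppx:Consistency}), combined with the bias-reducing order $2\overline{M}$ of $K_x$ and the rates $n^{1-\overline{\delta}}h^{L+1}\to\infty$, $nh^{\overline{M}}\to0$, $h/\tau\to0$ of Assumption~\ref{Ass:Inf04:Kernel}, collapse that contribution to a conditional-mean-zero kernel-role term plus a bias that is $o(\sqrt{\varrho_n/(n(n-1))})$. Collecting, the projection is $V_{2,n\tau}^\ast=\{n(n-1)\}^{-1}\sum_{i_1}\sum_{j_1\neq i_1}\bar\xi_{i_1j_1,\tau}$, where $\bar\xi_{i_1j_1,\tau}$ consists of the dyad-role term above together with the kernel-role correction, and $E[\bar\xi_{i_1j_1,\tau}]=0$.

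Finally I would verify the variance statement exactly as in Steps~2--4 of the proof of Lemma~\ref{Lemma:TechAppx:Hajek_Sum1}: Assumptions~\ref{Ass:SR00:iidsampling}--\ref{Ass:SR01:distr} annihilate the covariances of $\bar\xi$-terms supported on dyads sharing zero or one index, so $\Upsilon_n=n\,Var(V_{2,n\tau}^\ast)$ reduces to a dyad sum of $\Lambda^\ast$-type blocks; $Var(S_{2,n\tau}^{\dagger})$ matches it up to $o(1)$ once the tetrad-covariance terms $\Delta_{3,n},\Delta_{4,n}$ and the extra kernel-degeneracy terms are bounded through Assumptions~\ref{Ass:Inf01:SamplingMoments}--\ref{Ass:Inf04:Kernel}; and the orthogonality identity $E[V_{2,n\tau}^\ast(S_{2,n\tau}^{\dagger}-V_{2,n\tau}^\ast)']=0$ yields $Cov(V_{2,n\tau}^\ast,S_{2,n\tau}^{\dagger})=Var(V_{2,n\tau}^\ast)$, which is the claimed $n\,\Upsilon_n^{-1/2}E[(S_{2,n\tau}-V_{2,n\tau}^\ast)^2]\Upsilon_n^{-1/2}=o(1)$. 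The main obstacle is the kernel-slot piece of the projection: a priori it is of order $h^{-L}$ at a fixed evaluation point, and only after averaging over the dyad-of-interest coordinates and exploiting the higher-order kernel does it collapse; pinning down its leading term and bias correctly — and, in combination with Lemma~\ref{Lemma:TechAppx:Hajek_Sum3}, seeing that the $\widehat f_x$ and $\widehat f_{vx}$ corrections ultimately cancel in Theorem~\ref{Theo:Inf:AN} so that the limiting variance is the ``known-density'' one — is the delicate part, and it rests entirely on Assumption~\ref{Ass:Inf04:Kernel}.
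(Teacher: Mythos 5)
Your overall template is the right one, and it matches the paper's strategy in outline: substitute the kernel representation of $\widehat f_{x,l_1l_2}$ to obtain a sixth-order statistic, project onto $\zeta_{i_1j_1}$ by distinguishing the roles the dyad can play, control the smoothing bias through the order-$2\overline{M}$ kernel with $nh^{\overline M}\to 0$ and $h/\tau\to 0$, and then repeat the variance-matching and orthogonality steps of Lemma \ref{Lemma:TechAppx:Hajek_Sum1}. (The paper does not pass through Lemma \ref{Lemma:TechAppx:UEquiv} here; it works directly with the $6$-tuple sum and uses the symmetry of $K_x$ to pair terms, but that difference is cosmetic.)

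The genuine gap is in your treatment of the terms in which $i_1$ or $j_1$ occupies a kernel slot. After conditioning on $\zeta_{i_1j_1}$ and the change of variables, these terms do \emph{not} reduce to a negligible, conditionally mean-zero remainder: each produces a nondegenerate function of $(X_{i_1},X_{j_1})$ of the form $\mathbb{E}[D^\ast_{l_1l_2,\tau}\,\overline{\chi}_{l_1l_2}\mid X_{i_1},X_{j_1}]-\mathbb{E}[D^\ast_{l_1l_2,\tau}\,\overline{\chi}_{l_1l_2}]$, up to $O(h^{\overline M})$ bias. The decisive step in the paper's proof is that these ``own-observation-in-the-kernel'' pieces cancel \emph{exactly within} $S_{2,n\tau}$, because they enter the four signed dyad positions $(i_1j_1)-(i_1j_2)-(i_2j_1)+(i_2j_2)$ with coefficients $+1,-1,-1,+1$; only then is the projection identical to the $\xi_{i_1j_1,\tau}$ of Lemma \ref{Lemma:TechAppx:Hajek_Sum1}, which is what makes the stated $\Upsilon_n$ the relevant variance and, downstream, lets $S_{2,n\tau}$ and $S_{3,n\tau}$ offset each other in Theorem \ref{Theo:Inf:AN}. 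Your proposal instead retains an unspecified ``kernel-role correction'' inside $\bar\xi_{i_1j_1,\tau}$ and defers the cancellation to the theorem, arguing that the $\widehat f_x$ and $\widehat f_{vx}$ corrections cancel there. That deferred cancellation would not occur if these pieces really survived the projection: the $S_{2}$ piece involves $\mathbb{E}[D^\ast\overline{\chi}\mid X_{i_1},X_{j_1}]$ smoothed by $K_x$, while the corresponding $S_{3}$ piece (Lemma \ref{Lemma:TechAppx:Hajek_Sum3}) involves $\mathbb{E}[D^\ast\overline{\chi}\mid v_{i_1j_1},X_{i_1},X_{j_1}]$ smoothed by $K_{vx}$; these are different functions and do not net out. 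Moreover, if the correction stayed in $\bar\xi$, $n(n-1)\,Var(V^\ast_{2,n\tau})$ would no longer reduce to the $\Lambda^\ast$-type blocks used in Step 2--3 and in the studentization of Theorem \ref{Theo:Inf:AN}. So the within-$S_2$ sign cancellation is not an optional refinement but the core of the lemma, and your argument as written leaves it unestablished.
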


\begin{proof}

Similarly to the definition for tetrads, I introduce the function $\overline{\sigma}=\overline{\sigma}\{i_1,i_2,j_1,j_2,k_1,k_2\}$ that maps each unique 6-tuple $\{i_1,i_2,j_1,j_2,k_1,k_2\}$ into an index set $N_{\overline{m}_n} = \{1, \cdots, \overline{m}_n\}$ where $\overline{m}_n$ denotes the total number of those 6-tuples.
Hence, each distinct 6-tuple $\{i_1,i_2,j_1,j_2,k_1,k_2\}$ corresponds to a unique 
$\overline{\sigma} =  \overline{\sigma}\{i_1,i_2,j_1,j_2,k_1,k_2\} \in N_{\overline{m}_n}$. 

Consider a fixed 6-tuple $\{i_1,i_2,j_1,j_2,k_1,k_2\}$, and define  
\begin{eqnarray*}
s_{i_1, j_1}(\overline{\sigma})
&=&
\tilde{W}_{i_1i_2,j_1j_2}
\left\{ 
		\frac{1}{h^L}
		\frac{ 
			\varphi_{ i_1 j_1, \tau}	
		}{ 
			f_{vx,i_1 j_1} 
		}
		K_{x,h}\left( X_{k_1}-X_{i_1}, X_{k_2}-X_{j_1}\right) 
		-
		E\left[ 
			D_{i_1j_1,\tau}^\ast 
			\mid 
			\Omega_{i_1i_2,j_1j_2}
		\right]
\right\}		
\\
s_{i_1,j_2}(\overline{\sigma}) 
&=&
\tilde{W}_{i_1i_2,j_1j_2}
\left\{ 
		\frac{1}{h^L}
		\frac{ 
		\varphi_{ i_1 j_2, \tau}	
		}{ 
			f_{vx,i_1 j_2} 
		}
		K_{x,h}\left( X_{k_1}-X_{i_1}, X_{k_2}-X_{j_2}\right) 	
		-
		E\left[  
			D_{i_1j_2,\tau}^\ast 
			\mid 
			\Omega_{i_1i_2,j_1j_2}
		\right]
\right\}
\\
s_{i_2,j_1}(\overline{\sigma})
&=&
\tilde{W}_{i_1i_2,j_1j_2}
\left\{ 
		\frac{1}{h^L}
		\frac{ 
			\varphi_{ i_2 j_1, \tau}
			}{ 
				f_{vx,i_2 j_1} 	 
		}
		K_{x,h}\left( X_{k_1}-X_{i_2}, X_{k_2}-X_{j_1}\right)
		-
		E
		\left[ 
			D_{i_2j_1,\tau}^\ast 
		\mid 
		\Omega_{i_1i_2,j_1j_2}
		\right]
\right\}		
\\
s_{i_2, j_2}\overline{\sigma})
&=&
\tilde{W}_{i_1i_2,j_1j_2}
\left\{ 
	\frac{1}{h^L}
	\frac{ 
		\varphi_{ i_2 j_2, \tau}	
	}{ 
		f_{vx,i_2 j_2} 
	}
	K_{x,h}\left( X_{k_1}-X_{i_2}, X_{k_2}-X_{j_2}\right)
	-
	E\left[ 
		D_{i_2j_2,\tau}^\ast 
		\mid 
		\Omega_{i_1i_2,j_1j_2}	
	\right]
\right\},
\end{eqnarray*}
and $s_{2, n}(\overline{\sigma})= 	s_{i_1, j_1}(\overline{\sigma})-s_{i_1,j_2}(\overline{\sigma})- s_{i_2,j_1}(\overline{\sigma})+	s_{i_2,j_2}(\overline{\sigma})$.
It follows then that $S_{2,n \tau}^\dagger$ can be written as 
\begin{eqnarray*}
S_{2,n \tau}^\dagger
&=&
\left[ 
	6!\binom{n}{6}
\right]^{-1}
\sum_{\overline{\sigma} \in \mathcal{N}_{\overline{m}_n}} 
s_{2, n\tau}(\overline{\sigma})
\\
&=&
\left[ 
	6!\binom{n}{6}
\right]^{-1}
\sum_{\overline{\sigma} \in \mathcal{N}_{\overline{m}_n}} 
\left\{ 
	s_{i_1j_1}(\overline{\sigma})
	-
	s_{i_1j_2}(\overline{\sigma})
	-
	s_{i_2j_1}(\overline{\sigma})
	+
	s_{i_2j_2}(\overline{\sigma})
\right\}.
\end{eqnarray*}

\textbf{Step 1. H\'{a}jek Projection}

The rest of the proof makes use of the following index notation for dyads.
Given the total number of ordered dyads $\overline{\textbf{n}} = n(n-1)$, 
let the boldface indices $\boldsymbol{\pi} = \textbf{1}, \textbf{2}, \cdots$ index the $\overline{\textbf{n}}$ ordered dyads in the sample.
In an abuse of notation, also let $\boldsymbol{\pi}$ denote the set $\left\{ i_1,j_1\right\}$, where $i_1$ and $j_1$ are the indices that comprise dyad $\boldsymbol{\pi}$.
In particular, $\boldsymbol{\pi}(1)=i_1$ and $\boldsymbol{\pi}(2)=j_1$, when $\boldsymbol{\pi}=\left\{ i_1,j_1\right\}$. 

With this notation at hand, $S_{2,n \tau}^\dagger$ can be expressed as 
\begin{eqnarray*}
S_{2,n \tau}^\dagger
&=&
\left[ 
	6!\binom{n}{6}
\right]^{-1}
\sum_{\boldsymbol{\pi}_1=\textbf{1}}^{\overline{\textbf{n}}}
\sum_{\boldsymbol{\pi}_2 \neq \boldsymbol{\pi}_1 }
\sum_{\boldsymbol{\pi}_3 \neq \boldsymbol{\pi}_1 }
\left\{ 
	s_{\boldsymbol{\pi}_1}(\overline{\sigma})
	-
	s_{\boldsymbol{\pi}_1(1)\boldsymbol{\pi}_2(2)}(\overline{\sigma})
	-
	s_{\boldsymbol{\pi}_2(1)\boldsymbol{\pi}_1(2)}(\overline{\sigma})
	+
	s_{\boldsymbol{\pi}_2}(\overline{\sigma})
\right\}
\end{eqnarray*}
where $\overline{\sigma}=\overline{\sigma}\left\{\boldsymbol{\pi}_1, \boldsymbol{\pi}_2, \boldsymbol{\pi}_3 \right\}$.

Let 
\begin{eqnarray*}
p_{\boldsymbol{\pi}_1, \boldsymbol{\pi}_3} 
\left( \overline{\sigma} \right)
&=& 
\frac{1}{h^L}
\left( 
		\frac{ 
			\varphi_{ \boldsymbol{\pi}_1, \tau}	
		}{ 
			f_{vx,\boldsymbol{\pi}_1} 
		}
		\tilde{W}_{\boldsymbol{\pi}_1,\boldsymbol{\pi}_2}
		+
		\frac{ 
			\varphi_{ \boldsymbol{\pi}_3, \tau}	
		}{ 
			f_{vx,\boldsymbol{\pi}_3} 
		}
		\tilde{W}_{\boldsymbol{\pi}_3,\boldsymbol{\pi}_2}
\right)
K_{x,h}\left( X_{\boldsymbol{\pi}_3}-X_{\boldsymbol{\pi}_1}\right) 
\\
&&
-
E\left[ 
	\tilde{W}_{\boldsymbol{\pi}_1,\boldsymbol{\pi}_2}
	D_{\boldsymbol{\pi}_1,\tau}^\ast 
	\mid 
	\Omega_{\boldsymbol{\pi}_1, \boldsymbol{\pi}_2}
\right]
-
E\left[ 
	\tilde{W}_{\boldsymbol{\pi}_3,\boldsymbol{\pi}_2}
	D_{\boldsymbol{\pi}_3,\tau}^\ast 
	\mid 
	\Omega_{\boldsymbol{\pi}_3, \boldsymbol{\pi}_2}
\right]	 
\\
p_{\boldsymbol{\pi}_2, \boldsymbol{\pi}_3} 
\left( \overline{\sigma} \right)
&=& 
\frac{1}{h^L}
\left( 
	\frac{ 
		\varphi_{ \boldsymbol{\pi}_2, \tau}	
	}{ 
		f_{vx,\boldsymbol{\pi}_2} 
	}
	\tilde{W}_{\boldsymbol{\pi}_1,\boldsymbol{\pi}_2}
	K_{x,h}\left( X_{\boldsymbol{\pi}_3}-X_{\boldsymbol{\pi}_2}\right) 
	+
	\frac{ 
		\varphi_{ \boldsymbol{\pi}_2, \tau}	
	}{ 
		f_{vx,\boldsymbol{\pi}_2} 
	}
	\tilde{W}_{\boldsymbol{\pi}_3,\boldsymbol{\pi}_2}
	K_{x,h}\left( X_{\boldsymbol{\pi}_1}-X_{\boldsymbol{\pi}_2}\right) 
\right)
\\
&&
-
E\left[ 
\tilde{W}_{\boldsymbol{\pi}_1,\boldsymbol{\pi}_2}
D_{\boldsymbol{\pi}_2,\tau}^\ast 
\mid 
\Omega_{\boldsymbol{\pi}_1, \boldsymbol{\pi}_2}
\right]
-
E\left[ 
\tilde{W}_{\boldsymbol{\pi}_3,\boldsymbol{\pi}_2}
D_{\boldsymbol{\pi}_2,\tau}^\ast 
\mid 
\Omega_{\boldsymbol{\pi}_3, \boldsymbol{\pi}_2}
\right]	
\\	
p_{\boldsymbol{\pi}_1(1)\boldsymbol{\pi}_2(2), \boldsymbol{\pi}_3} 
\left( \overline{\sigma} \right)
&=& 
\frac{1}{h^L}
\tilde{W}_{\boldsymbol{\pi}_1,\boldsymbol{\pi}_2}
\left\{ 
\left( 
	\frac{ 
		\varphi_{ \boldsymbol{\pi}_1(1)\boldsymbol{\pi}_2(2), \tau}	
	}{ 
		f_{vx,\boldsymbol{\pi}_1(1)\boldsymbol{\pi}_2(2)} 
	}
\right)		
K_{x,h}\left( X_{\boldsymbol{\pi}_3}-X_{\boldsymbol{\pi}_1(1)\boldsymbol{\pi}_2(2)}\right) 
-
E\left[ 
	D_{\boldsymbol{\pi}_1(1)\boldsymbol{\pi}_2(2),\tau}^\ast 
	\mid 
	\Omega_{\boldsymbol{\pi}_1, \boldsymbol{\pi}_2}
\right]	
\right\}
\\
&&
+
\frac{1}{h^L}
\tilde{W}_{\boldsymbol{\pi}_3,\boldsymbol{\pi}_2}
\left\{ 
\left(		
	\frac{ 
		\varphi_{ \boldsymbol{\pi}_3(1)\boldsymbol{\pi}_2(2), \tau}	
	}{ 
		f_{vx,\boldsymbol{\pi}_3(1)\boldsymbol{\pi}_2(2)} 
	}
	K_{x,h}\left( X_{\boldsymbol{\pi}_1}-X_{\boldsymbol{\pi}_3(1)\boldsymbol{\pi}_2(2)}\right) 
\right)
-
E\left[ 
D_{\boldsymbol{\pi}_3(1)\boldsymbol{\pi}_2(2),\tau}^\ast 
\mid 
\Omega_{\boldsymbol{\pi}_3, \boldsymbol{\pi}_2}
\right]	
\right\}
\\
p_{\boldsymbol{\pi}_2(1)\boldsymbol{\pi}_1(2), \boldsymbol{\pi}_3} 
\left( \overline{\sigma} \right)
&=& 
\frac{1}{h^L}	
\tilde{W}_{\boldsymbol{\pi}_1,\boldsymbol{\pi}_2}
\left\{ 
\left( 
	\frac{ 
		\varphi_{ \boldsymbol{\pi}_2(1)\boldsymbol{\pi}_1(2), \tau}	
	}{ 
		f_{ \boldsymbol{\pi}_2(1)\boldsymbol{\pi}_1(2), \tau}	
	}
	K_{x,h}\left( X_{\boldsymbol{\pi}_3}-X_{\boldsymbol{\pi}_2(1)\boldsymbol{\pi}_1(2)}\right) 
\right)
-
E\left[ 
D_{\boldsymbol{\pi}_2(1)\boldsymbol{\pi}_1(2),\tau}^\ast 
\mid 
\Omega_{\boldsymbol{\pi}_1, \boldsymbol{\pi}_2}
\right]
\right\}
\\
&&
+
\frac{1}{h^L}	
\tilde{W}_{\boldsymbol{\pi}_3,\boldsymbol{\pi}_2}
\left\{ 
\left(			
\frac{ 
	\varphi_{ \boldsymbol{\pi}_2(1)\boldsymbol{\pi}_3(2), \tau}	
}{ 
	f_{vx,\boldsymbol{\pi}_2(1)\boldsymbol{\pi}_3(2)} 
}
K_{x,h}\left( X_{\boldsymbol{\pi}_1}-X_{\boldsymbol{\pi}_2(1)\boldsymbol{\pi}_3(2)}\right) 
\right)
-
E\left[ 
D_{\boldsymbol{\pi}_2(1)\boldsymbol{\pi}_3(2),\tau}^\ast 
\mid 
\Omega_{\boldsymbol{\pi}_3, \boldsymbol{\pi}_2}
\right]	
\right\}
\\		
\end{eqnarray*}
where
$K_{x,h}\left( X_{\boldsymbol{\pi}_3}-X_{\boldsymbol{\pi}_1}\right)$ 	
denotes 
$K_{x,h}
\left( 
X_{\boldsymbol{\pi}_3(1)} - X_{\boldsymbol{\pi}_1(1)}, 
X_{\boldsymbol{\pi}_3(2)} - X_{\boldsymbol{\pi}_1(2)}
\right)$, 
$\tilde{W}_{\boldsymbol{\pi}_1, \boldsymbol{\pi}_2 }$ 
denotes 
$\tilde{W}_{\boldsymbol{\pi}_1\{i_1i_2\},\boldsymbol{\pi}_2\{j_1j_2\} }$, 
and
\begin{eqnarray*}
\chi_{\boldsymbol{\pi}_1}
&=&
E
\left[ 
\tilde{W}_{\boldsymbol{\pi}_1, \boldsymbol{\pi}_2 }
\mid 
X_{\boldsymbol{\pi}_1}
\right]
\\
\chi_{\boldsymbol{\pi}_1}
&=&
\sum_{\boldsymbol{\pi}_2 \neq \boldsymbol{\pi}_1, \boldsymbol{\pi}_3 } 
\chi_{\boldsymbol{\pi}_1}.
\end{eqnarray*}

Using the symmetry of the kernel,it follows that $	S_{2,n \tau}^\dagger$ can be written as
\begin{align*}
\left[ 
	6!\binom{n}{6}
\right]^{-1}
\sum_{\boldsymbol{\pi}_1=\textbf{1}}^{\overline{\textbf{n}}}
\sum_{\boldsymbol{\pi}_3 = \boldsymbol{\pi}_1 + 1 }
\sum_{\boldsymbol{\pi}_2 \neq \boldsymbol{\pi}_1, \boldsymbol{\pi}_3 } 
\left\{ 
	p_{\boldsymbol{\pi}_1, \boldsymbol{\pi}_3}\left( \overline{\sigma} \right) 
	-
	p_{\boldsymbol{\pi}_1(1)\boldsymbol{\pi}_2(2), \boldsymbol{\pi}_3}\left( \overline{\sigma} \right) 
	-
	p_{\boldsymbol{\pi}_2(1)\boldsymbol{\pi}_1(2), \boldsymbol{\pi}_3}\left( \overline{\sigma} \right) 
	+
	p_{\boldsymbol{\pi}_2, \boldsymbol{\pi}_3}\left( \overline{\sigma} \right) 
\right\}
\end{align*}

To compute the H\'{a}jek projection of the above sum into an arbitrary function of $\zeta_{\boldsymbol{\pi}_1}$, 
consider first $E \left[ p_{\boldsymbol{\pi}_1, \boldsymbol{\pi}_3}\left( \overline{\sigma} \right) \mid \zeta_{\boldsymbol{\pi}_1} \right]$.
To that end, the following results will be useful. 
\begin{eqnarray*}
E
\left[  
	E\left[ 
		\tilde{W}_{\boldsymbol{\pi}_1, \boldsymbol{\pi}_2}
		D_{\boldsymbol{\pi}_1,\tau}^\ast 
		\mid 
		\omega_{\boldsymbol{\pi}_1} 
	\right]
	\mid 	
	\zeta_{\boldsymbol{\pi}_1}
\right]
&=&
E\left[ D_{\boldsymbol{\pi}_1,\tau}^\ast \mid \omega_{\boldsymbol{\pi}_1} \right]
E
\left[  
	\tilde{W}_{\boldsymbol{\pi}_1, \boldsymbol{\pi}_2}
	\mid 	
	X_{\boldsymbol{\pi}_1}
\right]
=
E\left[ 
	D_{\boldsymbol{\pi}_1,\tau}^\ast 
	\chi_{\boldsymbol{\pi}_1}
	\mid \omega_{\boldsymbol{\pi}_1} 
\right]
\\
E
\left[  
	E
	\left[ 
		\tilde{W}_{\boldsymbol{\pi}_3, \boldsymbol{\pi}_2}
		D_{\boldsymbol{\pi}_3,\tau}^\ast 
		\mid 
		\omega_{\boldsymbol{\pi}_3} 
	\right]
	\mid 	
	\zeta_{\boldsymbol{\pi}_1}
\right]
&=&
E
\left[ 
	E\left[ D_{\boldsymbol{\pi}_3,\tau}^\ast \mid \omega_{\boldsymbol{\pi}_3} \right]
	E
	\left[ 
	\tilde{W}_{\boldsymbol{\pi}_3, \boldsymbol{\pi}_2}
	\mid 
	X_{\boldsymbol{\pi}_3}
	\right]
\right]
=
E
\left[ 
	D_{\boldsymbol{\pi}_3,\tau}^\ast 
	\chi_{\boldsymbol{\pi}_3}
\right].	
\end{eqnarray*}

Furthermore,  
\begin{align*}
& 
E
\left[ 
	\left( 
		\frac{ 
			\varphi_{ \boldsymbol{\pi}_1, \tau}	
		}{ 
			f_{vx,\boldsymbol{\pi}_1} 
		}
		\tilde{W}_{\boldsymbol{\pi}_1,\boldsymbol{\pi}_2}
		+
		\frac{ 
			\varphi_{ \boldsymbol{\pi}_3, \tau}	
		}{ 
			f_{vx,\boldsymbol{\pi}_3} 
		}
		\tilde{W}_{\boldsymbol{\pi}_3,\boldsymbol{\pi}_2}
\right)
\frac{1}{h^L}
K_{x,h}\left( X_{\boldsymbol{\pi}_3}-X_{\boldsymbol{\pi}_1}\right) 	
\mid 
\zeta_{\boldsymbol{\pi}_1}
\right]	
\\
=&
E
\left[ 
\left\{ 
\frac{ 
	\varphi_{ \boldsymbol{\pi}_1, \tau}	
}{ 
	f_{vx,\boldsymbol{\pi}_1} 
}	 
E
\left[ 
	\tilde{W}_{\boldsymbol{\pi}_1,\boldsymbol{\pi}_2}
	\mid 
	X_{\boldsymbol{\pi}_1}
\right]	
+
E
\left[  
		\frac{ 
			\varphi_{ \boldsymbol{\pi}_3, \tau}	
		}{ 
			f_{vx,\boldsymbol{\pi}_3} 
		}
\mid 
X_{\boldsymbol{\pi}_3}
\right]
E
\left[ 
	\tilde{W}_{\boldsymbol{\pi}_3,\boldsymbol{\pi}_2}
	\mid 
	X_{\boldsymbol{\pi}_3}
\right]	
\right\}
\frac{1}{h^L}
K_{x,h}\left( X_{\boldsymbol{\pi}_3}-X_{\boldsymbol{\pi}_1}\right) 	
\mid 
\zeta_{\boldsymbol{\pi}_1}
\right]	
\\
=&
\int 
\left\{ 
\frac{ 
	\varphi_{ \boldsymbol{\pi}_1, \tau}	
}{ 
	f_{vx,\boldsymbol{\pi}_1} 
}	
\chi_{\boldsymbol{\pi}_1}
+
E
\left[  
		\frac{ 
			\varphi_{ \boldsymbol{\pi}_3, \tau}	
		}{ 
			f_{vx,\boldsymbol{\pi}_3} 
		}
		\chi_{\boldsymbol{\pi}_3}
\mid 
X_{\boldsymbol{\pi}_3}
\right]
\right\}
\frac{1}{h^L}
K_{x,h}\left( X_{\boldsymbol{\pi}_3}-X_{\boldsymbol{\pi}_1}\right) 	
f_{x}(X_{\boldsymbol{\pi}_3})
d X_{\boldsymbol{\pi}_3}
\end{align*}
where the second equality follows from a Law of Iterated Expectations and Assumption \ref{Ass:SR00:iidsampling}.

Let 
\begin{eqnarray*}
\Xi
\left( X_{\boldsymbol{\pi}_3} \right) 
&=&
E
\left[  
D_{ \boldsymbol{\pi}_3, \tau}^\ast	
\chi_{\boldsymbol{\pi}_3}
\mid 
X_{\boldsymbol{\pi}_3}
\right],
\end{eqnarray*}
and consider 
\begin{eqnarray*}
&&
\int 
\left\{ 
\frac{ 
	\varphi_{ \boldsymbol{\pi}_1, \tau}	
}{ 
	f_{vx,\boldsymbol{\pi}_1} 
}	
\chi_{\boldsymbol{\pi}_1} 
f_{x}(X_{\boldsymbol{\pi}_3})
+
\Xi\left( X_{\boldsymbol{\pi}_3} \right) 
\right\}
\frac{1}{h^L}
K_{x,h}\left( X_{\boldsymbol{\pi}_3}-X_{\boldsymbol{\pi}_1}\right) 	
d X_{\boldsymbol{\pi}_3}
-
\left\{ 
\frac{ 
	\varphi_{ \boldsymbol{\pi}_1, \tau}	
}{ 
	f_{vx,\boldsymbol{\pi}_1} 
}	
\chi_{\boldsymbol{\pi}_1} 
f_{x}(X_{\boldsymbol{\pi}_1})
+
\Xi\left( X_{\boldsymbol{\pi}_1} \right) 
\right\}
\\
&=&
\int 
\left\{ 
\frac{ 
	\varphi_{ \boldsymbol{\pi}_1, \tau}	
}{ 
	f_{vx,\boldsymbol{\pi}_1} 
}	
\chi_{\boldsymbol{\pi}_1} 
f_{x}(X_{\boldsymbol{\pi}_1} + h \boldsymbol{\nu})
+
\Xi
\left( 
	X_{\boldsymbol{\pi}_1} + h \boldsymbol{\nu}
\right) 
\right\}
K_{x,h}\left( \boldsymbol{\nu} \right) 	
d \boldsymbol{\nu}
-
\left\{ 
\frac{ 
	\varphi_{ \boldsymbol{\pi}_1, \tau}	
}{ 
	f_{vx,\boldsymbol{\pi}_1} 
}	
\chi_{\boldsymbol{\pi}_1} 
f_{x}(X_{\boldsymbol{\pi}_1})
+
\Xi\left( X_{\boldsymbol{\pi}_1} \right) 
\right\}
\\
&=&
\int 
\left\{ 
\frac{ 
	\varphi_{ \boldsymbol{\pi}_1, \tau}	
}{ 
	f_{vx,\boldsymbol{\pi}_1} 
}	
\chi_{\boldsymbol{\pi}_1} 
\left( 
f_{x}(X_{\boldsymbol{\pi}_1} + h \boldsymbol{\nu})
-
f_{x}(X_{\boldsymbol{\pi}_1})
\right)
\right\}
+
\left\{ 
\Xi
\left( 
	X_{\boldsymbol{\pi}_1} + h \boldsymbol{\nu}
\right) 
-
\Xi\left( X_{\boldsymbol{\pi}_1} \right) 
\right\}
K_{x}\left( \boldsymbol{\nu} \right) 	
d \boldsymbol{\nu}
\\
&=& 
o (h^{\overline{M}})
\end{eqnarray*}
where the first equality follows from a change of variable $\boldsymbol{\nu}=h^{-1}\left( X_{\boldsymbol{\pi}_3}- X_{\boldsymbol{\pi}_1} \right)$ 
with Jacobian $h^L$.
The last equality follows Assumptions 
\ref{Ass:Inf01:SamplingMoments},
\ref{Ass:Inf03:SmoothnessDensity}, and \ref{Ass:Inf04:Kernel} 
which guarantee that $f_{x}(X_{\boldsymbol{\pi}_1})$ and $\Xi\left( X_{\boldsymbol{\pi}_1} \right) $
are continuous and $\overline{M}$-times differentiable with respect to all of its arguments, and 
$K_{x}$ is a bias-reducing kernel of order $2\overline{M}$.
Observe that 
\begin{eqnarray*}
\frac{ 
	\varphi_{ \boldsymbol{\pi}_1, \tau}	
}{ 
	f_{vx,\boldsymbol{\pi}_1} 
}	
\chi_{\boldsymbol{\pi}_1} 
f_{x}(X_{\boldsymbol{\pi}_1})
&=&0
\end{eqnarray*}
holds for any $X_{\boldsymbol{\pi}_1}$ within a $\tau$ distance of the boundary $\mathbb{S}_{x}$, 
and having $h/\tau \rightarrow 0$ ensures that the change of variable 
$\boldsymbol{\nu}=h^{-1}\left( X_{\boldsymbol{\pi}_3}- X_{\boldsymbol{\pi}_1} \right)$
is not affected by boundary effects.

The previous results, and Assumption \ref{Ass:Inf04:Kernel}, yield 
\begin{eqnarray*}
E
\left[ 
p_{\boldsymbol{\pi}_1 , \boldsymbol{\pi}_3  }
\left( \overline{\sigma} \right)	
\mid 
\zeta_{\boldsymbol{\pi}_1}
\right]	
=
D_{ \boldsymbol{\pi}_1, \tau}^\ast	
\chi_{\boldsymbol{\pi}_1}
+
E
\left[ 
D_{ \boldsymbol{\pi}_1, \tau}^\ast
\chi_{\boldsymbol{\pi}_1}
\mid 
X_{\boldsymbol{\pi}_1}
\right]
-
E
\left[ 
D_{ \boldsymbol{\pi}_1, \tau}^\ast
\chi_{\boldsymbol{\pi}_1}
\mid 
\omega_{\boldsymbol{\pi}_1}
\right]
-
E
\left[ 
D_{ \boldsymbol{\pi}_1, \tau}^\ast
\chi_{\boldsymbol{\pi}_1}
\right]
+ o(1).
\end{eqnarray*}

Notice that for $\boldsymbol{\pi}_s \in  \{ \left( \boldsymbol{\pi}_1(1),\boldsymbol{\pi}_2(2)\right), \left( \boldsymbol{\pi}_2(1),\boldsymbol{\pi}_1(2) \right),  \boldsymbol{\pi}_2 \}$,
\begin{eqnarray*}
&&
E
\left[ 
\tilde{W}_{\boldsymbol{\pi}_1,\boldsymbol{\pi}_2}
\left\{ 
\frac{1}{h^L}
\frac{ 
	\varphi_{ \boldsymbol{\pi}_s, \tau}
	}{ 
		f_{vx,\boldsymbol{\pi}_s} 	 
}
K_{x,h}\left( X_{\boldsymbol{\pi}_3}-X_{\boldsymbol{\pi}_s}\right)
-
E\left[ 
	D_{\boldsymbol{\pi}_s, \tau}^\ast
	\mid 
	\omega_{\boldsymbol{\pi}_s}
\right]
\right\}
\mid 
\zeta_{\boldsymbol{\pi}_1}
\right]	
\\
&=&
E
\left[ 
\tilde{W}_{\boldsymbol{\pi}_1,\boldsymbol{\pi}_2}
\left\{ 
E
\left[ 
\frac{1}{h^L}
\frac{ 
	\varphi_{ \boldsymbol{\pi}_s, \tau}
	}{ 
		f_{vx,\boldsymbol{\pi}_s} 	 
}
K_{x,h}\left( X_{\boldsymbol{\pi}_3}-X_{\boldsymbol{\pi}_s}\right)
\mid 
\Omega_{\sigma}, \zeta_{\boldsymbol{\pi}_1}
\right]
-
E\left[ 
	D_{\boldsymbol{\pi}_s, \tau}^\ast
	\mid 
	\omega_{\boldsymbol{\pi}_s}
\right]
\right\}
\mid 
\zeta_{\boldsymbol{\pi}_1}
\right]
\\
&=&
O
\left( 
h^{\overline{M}} 
\right)
\end{eqnarray*}
since the expectation 
\begin{eqnarray*}
E
\left[ 
\frac{1}{h^L}
\frac{ 
	\varphi_{ \boldsymbol{\pi}_s, \tau}
	}{ 
		f_{vx,\boldsymbol{\pi}_s} 	 
}
K_{x,h}\left( X_{\boldsymbol{\pi}_3}-X_{\boldsymbol{\pi}_s}\right)
\mid 
\Omega_{\sigma}, \zeta_{\boldsymbol{\pi}_1}
\right]
&=&
\int 
\frac{1}{h^L}
E
\left[  
\frac{ 
	\varphi_{ \boldsymbol{\pi}_s, \tau}
	}{ 
		f_{vx,\boldsymbol{\pi}_s} 	 
}
\mid 
\omega_{\boldsymbol{\pi}_s}
\right]
K_{x,h}\left( X_{\boldsymbol{\pi}_3}-X_{\boldsymbol{\pi}_s}\right)
f_{x}\left( X_{\boldsymbol{\pi}_3} \right)
d X_{\boldsymbol{\pi}_3}
\\
&=&
E
\left[  
	D_{ \boldsymbol{\pi}_s, \tau}^\ast
	\mid 
\omega_{\boldsymbol{\pi}_s}
\right]
+
O
\left( 
	h^{\overline{M}} 
\right), 
\end{eqnarray*}
where the second equality follows from 
Assumptions \ref{Ass:SR00:iidsampling}, \ref{Ass:SR01:distr}, 
and properties of the bias-reducing kernel, Assumption \ref{Ass:Inf04:Kernel}.

Similarly, for a given 
$\boldsymbol{\pi}_s \in  \{ \left( \boldsymbol{\pi}_3(1),\boldsymbol{\pi}_2(2)\right), \left( \boldsymbol{\pi}_2(1),\boldsymbol{\pi}_3(2) \right),  \boldsymbol{\pi}_2 \}$, 
it follows from Assumptions \ref{Ass:SR00:iidsampling}, \ref{Ass:SR01:distr}, 
\ref{Ass:Inf03:SmoothnessDensity}, and \ref{Ass:Inf04:Kernel}, that 
\begin{eqnarray*}
&&
E
\left[
\frac{1}{h^L}	 
\left(			
\frac{ 
	\varphi_{\boldsymbol{\pi}_s, \tau}	
}{ 
	f_{vx,\boldsymbol{\pi}_s} 
}
\tilde{W}_{\boldsymbol{\pi}_3,\boldsymbol{\pi}_2}
K_{x,h}
\left( 
	X_{\boldsymbol{\pi}_1}-X_{\boldsymbol{\pi}_s}
\right) 
\right)
\mid 
\zeta_{\boldsymbol{\pi}_1}
\right]
-
\Xi
\left[ 
X_{\boldsymbol{\pi}_1}
\right]
\\
&=&
E
\left[ 
\frac{1}{h^L}	
E
\left[ 	
	\left(			
	\frac{ 
		\varphi_{ \boldsymbol{\pi}_s, \tau}	
	}{ 
		f_{vx,\boldsymbol{\pi}_s}  
	}
	\chi_{\boldsymbol{\pi}_s}
	\right)
\mid 
X_{\boldsymbol{\pi}_s} 
\right]
K_{x,h}
\left( 
X_{\boldsymbol{\pi}_1}-X_{\boldsymbol{\pi}_s}
\right) 
\mid 
\zeta_{\boldsymbol{\pi}_1}
\right]
-
\Xi
\left[ 
X_{\boldsymbol{\pi}_1}
\right]
\\
&=&
\int 
\left\{ 
\Xi
\left[ 
	X_{ \boldsymbol{\pi}_1} + h \boldsymbol{\nu} 
\right]
-
\Xi
\left[ 
	X_{\boldsymbol{\pi}_1}
\right]
\right\}
K_{x}
\left( 
	\boldsymbol{\nu}
\right)
d\boldsymbol{\nu}
\\
&=&
O
\left( 
h^{\overline{M}}	
\right).
\end{eqnarray*}

Using the previous results it follows that 
\begin{eqnarray*}
E
\left[ 
p_{\boldsymbol{\pi}_s, \boldsymbol{\pi}_3}\left( \overline{\sigma} \right) 
\mid 
\zeta_{\boldsymbol{\pi}_1}
\right]	
&=&
E
\left[
D_{\boldsymbol{\pi}_1, \tau}^\ast
\chi_{\boldsymbol{\pi}_1}
\mid 
X_{\boldsymbol{\pi}_1}
\right]
-
E\left[ 
D_{\boldsymbol{\pi}_1, \tau}^\ast
\chi_{\boldsymbol{\pi}_1}
\right],	
\end{eqnarray*}
and thus,  
\begin{eqnarray*}
&&
E
\left[ 
p_{\boldsymbol{\pi}_1, \boldsymbol{\pi}_3}\left( \overline{\sigma} \right) 
-
p_{\boldsymbol{\pi}_1(1)\boldsymbol{\pi}_2(2), \boldsymbol{\pi}_3}\left( \overline{\sigma} \right) 
-
p_{\boldsymbol{\pi}_2(1)\boldsymbol{\pi}_1(2), \boldsymbol{\pi}_3}\left( \overline{\sigma} \right) 
+
p_{\boldsymbol{\pi}_2, \boldsymbol{\pi}_3}\left( \overline{\sigma} \right) 
\mid 
\zeta_{\boldsymbol{\pi}_1}
\right]	
\\
&=&
\left\{ 
D_{ \boldsymbol{\pi}_1}^\ast	
-
E
\left[ 
	D_{ \boldsymbol{\pi}_1}^\ast
	\mid 
	\omega_{\boldsymbol{\pi}_1}
\right]
\right\}
I_{\tau, \boldsymbol{\pi}_1}
\chi_{\boldsymbol{\pi}_1} + o(1)
\end{eqnarray*}

It then follows that the H\'{a}jek projection is given by 
\begin{eqnarray*}
V_{2,n \tau}^\ast
&=&
\frac{1}{n(n-1)}
\sum_{ i_1 =1}^{n}
\sum_{ j_1 \neq i_1 }
	\xi_{i_1j_1, \tau}
+ o(1)	
\end{eqnarray*}
with 
\begin{eqnarray*}
\xi_{i_1j_1, \tau}
&=&
\left\{ 
D^\ast_{ i_1 j_1} 
-
E
\left[ 
	D^\ast_{i_1j_1}
	\mid 
	\omega_{i_1j_1}	
\right]
\right\}
I_{\tau, i_1j_1} 
\overline{\chi}_{i_1j_1}
\\
\overline{\chi}_{i_1j_1}
&=&
\left\{ 
\frac{1}{(n-2)(n-3)}
\sum_{i_2 \neq i_1 , j_1} \sum_{j_2 \neq i_1 , j_1, i_2} 
	E\left[ 
		\tilde{W}_{\sigma\{i_1,i_2;j_1,j_2\}}
		\mid 	
		X_{i_1}, X_{j_1}
	\right]
\right\}.	
\end{eqnarray*}

If follows from a Law of Iterated Expectations that 
\begin{eqnarray*}
E
\left[ 
	V_{2,n \tau}^\ast
\right]
&=&
E
\left[ 
	\xi_{i_1j_1, \tau}	
	\right]
=
0. 
\end{eqnarray*}

\textbf{Step 2. Variance of H\'{a}jek Projection}

As in the proof of Lemma \ref{Lemma:TechAppx:Hajek_Sum1},
the variance of 
$ V_{1,n \tau}^\ast$ is given by 
\begin{eqnarray*}
Var\left( V_{1,n \tau}^\ast \right)
&=&
\left\{ \frac{1}{n(n-1)}\right\}^2
\left\{ 	
\sum_{i_1 =1}^n  \sum_{j_1 \neq i_1} 
E\left[ 
\xi_{i_1j_1, \tau}
\xi_{i_1'j_1', \tau}'
\right] 
\right\}
\\
&=&
\left\{ \frac{1}{n(n-1)}\right\}^2
\left\{ 
\sum_{i_1 =1}^n  \sum_{j_1 \neq i_1} 
\Lambda^\ast_{i_1, j_1}
\right\}
\end{eqnarray*}
where 
\begin{eqnarray*}
\Lambda^\ast_{i_1, j_1}
&=&
E
\left[ 
\left\{ 
	E
	\left[ 
		D^\ast_{i_1j_1}
		D^\ast_{i_1j_1}
		\mid 
		\omega_{i_1j_1}
	\right]	
	-
	E
	\left[ 
		D^\ast_{i_1j_1}
		\mid 
		\omega_{i_1j_1}
	\right]^2		
\right\}
I_{\tau, i_1j_1}^2
\overline{\chi}_{i_1j_1}
\overline{\chi}_{i_1j_1}'
\right].
\end{eqnarray*}

Define 
\begin{eqnarray*}
\Upsilon_{n} &=& n(n-1) Var\left( V_{1,n \tau}^\ast \right)
=
\frac{1}{n(n-1)}
\left\{ 
\sum_{i_1 =1}^n  \sum_{j_1 \neq i_1} 
\Lambda^\ast_{i_1, j_1}
\right\}.
\end{eqnarray*}

\textbf{Step 3. Variance of $S_{2,n \tau}$}

Given two different 6-tuples $\overline{\sigma}\{i_1,i_2,j_1,j_2, l_1,l_2\}$ and $\overline{\sigma}'\{i_1',i_2',j_1',j_2', l_1',l_2'\}$, let 
\begin{eqnarray*}
\Delta_{c,n} 
= 
Cov
\left( 
	s_{2, n} \left( \sigma\{i_1, i_2, j_1, j_2, l_1, l_2\} \right), 
	s_{2, n} \left( \sigma'\{i_1', i_2' , j_1' , j_2', l_1', l_2'\} \right)
\right)	
\end{eqnarray*}
denote the covariance between 
$s_{2,n}(\overline{\sigma})$ 
and 
$s_{2, n}(\overline{\sigma}')$ 
when 
$\overline{\sigma}$ and $\overline{\sigma}'$ have $c=0,1,2,3,4, 5, 6$ indices in common. 

It follows from the conditionally independent formation of links, implied by Assumption \ref{Ass:SR01:distr}, 
and the conditional mean zero, 
$E
\left[ 
s_{2, n}
\left( 
	\sigma\{i_1, i_2, j_1, j_2, l_1, l_2\}
\right)	
\mid \Omega_\sigma
\right]=0$,  
that $\Delta_{0,n}= \Delta_{1,n} =0$. 

Consider 
\begin{eqnarray*}
\Delta_{2,n}
&=&
E\left[ 
s_{2,n}(\overline{\sigma}\{i_1,i_2 ,j_1,j_2, l_1, l_2\} )	
s_{2,n}(\overline{\sigma'}\{i_1,i_2',j_1,j_2', l_1, l_2'\} )'	
\right]
\\
&=&
E
\left[
s_{i_1j_1}\left( \overline{\sigma} \right)
s_{i_1j_1}\left( \overline{\sigma'} \right)'
\right]
+
o(1)	
\\
&=&
E
\left[ 
\left\{ 
E
\left[ 
	\tilde{D}^\ast_{i_1j_1,\tau} 
	\tilde{D}^\ast_{i_1j_1,\tau} 
	\mid 
	\omega_{i_1j_1}
\right]	
-
E
\left[ 
	\tilde{D}^\ast_{i_1j_1,\tau} 
	\mid 
	\omega_{i_1j_1}
\right]^2
\right\}
I_{\tau, i_1j_1}^2
\tilde{W}_{\sigma}
\tilde{W}_{\sigma'}
\right]
+
o(1).
\end{eqnarray*}

Therefore, the variance of $Var(S_{2,n \tau}^\dagger)$ can be expressed as 
\begin{eqnarray*}
&&
\left(  \frac{1}{\overline{m}_{n}} \right)^2
\sum_{ \overline{\sigma} } 
\sum_{ \overline{\sigma}' } 
E
\left[ 
	\left( 
	s_{2,n}(\overline{\sigma})
	s_{2,n}(\overline{\sigma'})'
	\right)
\right]
\\
&& 
+
\left(  
4! \binom{n}{4}
\right)^{-2}
\sum_{ i_1=1}^{n} 
\sum_{ j_1 \neq i_1} 
\left\{ 
\sum_{k_1 \neq i_1 , j_1} \sum_{k_2 \neq i_1 , j_1, k_1}
\sum_{l_1 \neq i_1 , j_1} \sum_{l_2 \neq i_1 , j_1, l_1}
\Delta_{2,n}
\right\}
\\
&&
+
O\left(\frac{1}{n^3}\right)
\Delta_{3,n}
+
O\left(\frac{1}{n^4}\right)
\Delta_{4,n}
+
O\left(\frac{1}{n^5}\right)
\Delta_{5,n}
+
O\left(\frac{1}{n^6}\right)
\Delta_{6,n}
\end{eqnarray*}

Notice that the term inside the brackets scaled by $((n-2)(n-3))^{-2}$ can be written as 
\begin{eqnarray*}
&&
\left(
	\frac{1}{(n-2) (n-3)}
\right )^2	
\sum_{k_1 \neq i_1 , j_1} \sum_{k_2 \neq i_1 , j_1, k_1}
\sum_{l_1 \neq i_1 , j_1} \sum_{l_2 \neq i_1 , j_1, l_1}
\Delta_{2,n}
\\
&=&
E
\left[ 
	\left\{ 
		E
		\left[ 
			D^\ast_{i_1j_1}
			D^\ast_{i_1j_1}
			\mid 
			\omega_{i_1j_1}
		\right]	
		-
		E
		\left[ 
			D^\ast_{i_1j_1}
			\mid 
			\omega_{i_1j_1}
		\right]^2		
	\right\}
	I_{\tau, i_1j_1}^2
	\chi_{i_1j_1}
	\chi_{i_1j_1}'
\right]
\\
&=& 
\Lambda^\ast_{i_1,j_1}.
\end{eqnarray*}

As a result, 
\begin{eqnarray*}
Var
\left[ 
	S_{2,n \tau}^\dagger
\right]
&=&
\left(  \frac{1}{n(n-1)} \right)^2
\left\{ 
\sum_{ i_1=1}^{n} 
\sum_{j_1\neq i_1}
	\Lambda^\ast_{i_1,j_1} 
\right\}
+ o(1),
\end{eqnarray*}
and $Var \left[  V_{2,n \tau}^\ast   \right] - Var \left[  S_{2,n \tau}^\dagger \right] = o_p(1)$.

The asymptotic equivalence results follows from similar arguments as in the proof of Lemma \ref{Lemma:TechAppx:Hajek_Sum1}.
The proof is complete.
\end{proof}
\begin{lemma}\label{Lemma:TechAppx:Hajek_Sum3}
Under the same Assumptions of Theorem \ref{Theo:Inf:AN}, it follows that the H\'{a}jek projection of 
\begin{eqnarray*}
S_{3,n \tau}^\dagger
&=&
S_{3,n \tau}
-
E 
\left[ 
	\tilde{W}_{\sigma}
	\tilde{D}^\ast_{\sigma,\tau} 
	\mid 
	\Omega_{\sigma}
\right]
\\
S_{3,n \tau}
&=&
\frac{1}{m_{n}} 
\sum_{ \sigma \in \mathcal{N}_{m_{n}}} 
\tilde{W}_{\sigma}
\left\{ 
	\left( 
		\frac{ 
			D_{ i_1 j_1, \tau}^\ast	
			\widehat{f}_{vx,i_1 j_1} 
		}{ 
			f_{vx,i_1 j_1} 
		}
		-
		\frac{ 
			D_{ i_1 j_2, \tau}^\ast	
			\widehat{f}_{vx,i_1 j_2} 
		}{ 
			f_{vx,i_1 j_2} 
		}
	\right)
	-
	\left( 
		\frac{ 
			D_{ i_2 j_1, \tau}^\ast
			\widehat{f}_{vx,i_2 j_1} 
		}{ 
			f_{vx,i_2 j_1} 
		}
		-
		\frac{ 
			D_{ i_2 j_2, \tau}^\ast	
			\widehat{f}_{vx,i_2 j_2} 
		}{ 
			f_{vx,i_2 j_2} 
		}
	\right)
\right\}
\end{eqnarray*}
into an arbitrary function $\zeta_{i_1j_1} = \left( X_{i_1},X_{j_1}, A_{i_1},A_{j_1}, v_{i_1j_1},U_{i_1j_1} \right)$ is given by 
\begin{eqnarray*}
V_{3,n\tau}^\ast
\;
=
\;
E \left[  S_{3,n \tau}^\dagger \mid  \zeta_{i_1j_1} \right]	
&=& 
\frac{1}{n(n-1)} 
\sum_{i_1 = 1}^{n} \sum_{j_1 \neq i_1}
\xi_{i_1j_1, \tau}
\end{eqnarray*}
and 
\begin{eqnarray*}
n(n-1)
\Upsilon_{n}^{-1/2}
E 
\left[  
	\left(  
		S_{3,n \tau}^\dagger - 	V_{3,n\tau}^\ast
	\right)^2
\right]
\Upsilon_{n}^{-1/2}
= o(1),
\end{eqnarray*}
where $\Upsilon_{n} = n(n-1) Var(V_{3,n\tau}^\ast)$.
\end{lemma}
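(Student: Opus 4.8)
The plan is to mirror the proof of Lemma \ref{Lemma:TechAppx:Hajek_Sum2}, since $S_{3,n\tau}$ has exactly the structure of $S_{2,n\tau}$ with the estimated joint density $\widehat{f}_{vx}$ in the numerator instead of $\widehat{f}_{x}$ (and one extra smoothing argument). First I would substitute the kernel definition of $\widehat{f}_{vx,l_1l_2}$ into $S_{3,n\tau}^\dagger$, so that it becomes a $V$-statistic of order six indexed by a $6$-tuple $\overline{\sigma}\{i_1,i_2,j_1,j_2,k_1,k_2\}$ built from the four tetrad indices and the two kernel indices $k_1,k_2$, just as in Lemma \ref{Lemma:TechAppx:Hajek_Sum2}; the only structural change is that the relevant bandwidth factor is $h^{L+1}$ rather than $h^{L}$. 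Invoking Lemma \ref{Lemma:TechAppx:UEquiv} together with the moment and kernel bounds of Assumptions \ref{Ass:Inf01:SamplingMoments}, \ref{Ass:Inf0302:Boundedness}, \ref{Ass:Inf04:Kernel} and the rate $n^{1-\overline{\delta}}h^{L+1}\to\infty$ then shows that $S_{3,n\tau}^\dagger$ is asymptotically equivalent to a sixth-order $U$-statistic with an $n$-varying kernel.

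The core of the argument is the H\'{a}jek projection onto an arbitrary function of $\zeta_{i_1j_1}$, which I would compute using the boldface dyad-index notation $\boldsymbol{\pi}_1,\boldsymbol{\pi}_2,\boldsymbol{\pi}_3$ from Lemma \ref{Lemma:TechAppx:Hajek_Sum2}. The new ingredient is the conditional expectation of each kernel term given $\zeta_{\boldsymbol{\pi}_1}$: by iterated expectations and Assumptions \ref{Ass:SR00:iidsampling}--\ref{Ass:SR01:distr}, integrating out the kernel index via the change of variables $\boldsymbol{\nu}=h^{-1}(v_{\boldsymbol{\pi}_3}-v_{\boldsymbol{\pi}_1},X_{\boldsymbol{\pi}_3}-X_{\boldsymbol{\pi}_1})$ with Jacobian $h^{L+1}$ yields $E[\,\widehat{f}_{vx,\boldsymbol{\pi}_1}/f_{vx,\boldsymbol{\pi}_1}\mid\omega_{\boldsymbol{\pi}_1}]=1+O(h^{\overline{M}})$, where the $O(h^{\overline{M}})$ bias comes from the $\overline{M}$-smoothness of $f_{vx}$ (Assumptions \ref{Ass:Inf01:SamplingMoments}, \ref{Ass:Inf03:SmoothnessDensity}) and the bias-reducing property of $K_{vx}$ of order $2\overline{M}$ (Assumption \ref{Ass:Inf04:Kernel}); the condition $h/\tau\to0$ ensures this change of variables is unaffected by boundary effects because $D^\ast_{\boldsymbol{\pi}_1,\tau}$ vanishes within a $\tau$-neighborhood of the boundary of $\mathbb{S}_{vx}$. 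Replacing $\tilde{W}_{\sigma}$ by its conditional mean and combining the four symmetrized pieces so that the terms depending only on $X_{\boldsymbol{\pi}_1}$ cancel in the difference, I obtain
\[
V_{3,n\tau}^\ast=\frac{1}{n(n-1)}\sum_{i_1=1}^{n}\sum_{j_1\neq i_1}\xi_{i_1j_1,\tau}+o(1),\qquad \xi_{i_1j_1,\tau}=\left\{D^\ast_{i_1j_1}-E\left[D^\ast_{i_1j_1}\mid\omega_{i_1j_1}\right]\right\}I_{\tau,i_1j_1}\overline{\chi}_{i_1j_1},
\]
with $E[V_{3,n\tau}^\ast]=E[\xi_{i_1j_1,\tau}]=0$ by a law of iterated expectations.

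The remaining steps are parallel to Lemmas \ref{Lemma:TechAppx:Hajek_Sum1}--\ref{Lemma:TechAppx:Hajek_Sum2}. For the variance of the projection, conditional independence (Assumptions \ref{Ass:SR00:iidsampling}, \ref{Ass:SR01:distr}) annihilates the cross-covariances of dyad pairs with zero or one common index, giving $Var(V_{3,n\tau}^\ast)=[n(n-1)]^{-2}\sum_{i_1}\sum_{j_1}\Lambda^\ast_{i_1,j_1}$ and $\Upsilon_n=n(n-1)Var(V_{3,n\tau}^\ast)$. For the variance of $S_{3,n\tau}^\dagger$, I would decompose it according to the number $c=0,\dots,6$ of common indices: $\Delta_{0,n}=\Delta_{1,n}=0$ by conditional mean zero and conditional independence, the $c=2$ contribution reproduces $\Lambda^\ast_{i_1,j_1}$ after the appropriate $[(n-2)(n-3)]^{-2}$ rescaling, and the $c\ge3$ terms are $O(n^{-3})$ or smaller, so that $Var(S_{3,n\tau}^\dagger)-Var(V_{3,n\tau}^\ast)=o(1)$. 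Finally, since $V_{3,n\tau}^\ast$ is the orthogonal projection, $Cov(V_{3,n\tau}^\ast,S_{3,n\tau}^\dagger)=Var(V_{3,n\tau}^\ast)$, whence $E[(S_{3,n\tau}^\dagger-V_{3,n\tau}^\ast)(S_{3,n\tau}^\dagger-V_{3,n\tau}^\ast)']=Var(S_{3,n\tau}^\dagger)-Var(V_{3,n\tau}^\ast)=o(1)$, which after the scaling by $n(n-1)\Upsilon_n^{-1/2}(\cdot)\Upsilon_n^{-1/2}$ gives the claimed asymptotic equivalence. The main obstacle is the projection step, and within it the bias analysis of the additional smoothing in the $v$-direction: one must check that dividing the $O(h^{\overline{M}})$ kernel bias by the trimmed density $f_{vx}$ (bounded away from zero on the support of $I_\tau$ by Assumptions \ref{Ass:Inf01:SamplingMoments}--\ref{Ass:Inf02:Trimming}) still leaves an $o(1)$ remainder, and that $h/\tau\to0$ is precisely what legitimizes the boundary-corrected change of variables; everything else is bookkeeping identical to Lemma \ref{Lemma:TechAppx:Hajek_Sum2}.
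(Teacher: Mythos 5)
Your proposal is correct and follows essentially the same route as the paper's own proof: substitution of the kernel to obtain a sixth-order statistic over the tuples $\overline{\sigma}\{i_1,i_2,j_1,j_2,k_1,k_2\}$, the H\'{a}jek projection computed via the dyad indices $\boldsymbol{\pi}_1,\boldsymbol{\pi}_2,\boldsymbol{\pi}_3$ with the change of variables in $(v,X)$ of Jacobian $h^{L+1}$, the $O(h^{\overline{M}})$ bias control from the $2\overline{M}$-order kernel together with $h/\tau\rightarrow 0$ for boundary effects, and the variance comparison by number of common indices plus the orthogonal-projection identity. No gaps to flag.
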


\begin{proof}
Consider a fixed 6-tuple $\{i_1,i_2,j_1,j_2,k_1,k_2\}$, and define  
\begin{eqnarray*}
s_{i_1, j_1}(\overline{\sigma})
&=&
\tilde{W}_{i_1i_2,j_1j_2}
\left\{ 
		\frac{1}{h^{L+1}}
		\frac{ 
			D_{ i_1 j_1, \tau}^\ast	
		}{ 
			f_{vx,i_1 j_1} 
		}
		K_{vx,h}\left( v_{k_1k_2}-v_{i_1j_1}, X_{k_1}-X_{i_1}, X_{k_2}-X_{j_1}\right) 
		-
		E\left[ 
			D_{i_1j_1,\tau}^\ast 
			\mid 
			\Omega_{i_1i_2,j_1j_2}
		\right]
\right\}		
\\
s_{i_1,j_2}(\overline{\sigma})
&=&
\tilde{W}_{i_1i_2,j_1j_2}
\left\{ 
		\frac{1}{h^{L+1}}
		\frac{ 
		D_{ i_1 j_2, \tau}^\ast	
		}{ 
			f_{vx,i_1 j_2} 
		}
		K_{vx,h}\left( v_{k_1k_2}-v_{i_1j_2}, X_{k_1}-X_{i_1}, X_{k_2}-X_{j_2}\right) 
		-
		E\left[  
			D_{i_1j_2,\tau}^\ast 
			\mid 
			\Omega_{i_1i_2,j_1j_2}
		\right]	
\right\}
\\
s_{i_2,j_1}(\overline{\sigma})
&=&
\tilde{W}_{i_1i_2,j_1j_2}
\left\{ 
		\frac{1}{h^{L+1}}
		\frac{ 
			D_{ i_2 j_1, \tau^\ast}
			}{ 
				f_{vx,i_2 j_1} 	 
		}
		K_{vx,h}\left( v_{k_1k_2}-v_{i_2j_1}, X_{k_1}-X_{i_2}, X_{k_2}-X_{j_1}\right)
		-
		E
		\left[ 
			D_{i_2j_1,\tau}^\ast 
		\mid 
		\Omega_{i_1i_2,j_1j_2}
		\right]
\right\}		
\\
s_{i_2, j_2}\overline{\sigma})
&=&
\tilde{W}_{i_1i_2,j_1j_2}
\left\{ 
	\frac{1}{h^{L+1}}
	\frac{ 
		D_{ i_2 j_2, \tau}^\ast	
	}{ 
		f_{vx,i_2 j_2} 
	}
	K_{vx,h}\left( v_{k_1k_2}-v_{i_2j_2}, X_{k_1}-X_{i_2}, X_{k_2}-X_{j_2}\right)
	-
	E\left[ 
		D_{i_2j_2,\tau}^\ast 
		\mid 
		\Omega_{i_1i_2,j_1j_2}	
	\right]
\right\},
\end{eqnarray*}
and $s_{3, n}(\overline{\sigma})= 	s_{i_1, j_1}(\overline{\sigma})-s_{i_1,j_2}(\overline{\sigma})- s_{i_2,j_1}(\overline{\sigma})+	s_{i_2,j_2}(\overline{\sigma})$.
It follows then that $S_{3,n \tau}^\dagger$ can be written as 
\begin{eqnarray*}
S_{3,n \tau}^\dagger
&=&
\left[ 
	6!\binom{n}{6}
\right]^{-1}
\sum_{\overline{\sigma} \in \mathcal{N}_{\overline{m}_n}} 
s_{2, n\tau}(\overline{\sigma})
\\
&=&
\left[ 
	6!\binom{n}{6}
\right]^{-1}
\sum_{\overline{\sigma} \in \mathcal{N}_{\overline{m}_n}} 
\left\{ 
	s_{i_1j_1}(\overline{\sigma})
	-
	s_{i_1j_2}(\overline{\sigma})
	-
	s_{i_2j_1}(\overline{\sigma})
	+
	s_{i_2j_2}(\overline{\sigma})
\right\}.
\end{eqnarray*}

\textbf{Step 1. H\'{a}jek Projection}

The rest of the proof makes use of the following index notation for dyads.
Given the total number of ordered dyads $\overline{\textbf{n}} = n(n-1)$, 
let the boldface indeces $\boldsymbol{\pi} = \textbf{1}, \textbf{2}, \cdots$ index the $\overline{\textbf{n}}$ ordered dyads in the sample.
In an abuse of notation, also let $\boldsymbol{\pi}$ denote the set $\left\{ i_1,j_1\right\}$, where $i_1$ and $j_1$ are the indices that comprise dyad $\boldsymbol{\pi}$.
In particular, $\boldsymbol{\pi}(1)=i_1$ and $\boldsymbol{\pi}(2)=j_1$, when $\boldsymbol{\pi}=\left\{ i_1,j_1\right\}$. 

With this notation at hand, $S_{3,n \tau}^\dagger$ can be expressed as 
\begin{eqnarray*}
S_{3,n \tau}^\dagger
&=&
\left[ 
	6!\binom{n}{6}
\right]^{-1}
\sum_{\boldsymbol{\pi}_1=\textbf{1}}^{\overline{\textbf{n}}}
\sum_{\boldsymbol{\pi}_2 \neq \boldsymbol{\pi}_1 }
\sum_{\boldsymbol{\pi}_3 \neq \boldsymbol{\pi}_1 }
\left\{ 
	s_{\boldsymbol{\pi}_1}(\overline{\sigma})
	-
	s_{\boldsymbol{\pi}_1(1)\boldsymbol{\pi}_2(2)}(\overline{\sigma})
	-
	s_{\boldsymbol{\pi}_2(1)\boldsymbol{\pi}_1(2)}(\overline{\sigma})
	+
	s_{\boldsymbol{\pi}_2}(\overline{\sigma})
\right\}
\end{eqnarray*}
where $\overline{\sigma}=\overline{\sigma}\left\{\boldsymbol{\pi}_1, \boldsymbol{\pi}_2, \boldsymbol{\pi}_3 \right\}$.

Let 
\begin{eqnarray*}
p_{\boldsymbol{\pi}_1, \boldsymbol{\pi}_3} 
\left( \overline{\sigma} \right)
&=& 
\frac{1}{h^{L+1}}
\left( 
		\frac{ 
			D_{ \boldsymbol{\pi}_1, \tau}^\ast	
		}{ 
			f_{vx,\boldsymbol{\pi}_1} 
		}
		\tilde{W}_{\boldsymbol{\pi}_1,\boldsymbol{\pi}_2}
		+
		\frac{ 
			D_{ \boldsymbol{\pi}_3, \tau}^\ast	
		}{ 
			f_{vx,\boldsymbol{\pi}_3} 
		}
		\tilde{W}_{\boldsymbol{\pi}_3,\boldsymbol{\pi}_2}
\right)
K_{vx,h}
\left(
	v_{\boldsymbol{\pi}_3}-v_{\boldsymbol{\pi}_1}, 
	X_{\boldsymbol{\pi}_3}-X_{\boldsymbol{\pi}_1}
\right) 
\\
&&
-
E\left[ 
	\tilde{W}_{\boldsymbol{\pi}_1,\boldsymbol{\pi}_2}
	D_{\boldsymbol{\pi}_1,\tau}^\ast 
	\mid 
	\Omega_{\boldsymbol{\pi}_1, \boldsymbol{\pi}_2}
\right]
-
E\left[ 
	\tilde{W}_{\boldsymbol{\pi}_3,\boldsymbol{\pi}_2}
	D_{\boldsymbol{\pi}_3,\tau}^\ast 
	\mid 
	\Omega_{\boldsymbol{\pi}_3, \boldsymbol{\pi}_2}
\right]	 
\\
p_{\boldsymbol{\pi}_2, \boldsymbol{\pi}_3} 
\left( \overline{\sigma} \right)
&=& 
\frac{1}{h^{L+1}} 
\tilde{W}_{\boldsymbol{\pi}_1,\boldsymbol{\pi}_2}
\left\{ 
\frac{ 
	D_{ \boldsymbol{\pi}_2, \tau}^\ast	
}{ 
	f_{vx,\boldsymbol{\pi}_2} 
}
K_{vx,h}
\left(
	v_{\boldsymbol{\pi}_3}-v_{\boldsymbol{\pi}_2},  
	X_{\boldsymbol{\pi}_3}-X_{\boldsymbol{\pi}_2}
\right) 
-
E\left[ 
D_{\boldsymbol{\pi}_2,\tau}^\ast 
\mid 
\Omega_{\boldsymbol{\pi}_1, \boldsymbol{\pi}_2}
\right]
\right\}
\\
&&
\frac{1}{h^{L+1}}
\tilde{W}_{\boldsymbol{\pi}_3,\boldsymbol{\pi}_2}
\left\{ 
\frac{ 
D_{ \boldsymbol{\pi}_2, \tau}^\ast	
}{ 
f_{vx,\boldsymbol{\pi}_2} 
}
K_{vx,h}
\left(
v_{\boldsymbol{\pi}_1}-v_{\boldsymbol{\pi}_2}, 
X_{\boldsymbol{\pi}_1}-X_{\boldsymbol{\pi}_2}
\right) 
-
E\left[ 
D_{\boldsymbol{\pi}_2,\tau}^\ast 
\mid 
\Omega_{\boldsymbol{\pi}_3, \boldsymbol{\pi}_2}
\right]	
\right\}
\\	
p_{\boldsymbol{\pi}_1(1)\boldsymbol{\pi}_2(2), \boldsymbol{\pi}_3} 
\left( \overline{\sigma} \right)
&=& 
\frac{1}{h^{L+1}}
\frac{ 
	D_{ \boldsymbol{\pi}_1(1)\boldsymbol{\pi}_2(2), \tau}^\ast	
}{ 
	f_{vx,\boldsymbol{\pi}_1(1)\boldsymbol{\pi}_2(2)} 
}
\tilde{W}_{\boldsymbol{\pi}_1,\boldsymbol{\pi}_2}
K_{vx,h}
\left( 
v_{\boldsymbol{\pi}_3}-v_{\boldsymbol{\pi}_1(1)\boldsymbol{\pi}_2(2)}, 	
X_{\boldsymbol{\pi}_3}-X_{\boldsymbol{\pi}_1(1)\boldsymbol{\pi}_2(2)}
\right) 
\\
&&
+
\frac{1}{h^{L+1}}		
\frac{ 
	D_{ \boldsymbol{\pi}_3(1)\boldsymbol{\pi}_2(2), \tau}^\ast	
}{ 
	f_{vx,\boldsymbol{\pi}_3(1)\boldsymbol{\pi}_2(2)} 
}
\tilde{W}_{\boldsymbol{\pi}_3,\boldsymbol{\pi}_2}
K_{vx,h}
\left( 
	v_{\boldsymbol{\pi}_1}-v_{\boldsymbol{\pi}_3(1)\boldsymbol{\pi}_2(2)}, 	
	X_{\boldsymbol{\pi}_1}-X_{\boldsymbol{\pi}_3(1)\boldsymbol{\pi}_2(2)}
\right) 
\\
&&
-
E\left[ 
\tilde{W}_{\boldsymbol{\pi}_1,\boldsymbol{\pi}_2}
D_{\boldsymbol{\pi}_1(1)\boldsymbol{\pi}_2(2),\tau}^\ast 
\mid 
\Omega_{\boldsymbol{\pi}_1, \boldsymbol{\pi}_2}
\right]
-
E\left[ 
\tilde{W}_{\boldsymbol{\pi}_3,\boldsymbol{\pi}_2}
D_{\boldsymbol{\pi}_3(1)\boldsymbol{\pi}_2(2),\tau}^\ast 
\mid 
\Omega_{\boldsymbol{\pi}_3, \boldsymbol{\pi}_2}
\right]	
\\
p_{\boldsymbol{\pi}_2(1)\boldsymbol{\pi}_1(2), \boldsymbol{\pi}_3} 
\left( \overline{\sigma} \right)
&=& 
\frac{1}{h^{L+1}}	
\frac{ 
	D_{ \boldsymbol{\pi}_2(1)\boldsymbol{\pi}_1(2), \tau}^\ast	
}{ 
	f_{ \boldsymbol{\pi}_2(1)\boldsymbol{\pi}_1(2), \tau}	
}
\tilde{W}_{\boldsymbol{\pi}_1,\boldsymbol{\pi}_2}
K_{vx,h}
\left( 
	v_{\boldsymbol{\pi}_3}-v_{\boldsymbol{\pi}_2(1)\boldsymbol{\pi}_1(2)}, 
	X_{\boldsymbol{\pi}_3}-X_{\boldsymbol{\pi}_2(1)\boldsymbol{\pi}_1(2)}
\right) 
\\
&&
+	
\frac{1}{h^{L+1}}		
\frac{ 
D_{ \boldsymbol{\pi}_2(1)\boldsymbol{\pi}_3(2), \tau}^\ast	
}{ 
f_{vx,\boldsymbol{\pi}_2(1)\boldsymbol{\pi}_3(2)} 
}
\tilde{W}_{\boldsymbol{\pi}_3,\boldsymbol{\pi}_2}
K_{vx,h}
\left( 
v_{\boldsymbol{\pi}_1}-v_{\boldsymbol{\pi}_2(1)\boldsymbol{\pi}_3(2)}, 
X_{\boldsymbol{\pi}_1}-X_{\boldsymbol{\pi}_2(1)\boldsymbol{\pi}_3(2)}
\right) 
\\
&&
-
E\left[ 
\tilde{W}_{\boldsymbol{\pi}_1,\boldsymbol{\pi}_2}
D_{\boldsymbol{\pi}_2(1)\boldsymbol{\pi}_1(2),\tau}^\ast 
\mid 
\Omega_{\boldsymbol{\pi}_1, \boldsymbol{\pi}_2}
\right]
-
E\left[ 
\tilde{W}_{\boldsymbol{\pi}_3,\boldsymbol{\pi}_2}
D_{\boldsymbol{\pi}_2(1)\boldsymbol{\pi}_3(2),\tau}^\ast 
\mid 
\Omega_{\boldsymbol{\pi}_3, \boldsymbol{\pi}_2}
\right]	
\\		
\end{eqnarray*}
where
$K_{vx,h}
\left( 
v_{\boldsymbol{\pi}_3}-v_{\boldsymbol{\pi}_1},
X_{\boldsymbol{\pi}_3}-X_{\boldsymbol{\pi}_1}
\right)$ 	
denotes 
$K_{vx,h}
\left(
v_{\boldsymbol{\pi}_3} - v_{\boldsymbol{\pi}_1},  
X_{\boldsymbol{\pi}_3(1)} - X_{\boldsymbol{\pi}_1(1)}, 
X_{\boldsymbol{\pi}_3(2)} - X_{\boldsymbol{\pi}_1(2)}
\right)$, 
$\tilde{W}_{\boldsymbol{\pi}_1, \boldsymbol{\pi}_2 }$ 
denotes 
$\tilde{W}_{\boldsymbol{\pi}_1\{i_1i_2\},\boldsymbol{\pi}_2\{j_1j_2\} }$, 
and
\begin{eqnarray*}
\chi_{\boldsymbol{\pi}_1}
&=&
E
\left[ 
\tilde{W}_{\boldsymbol{\pi}_1, \boldsymbol{\pi}_2 }
\mid 
X_{\boldsymbol{\pi}_1}
\right]
\\
\overline{\chi}_{\boldsymbol{\pi}_1}
&=&
\sum_{\boldsymbol{\pi}_2 \neq \boldsymbol{\pi}_1, \boldsymbol{\pi}_3 } 
\chi_{\boldsymbol{\pi}_1}.
\end{eqnarray*}

Using the symmetry of the kernel, it follows that $	S_{3,n \tau}^\dagger$ can be written as
\begin{align*}
\left[ 
	6!\binom{n}{6}
\right]^{-1}
\sum_{\boldsymbol{\pi}_1=\textbf{1}}^{\overline{\textbf{n}}}
\sum_{\boldsymbol{\pi}_3 = \boldsymbol{\pi}_1 + 1 }
\sum_{\boldsymbol{\pi}_2 \neq \boldsymbol{\pi}_1, \boldsymbol{\pi}_3 } 
\left\{ 
	p_{\boldsymbol{\pi}_1, \boldsymbol{\pi}_3}\left( \overline{\sigma} \right) 
	-
	p_{\boldsymbol{\pi}_1(1)\boldsymbol{\pi}_2(2), \boldsymbol{\pi}_3}\left( \overline{\sigma} \right) 
	-
	p_{\boldsymbol{\pi}_2(1)\boldsymbol{\pi}_1(2), \boldsymbol{\pi}_3}\left( \overline{\sigma} \right) 
	+
	p_{\boldsymbol{\pi}_2, \boldsymbol{\pi}_3}\left( \overline{\sigma} \right) 
\right\}
\end{align*}

To compute the H\'{a}jek projection of the above sum into an arbitrary function of $\zeta_{\boldsymbol{\pi}_1}$, 
consider first $ E \left[ p_{\boldsymbol{\pi}_1, \boldsymbol{\pi}_3}\left( \overline{\sigma} \right) \mid \zeta_{\boldsymbol{\pi}_1} \right]$ .
To that end, the following results will be useful. 
\begin{eqnarray*}
E
\left[  
	E\left[ 
		\tilde{W}_{\boldsymbol{\pi}_1, \boldsymbol{\pi}_2}
		D_{\boldsymbol{\pi}_1,\tau}^\ast 
		\mid 
		\omega_{\boldsymbol{\pi}_1} 
	\right]
	\mid 	
	\zeta_{\boldsymbol{\pi}_1}
\right]
&=&
E\left[ D_{\boldsymbol{\pi}_1,\tau}^\ast \mid \omega_{\boldsymbol{\pi}_1} \right]
E
\left[  
	\tilde{W}_{\boldsymbol{\pi}_1, \boldsymbol{\pi}_2}
	\mid 	
	X_{\boldsymbol{\pi}_1}
\right]
=
E\left[ 
	D_{\boldsymbol{\pi}_1,\tau}^\ast 
	\chi_{\boldsymbol{\pi}_1}
	\mid \omega_{\boldsymbol{\pi}_1} 
\right]
\\
E
\left[  
	E
	\left[ 
		\tilde{W}_{\boldsymbol{\pi}_3, \boldsymbol{\pi}_2}
		D_{\boldsymbol{\pi}_3,\tau}^\ast 
		\mid 
		\omega_{\boldsymbol{\pi}_3} 
	\right]
	\mid 	
	\zeta_{\boldsymbol{\pi}_1}
\right]
&=&
E
\left[ 
	E\left[ D_{\boldsymbol{\pi}_3,\tau}^\ast \mid \omega_{\boldsymbol{\pi}_3} \right]
	E
	\left[ 
	\tilde{W}_{\boldsymbol{\pi}_3, \boldsymbol{\pi}_2}
	\mid 
	X_{\boldsymbol{\pi}_3}
	\right]
\right]
=
E
\left[ 
	D_{\boldsymbol{\pi}_3,\tau}^\ast 
	\chi_{\boldsymbol{\pi}_3}
\right].	
\end{eqnarray*}

Moreover, 
\begin{align*}
& 
E
\left[ 
\left( 
	\frac{ 
		D_{ \boldsymbol{\pi}_1, \tau}^\ast	
	}{ 
		f_{vx,\boldsymbol{\pi}_1} 
	}
	\tilde{W}_{\boldsymbol{\pi}_1,\boldsymbol{\pi}_2}
	+
	\frac{ 
		D_{ \boldsymbol{\pi}_3, \tau}^\ast	
	}{ 
		f_{vx,\boldsymbol{\pi}_3} 
	}
	\tilde{W}_{\boldsymbol{\pi}_3,\boldsymbol{\pi}_2}
\right)
\frac{1}{h^{L+1}}
K_{vx,h}
\left(
v_{\boldsymbol{\pi}_3}-v_{\boldsymbol{\pi}_1},  
X_{\boldsymbol{\pi}_3}-X_{\boldsymbol{\pi}_1}
\right) 	
\mid 
\zeta_{\boldsymbol{\pi}_1}
\right]	
\\
=&
E
\left[ 
\left\{ 
\frac{ 
	D_{ \boldsymbol{\pi}_1, \tau}^\ast	
}{ 
	f_{vx,\boldsymbol{\pi}_1} 
}	 
E
\left[ 
	\tilde{W}_{\boldsymbol{\pi}_1,\boldsymbol{\pi}_2}
	\mid 
	X_{\boldsymbol{\pi}_1}
\right]	
+
E
\left[  
		\frac{ 
			D_{ \boldsymbol{\pi}_3, \tau}^\ast	
		}{ 
			f_{vx,\boldsymbol{\pi}_3} 
		}
\mid
v_{\boldsymbol{\pi}_3},  
X_{\boldsymbol{\pi}_3}
\right]
E
\left[ 
	\tilde{W}_{\boldsymbol{\pi}_3,\boldsymbol{\pi}_2}
	\mid 
	X_{\boldsymbol{\pi}_3}
\right]	
\right\}
\right.
\\
&
\qquad
\times 
\left.
\frac{1}{h^{L+1}}
K_{vx,h}
\left(
v_{\boldsymbol{\pi}_3}-v_{\boldsymbol{\pi}_1},  
X_{\boldsymbol{\pi}_3}-X_{\boldsymbol{\pi}_1}
\right) 	
\mid 
\zeta_{\boldsymbol{\pi}_1}
\right]	
\\
=&
\int 
\left\{ 
\frac{ 
	D_{ \boldsymbol{\pi}_1, \tau}^\ast	
}{ 
	f_{vx,\boldsymbol{\pi}_1} 
}	
\chi_{\boldsymbol{\pi}_1}
+
E
\left[  
		\frac{ 
			D_{ \boldsymbol{\pi}_3, \tau}^\ast	
		}{ 
			f_{vx,\boldsymbol{\pi}_3} 
		}
		\chi_{\boldsymbol{\pi}_3}
\mid 
v_{\boldsymbol{\pi}_3}, 
X_{\boldsymbol{\pi}_3}
\right]
\right\}
\frac{1}{h^{L+1}}
K_{vx,h}
\left(
v_{\boldsymbol{\pi}_3}-v_{\boldsymbol{\pi}_1},  
X_{\boldsymbol{\pi}_3}-X_{\boldsymbol{\pi}_1}
\right) 	
f_{vx}(v_{\boldsymbol{\pi}_3}, X_{\boldsymbol{\pi}_3})
d v_{\boldsymbol{\pi}_3}
d X_{\boldsymbol{\pi}_3}
\end{align*}
where the second equality follows from a Law of Iterated Expectations and Assumption \ref{Ass:SR00:iidsampling}.

Let 
\begin{eqnarray*}
\Xi
\left( 
v_{\boldsymbol{\pi}_3},  
X_{\boldsymbol{\pi}_3} 
\right) 
&=&
E
\left[  
D_{ \boldsymbol{\pi}_3, \tau}^\ast	
\chi_{\boldsymbol{\pi}_3}
\mid 
v_{\boldsymbol{\pi}_3}, 
X_{\boldsymbol{\pi}_3}
\right],
\end{eqnarray*}
and consider 
\begin{eqnarray*}
&&
\int 
\left\{ 
\frac{ 
	D_{ \boldsymbol{\pi}_1, \tau}^\ast	
}{ 
	f_{vx,\boldsymbol{\pi}_1} 
}	
\chi_{\boldsymbol{\pi}_1} 
f_{vx}(v_{\boldsymbol{\pi}_3}, X_{\boldsymbol{\pi}_3})
+
\Xi
\left( 
	v_{\boldsymbol{\pi}_3},  
	X_{\boldsymbol{\pi}_3} 
\right) 
\right\}
\frac{1}{h^{L+1}}
K_{vx,h}
\left(
v_{\boldsymbol{\pi}_3}-v_{\boldsymbol{\pi}_1},  
X_{\boldsymbol{\pi}_3}-X_{\boldsymbol{\pi}_1}
\right) 	
d v_{\boldsymbol{\pi}_3}
d X_{\boldsymbol{\pi}_3}
\\
&&
-
\left\{ 
\frac{ 
	D_{ \boldsymbol{\pi}_1, \tau}^\ast	
}{ 
	f_{vx,\boldsymbol{\pi}_1} 
}	
\chi_{\boldsymbol{\pi}_1} 
f_{vx}(v_{\boldsymbol{\pi}_1}, X_{\boldsymbol{\pi}_1})
+
\Xi\left( v_{\boldsymbol{\pi}_1}, X_{\boldsymbol{\pi}_1} \right) 
\right\}
\\
&=&
\int 
\left\{ 
\frac{ 
	D_{ \boldsymbol{\pi}_1, \tau}^\ast	
}{ 
	f_{vx,\boldsymbol{\pi}_1} 
}	
\chi_{\boldsymbol{\pi}_1} 
f_{vx}
(
	v_{\boldsymbol{\pi}_1} + h \boldsymbol{\nu_1}	
	X_{\boldsymbol{\pi}_1} + h \boldsymbol{\nu_2}
)
+
\Xi
\left(
	v_{\boldsymbol{\pi}_1} + h \boldsymbol{\nu_1},  
	X_{\boldsymbol{\pi}_1} + h \boldsymbol{\nu_2}
\right) 
\right\}
K_{vx}\left( \boldsymbol{\nu} \right) 	
d \boldsymbol{\nu}
\\
&&
-
\left\{ 
\frac{ 
	D_{ \boldsymbol{\pi}_1, \tau}^\ast	
}{ 
	f_{vx,\boldsymbol{\pi}_1} 
}	
\chi_{\boldsymbol{\pi}_1} 
f_{vx}(v_{\boldsymbol{\pi}_1}, X_{\boldsymbol{\pi}_1})
+
\Xi\left( v_{\boldsymbol{\pi}_1}, X_{\boldsymbol{\pi}_1} \right) 
\right\}
\\
&=&
\int 
\left( 
\frac{ 
	D_{ \boldsymbol{\pi}_1, \tau}^\ast	
}{ 
	f_{vx,\boldsymbol{\pi}_1} 
}	
\chi_{\boldsymbol{\pi}_1} 
\left\{ 
f_{vx}
(
	v_{\boldsymbol{\pi}_1} + h \boldsymbol{\nu_1}	
	X_{\boldsymbol{\pi}_1} + h \boldsymbol{\nu_2}
)
-
f_{vx}(v_{\boldsymbol{\pi}_1}, X_{\boldsymbol{\pi}_1})
\right\}
\right.
\\
&&
+
\left.
\left\{ 
\Xi
\left(
	v_{\boldsymbol{\pi}_1} + h \boldsymbol{\nu_1},  
	X_{\boldsymbol{\pi}_1} + h \boldsymbol{\nu_2}
\right) 
-
\Xi\left( v_{\boldsymbol{\pi}_1}, X_{\boldsymbol{\pi}_1} \right)
\right\}
\right)
K_{vx}\left( \boldsymbol{\nu} \right) 	
d \boldsymbol{\nu}
\\
&=& 
o (h^{\overline{M}})
\end{eqnarray*}
where the first equality follows from a change of variable
$\boldsymbol{\nu}= (\boldsymbol{\nu}_1, \boldsymbol{\nu}_2)$, with 
$\boldsymbol{\nu}_1=h^{-1}\left( v_{\boldsymbol{\pi}_3}- v_{\boldsymbol{\pi}_1} \right)$,  
and
$\boldsymbol{\nu}_2=h^{-1}\left( X_{\boldsymbol{\pi}_3}- X_{\boldsymbol{\pi}_1} \right)$,  
with Jacobian $h^L$.
The last equality follows Assumptions 
\ref{Ass:Inf01:SamplingMoments},
\ref{Ass:Inf03:SmoothnessDensity}, and \ref{Ass:Inf04:Kernel} 
which guarantee that $f_{vx}(v_{\boldsymbol{\pi}_1}, X_{\boldsymbol{\pi}_1})$ 
and 
$\Xi\left( v_{\boldsymbol{\pi}_1}, X_{\boldsymbol{\pi}_1} \right)$
are continuous and $\overline{M}$-times differentiable with respect to all of its arguments, and 
$K_{vx}$ is a bias-reducing kernel of order $2\overline{M}$.
Observe that 
\begin{eqnarray*}
\frac{ 
	D_{ \boldsymbol{\pi}_1, \tau}^\ast	
}{ 
	f_{vx,\boldsymbol{\pi}_1} 
}	
\chi_{\boldsymbol{\pi}_1} 
f_{vx}(v_{\boldsymbol{\pi}_1}, X_{\boldsymbol{\pi}_1})
&=&0
\end{eqnarray*}
holds for any 
$(v_{\boldsymbol{\pi}_1}, X_{\boldsymbol{\pi}_1})$ 
within a $\tau$ distance of the boundary $\mathbb{S}_{vx}$, 
and having $h/\tau \rightarrow 0$ ensures that the change of variable 
$\boldsymbol{\nu}= (\boldsymbol{\nu}_1, \boldsymbol{\nu}_2)$, with 
$\boldsymbol{\nu}_1=h^{-1}\left( v_{\boldsymbol{\pi}_3}- v_{\boldsymbol{\pi}_1} \right)$,  
and
$\boldsymbol{\nu}_2=h^{-1}\left( X_{\boldsymbol{\pi}_3}- X_{\boldsymbol{\pi}_1} \right)$,  
is not affected by boundary effects.

The previous results yield 
\begin{eqnarray*}
E
\left[ 
p_{\boldsymbol{\pi}_1 , \boldsymbol{\pi}_3  }
\left( \overline{\sigma} \right)	
\mid 
\zeta_{\boldsymbol{\pi}_1}
\right]	
=
D_{ \boldsymbol{\pi}_1, \tau}^\ast	
\chi_{\boldsymbol{\pi}_1}
+
E
\left[ 
D_{ \boldsymbol{\pi}_1, \tau}^\ast
\chi_{\boldsymbol{\pi}_1}
\mid 
X_{\boldsymbol{\pi}_1}
\right]
-
E
\left[ 
D_{ \boldsymbol{\pi}_1, \tau}^\ast
\chi_{\boldsymbol{\pi}_1}
\mid 
\omega_{\boldsymbol{\pi}_1}
\right]
-
E
\left[ 
D_{ \boldsymbol{\pi}_1, \tau}^\ast
\chi_{\boldsymbol{\pi}_1}
\right]
+ o(1).
\end{eqnarray*}

Notice that for $\boldsymbol{\pi}_s \in  \{ \left( \boldsymbol{\pi}_1(1),\boldsymbol{\pi}_2(2)\right), \left( \boldsymbol{\pi}_2(1),\boldsymbol{\pi}_1(2) \right),  \boldsymbol{\pi}_2 \}$,
\begin{eqnarray*}
&&
E
\left[ 
\tilde{W}_{\boldsymbol{\pi}_1,\boldsymbol{\pi}_2}
\left\{ 
\frac{1}{h^{L+1}}
\frac{ 
	D_{ \boldsymbol{\pi}_s, \tau}^\ast
	}{ 
		f_{vx,\boldsymbol{\pi}_s} 	 
}
K_{vx,h}
\left(
v_{\boldsymbol{\pi}_3}-v_{\boldsymbol{\pi}_s},  
X_{\boldsymbol{\pi}_3}-X_{\boldsymbol{\pi}_s}
\right) 
-
E\left[ 
	D_{\boldsymbol{\pi}_s, \tau}^\ast
	\mid 
	\omega_{\boldsymbol{\pi}_s}
\right]
\right\}
\mid 
\zeta_{\boldsymbol{\pi}_1}
\right]	
\\
&=&
E
\left[ 
\tilde{W}_{\boldsymbol{\pi}_1,\boldsymbol{\pi}_2}
\left\{ 
E
\left[ 
\frac{1}{h^{L+1}}
\frac{ 
	D_{ \boldsymbol{\pi}_s, \tau}^\ast
	}{ 
		f_{vx,\boldsymbol{\pi}_s} 	 
}
K_{vx,h}
\left(
v_{\boldsymbol{\pi}_3}-v_{\boldsymbol{\pi}_s},  
X_{\boldsymbol{\pi}_3}-X_{\boldsymbol{\pi}_s}
\right) 
\mid 
\Omega_{\boldsymbol{\pi}_1, \boldsymbol{\pi}_2}
\right]
-
E\left[ 
	D_{\boldsymbol{\pi}_s, \tau}^\ast
	\mid 
	\omega_{\boldsymbol{\pi}_s}
\right]
\right\}
\mid 
\zeta_{\boldsymbol{\pi}_1}
\right]
\\
&=&
O
\left( 
h^{\overline{M}} 
\right)
\end{eqnarray*}
since the expectation 
\begin{eqnarray*}
&&
E
\left[ 
\frac{1}{h^{L+1}}
\frac{ 
	D_{ \boldsymbol{\pi}_s, \tau}^\ast
	}{ 
		f_{vx,\boldsymbol{\pi}_s} 	 
}
K_{vx,h}
\left(
	v_{\boldsymbol{\pi}_3}-v_{\boldsymbol{\pi}_s},  
	X_{\boldsymbol{\pi}_3}-X_{\boldsymbol{\pi}_s}
\right) 
\mid 
\Omega_{\boldsymbol{\pi}_1, \boldsymbol{\pi}_2}
\right]
\\
&=&
\int 
\frac{1}{h^{L+1}}
E
\left[  
\frac{ 
	D_{ \boldsymbol{\pi}_s, \tau}^\ast
	}{ 
		f_{vx,\boldsymbol{\pi}_s} 	 
}
\mid 
\omega_{\boldsymbol{\pi}_s}
\right]
K_{vx,h}
\left(
	v_{\boldsymbol{\pi}_3}-v_{\boldsymbol{\pi}_s},  
	X_{\boldsymbol{\pi}_3}-X_{\boldsymbol{\pi}_s}
\right) 
f_{vx}\left(  v_{\boldsymbol{\pi}_3},  X_{\boldsymbol{\pi}_3} \right)
d v_{\boldsymbol{\pi}_3}
d X_{\boldsymbol{\pi}_3}
\\
&=&
E
\left[  
	D_{ \boldsymbol{\pi}_s, \tau}^\ast
	\mid 
\omega_{\boldsymbol{\pi}_s}
\right]
+
o
\left( 
	h^{\overline{M}} 
\right), 
\end{eqnarray*}
where the second equality follows from 
Assumptions \ref{Ass:SR00:iidsampling}, \ref{Ass:SR01:distr}, 
and properties of the bias-reducing kernel, Assumption \ref{Ass:Inf04:Kernel}.

Similarly, for a given 
$\boldsymbol{\pi}_s \in  \{ \left( \boldsymbol{\pi}_3(1),\boldsymbol{\pi}_2(2)\right), \left( \boldsymbol{\pi}_2(1),\boldsymbol{\pi}_3(2) \right),  \boldsymbol{\pi}_2 \}$, 
it follows from Assumptions \ref{Ass:SR00:iidsampling}, \ref{Ass:SR01:distr}, 
\ref{Ass:Inf03:SmoothnessDensity}, and \ref{Ass:Inf04:Kernel}, that 
\begin{eqnarray*}
&&
E
\left[
\frac{1}{h^{L+1}}	 
\left(			
\frac{ 
	D_{\boldsymbol{\pi}_s, \tau}^\ast	
}{ 
	f_{vx,\boldsymbol{\pi}_s} 
}
\tilde{W}_{\boldsymbol{\pi}_3,\boldsymbol{\pi}_2}
K_{vx,h}
\left(
	v_{\boldsymbol{\pi}_1}-v_{\boldsymbol{\pi}_s},  
	X_{\boldsymbol{\pi}_1}-X_{\boldsymbol{\pi}_s}
\right) 
\right)
\mid 
\zeta_{\boldsymbol{\pi}_1}
\right]
-
\Xi
\left[ 
v_{\boldsymbol{\pi}_1}, 
X_{\boldsymbol{\pi}_1}
\right]
\\
&=&
E
\left[ 
\frac{1}{h^{L+1}}	
E
\left[ 	
	\left(			
	\frac{ 
		D_{ \boldsymbol{\pi}_s, \tau}^\ast	
	}{ 
		f_{vx,\boldsymbol{\pi}_s}  
	}
	\chi_{\boldsymbol{\pi}_s}
	\right)
\mid 
v_{\boldsymbol{\pi}_s} , 
X_{\boldsymbol{\pi}_s} 
\right]
K_{vx,h}
\left( 
v_{\boldsymbol{\pi}_1}-v_{\boldsymbol{\pi}_s}, 
X_{\boldsymbol{\pi}_1}-X_{\boldsymbol{\pi}_s}
\right) 
\mid 
\zeta_{\boldsymbol{\pi}_1}
\right]
-
\Xi
\left[ 
v_{\boldsymbol{\pi}_1}, 
X_{\boldsymbol{\pi}_1}
\right]
\\
&=&
\int 
\left\{ 
\Xi
\left(
	v_{\boldsymbol{\pi}_1} + h \boldsymbol{\nu_1},  
	X_{\boldsymbol{\pi}_1} + h \boldsymbol{\nu_2}
\right) 
-
\Xi\left( v_{\boldsymbol{\pi}_1}, X_{\boldsymbol{\pi}_1} \right)
\right\}
K_{vx}\left( \boldsymbol{\nu} \right) 	
d \boldsymbol{\nu}
\\
&=&
O
\left( 
h^{\overline{M}}	
\right).
\end{eqnarray*}

Using the previous results it follows that 
\begin{eqnarray*}
E
\left[ 
p_{\boldsymbol{\pi}_s, \boldsymbol{\pi}_3}\left( \overline{\sigma} \right) 
\mid 
\zeta_{\boldsymbol{\pi}_1}
\right]	
&=&
E
\left[
D_{\boldsymbol{\pi}_1, \tau}^\ast
\chi_{\boldsymbol{\pi}_1}
\mid 
X_{\boldsymbol{\pi}_1}
\right]
-
E\left[ 
D_{\boldsymbol{\pi}_1, \tau}^\ast
\chi_{\boldsymbol{\pi}_1}
\right],	
\end{eqnarray*}
and thus,  
\begin{eqnarray*}
&&
E
\left[ 
p_{\boldsymbol{\pi}_1, \boldsymbol{\pi}_3}\left( \overline{\sigma} \right) 
-
p_{\boldsymbol{\pi}_1(1)\boldsymbol{\pi}_2(2), \boldsymbol{\pi}_3}\left( \overline{\sigma} \right) 
-
p_{\boldsymbol{\pi}_2(1)\boldsymbol{\pi}_1(2), \boldsymbol{\pi}_3}\left( \overline{\sigma} \right) 
+
p_{\boldsymbol{\pi}_2, \boldsymbol{\pi}_3}\left( \overline{\sigma} \right) 
\mid 
\zeta_{\boldsymbol{\pi}_1}
\right]	
\\
&=&
\left\{ 
D_{ \boldsymbol{\pi}_1}^\ast	
-
E
\left[ 
	D_{ \boldsymbol{\pi}_1}^\ast
	\mid 
	\omega_{\boldsymbol{\pi}_1}
\right]
\right\}
I_{\tau, \boldsymbol{\pi}_1}
\chi_{\boldsymbol{\pi}_1} + o(1)
\end{eqnarray*}

It follows then that the H\'{a}jek projection is given by 
\begin{eqnarray*}
V_{3,n \tau}^\ast
&=&
\frac{1}{n(n-1)}
\sum_{ i_1 =1}^{n}
\sum_{ j_1 \neq i_1 }
	\xi_{i_1j_1, \tau}
+ o(1)	
\end{eqnarray*}
with 
\begin{eqnarray*}
\xi_{i_1j_1, \tau}
&=&
\left\{ 
D^\ast_{ i_1 j_1} 
-
E
\left[ 
	D^\ast_{i_1j_1}
	\mid 
	\omega_{i_1j_1}	
\right]
\right\}
I_{\tau, i_1j_1} 
\overline{\chi}_{i_1j_1}
\\
\overline{\chi}_{i_1j_1}
&=&
\left\{ 
\frac{1}{(n-2)(n-3)}
\sum_{i_2 \neq i_1 , j_1} \sum_{j_2 \neq i_1 , j_1, i_2} 
	E\left[ 
		\tilde{W}_{\sigma\{i_1,i_2;j_1,j_2\}}
		\mid 	
		X_{i_1}, X_{j_1}
	\right]
\right\}.	
\end{eqnarray*}

If follows from a Law of Iterated Expectations that 
\begin{eqnarray*}
E
\left[ 
	V_{3,n \tau}^\ast
\right]
&=&
E
\left[ 
	\xi_{i_1j_1, \tau}	
	\right]
=
0. 
\end{eqnarray*}

\textbf{Step 2. Variance of H\'{a}jek Projection}

As in the proof of Lemma \ref{Lemma:TechAppx:Hajek_Sum1},
the variance of 
$ V_{3,n \tau}^\ast$ is given by 
\begin{eqnarray*}
Var\left( V_{3,n \tau}^\ast \right)
&=&
\left\{ \frac{1}{n(n-1)}\right\}^2
\left\{ 	
\sum_{i_1 =1}^n  \sum_{j_1 \neq i_1} 
E\left[ 
\xi_{i_1j_1, \tau}
\xi_{i_1'j_1', \tau}'
\right] 
\right\}
\\
&=&
\left\{ \frac{1}{n(n-1)}\right\}^2
\left\{ 
\sum_{i_1 =1}^n  \sum_{j_1 \neq i_1} 
\Lambda^\ast_{i_1, j_1}
\right\}
\end{eqnarray*}
where 
\begin{eqnarray*}
\Lambda^\ast_{i_1, j_1}
&=&
E
\left[ 
\left\{ 
	E
	\left[ 
		D^\ast_{i_1j_1}
		D^\ast_{i_1j_1}
		\mid 
		\omega_{i_1j_1}
	\right]	
	-
	E
	\left[ 
		D^\ast_{i_1j_1}
		\mid 
		\omega_{i_1j_1}
	\right]^2		
\right\}
I_{\tau, i_1j_1}^2
\overline{\chi}_{i_1j_1}
\overline{\chi}_{i_1j_1}'
\right].
\end{eqnarray*}

Define 
\begin{eqnarray*}
\Upsilon_{n} &=& n(n-1) Var\left( V_{1,n \tau}^\ast \right)
=
\frac{1}{n(n-1)}
\left\{ 
\sum_{i_1 =1}^n  \sum_{j_1 \neq i_1} 
\Lambda^\ast_{i_1, j_1}
\right\}.
\end{eqnarray*}

\textbf{Step 3. Variance of $S_{3,n \tau}$}

Given two different 6-tuples $\overline{\sigma}\{i_1,i_2,j_1,j_2, l_1,l_2\}$ and $\overline{\sigma}'\{i_1',i_2',j_1',j_2', l_1',l_2'\}$, let 
\begin{eqnarray*}
\Delta_{c,n} 
= 
Cov
\left( 
	s_{3, n} \left( \sigma\{i_1, i_2, j_1, j_2, l_1, l_2\} \right), 
	s_{3, n} \left( \sigma'\{i_1', i_2' , j_1' , j_2', l_1', l_2'\} \right)
\right)	
\end{eqnarray*}
denote the covariance between 
$s_{3, n}(\overline{\sigma})$ 
and 
$s_{3, n}(\overline{\sigma}')$ 
when 
$\overline{\sigma}$ and $\overline{\sigma}'$ have $c=0,1,2,3,4, 5, 6$ indices in common. 

It follows from the conditionally independent formation of links, implied by Assumption \ref{Ass:SR01:distr}, 
and the conditional mean zero, 
$E
\left[ 
s_{3, n}
\left( 
	\sigma\{i_1, i_2, j_1, j_2, l_1, l_2\}
\right)	
\mid \Omega_{\sigma}
\right]=0$,  
that $\Delta_{0,n}= \Delta_{1,n} =0$. 

Consider 
\begin{eqnarray*}
\Delta_{2,n}
&=&
E\left[ 
s_{3, n}(\overline{\sigma}\{i_1,i_2 ,j_1,j_2, l_1, l_2\} )	
s_{3, n}(\overline{\sigma'}\{i_1,i_2',j_1,j_2', l_1, l_2'\} )'	
\right]
\\
&=&
E
\left[
s_{i_1j_1}\left( \overline{\sigma} \right)
s_{i_1j_1}\left( \overline{\sigma'} \right)'
\right]
+
o(1)	
\\
&=&
E
\left[ 
\left\{ 
E
\left[ 
	\tilde{D}^\ast_{i_1j_1,\tau} 
	\tilde{D}^\ast_{i_1j_1,\tau} 
	\mid 
	\omega_{i_1j_1}
\right]	
-
E
\left[ 
	\tilde{D}^\ast_{i_1j_1,\tau} 
	\mid 
	\omega_{i_1j_1}
\right]^2
\right\}
I_{\tau, i_1j_1}^2
\tilde{W}_{\sigma}
\tilde{W}_{\sigma'}
\right]
+
o(1).
\end{eqnarray*}

Therefore, the variance of $Var(S_{2,n \tau}^\dagger)$ can be expressed as 
\begin{eqnarray*}
&&
\left(  \frac{1}{\overline{m}_{n}} \right)^2
\sum_{ \overline{\sigma} } 
\sum_{ \overline{\sigma}' } 
E
\left[ 
	\left( 
	s_{3, n}(\overline{\sigma})
	s_{3, n}(\overline{\sigma'})'
	\right)
\right]
\\
&& 
+
\left(  
4! \binom{n}{4}
\right)^{-2}
\sum_{ i_1=1}^{n} 
\sum_{ j_1 \neq i_1} 
\left\{ 
\sum_{k_1 \neq i_1 , j_1} \sum_{k_2 \neq i_1 , j_1, k_1}
\sum_{l_1 \neq i_1 , j_1} \sum_{l_2 \neq i_1 , j_1, l_1}
\Delta_{2,n}
\right\}
\\
&&
+
O\left(\frac{1}{n^3}\right)
\Delta_{3,n}
+
O\left(\frac{1}{n^4}\right)
\Delta_{4,n}
+
O\left(\frac{1}{n^5}\right)
\Delta_{5,n}
+
O\left(\frac{1}{n^6}\right)
\Delta_{6,n}
\end{eqnarray*}

Notice that the term inside the brackets scaled by $((n-2)(n-3))^{-2}$ can be written as 
\begin{eqnarray*}
&&
\left(
	\frac{1}{(n-2) (n-3)}
\right )^2	
\sum_{k_1 \neq i_1 , j_1} \sum_{k_2 \neq i_1 , j_1, k_1}
\sum_{l_1 \neq i_1 , j_1} \sum_{l_2 \neq i_1 , j_1, l_1}
\Delta_{2,n}
\\
&=&
E
\left[ 
	\left\{ 
		E
		\left[ 
			D^\ast_{i_1j_1}
			D^\ast_{i_1j_1}
			\mid 
			\omega_{i_1j_1}
		\right]	
		-
		E
		\left[ 
			D^\ast_{i_1j_1}
			\mid 
			\omega_{i_1j_1}
		\right]^2		
	\right\}
	I_{\tau, i_1j_1}^2
	\overline{\chi}_{i_1j_1}
	\overline{\chi}_{i_1j_1}'
\right]
\\
&=& 
\Lambda^\ast_{i_1,j_1}.
\end{eqnarray*}

As a result, 
\begin{eqnarray*}
Var
\left[ 
	S_{3,n \tau}^\dagger 
\right]
&=&
\left(  \frac{1}{n(n-1)} \right)^2
\left\{ 
\sum_{ i_1=1}^{n} 
\sum_{j_1\neq i_1}
	\Lambda^\ast_{i_1,j_1} 
\right\}
+ o(1),
\end{eqnarray*}
and $Var \left[  V_{3,n \tau}^\ast   \right] - Var \left[  S_{3,n \tau}^\dagger \right] = o_p(1)$.

The asymptotic equivalence results follows from similar arguments as in the proof of Lemma \ref{Lemma:TechAppx:Hajek_Sum1}.
The proof is complete.

\end{proof}	
\newpage
\section{Simulations: alternative designs}
\label{Appx:simulations}
\begin{table}[h!]       
	\centering
	\caption{Simulation results for the semiparametric estimator $\widehat{\theta}_n$ with kernel estimator $\widehat{f}_{v}(v_{ij})$}
	\begin{threeparttable}
	   
	\end{threeparttable}   
\end{table}
\begin{table}[h!]       
	\centering
	\caption{Simulation results for the semiparametric estimator $\widehat{\theta}_n$ with kernel estimator $\widehat{f}_{v}(v_{ij})$}
	\begin{threeparttable}
	   \begin{tabular}{l c c c c c}
\toprule
                & mean &median & std & MSE & Degree \\ \midrule
\multicolumn{6}{c}{$n=50$} \\                
$\log(\log(n))$     &1.6296&1.5640&1.1280&1.2893&0.4252\\
$\log(n)^{1/2}$     &1.6236&1.5961&1.1864&1.4229&0.3979\\
$\log(n)$           &1.6308&1.6379&1.5430&2.3981&0.3127\\
\multicolumn{6}{c}{$n=100$} \\               
$\log(\log(n))$     &1.5944&1.5782&0.4999&0.2588&0.4218\\
$\log(n)^{1/2}$     &1.5603&1.5563&0.5452&0.3009&0.3863\\
$\log(n)$           &1.5009&1.5244&0.7059&0.4983&0.2896\\
\bottomrule
\end{tabular}
\begin{tablenotes}
    \item[1] \footnotesize{Total number of Monte Carlo simulations $=500$.}
    \item[2] \footnotesize{Bandwith parameter $h=0.05$.}
\end{tablenotes}

	\end{threeparttable}   
\end{table}
\newpage
\begin{table}[h!]       
	\centering
	\caption{Simulation results for the semiparametric estimator $\widehat{\theta}_n$ with kernel estimator $\widehat{f}_{v}(v_{ij})$}
	\begin{threeparttable}
	   \begin{tabular}{l c c c c c}
\toprule
                & mean &median & std & MSE & Degree \\ \midrule
\multicolumn{6}{c}{$n=50$} \\                
$\log(\log(n))$     &1.7149&1.7235&1.1336&1.3313&0.4252\\
$\log(n)^{1/2}$     &1.6486&1.6280&1.2045&1.4729&0.3973\\
$\log(n)$           &1.5690&1.5839&1.5592&2.4358&0.3116\\
\multicolumn{6}{c}{$n=100$} \\               
$\log(\log(n))$     &1.5394&1.5478&0.4973&0.2488&0.4212\\
$\log(n)^{1/2}$     &1.5443&1.5336&0.5533&0.3081&0.3855\\
$\log(n)$           &1.5662&1.6033&0.7749&0.6049&0.2905\\
\bottomrule
\end{tabular}
\begin{tablenotes}
    \item[1] \footnotesize{Total number of Monte Carlo simulations $=500$.}
    \item[2] \footnotesize{Bandwith parameter $h=0.1$.}
\end{tablenotes}

	\end{threeparttable}   
\end{table}
\begin{table}[h!]       
	\centering
	\caption{Simulation results for the semiparametric estimator $\widehat{\theta}_n$ with kernel estimator $\widehat{f}_{v}(v_{ij})$}
	\begin{threeparttable}
	   \begin{tabular}{l c c c c c}
\toprule
                & mean &median & std & MSE & Degree \\ \midrule
\multicolumn{6}{c}{$n=50$} \\                
$\log(\log(n))$     &1.6675&1.6378&1.0617&1.1552&0.4250\\
$\log(n)^{1/2}$     &1.6453&1.6463&1.2179&1.5044&0.3974\\
$\log(n)$           &1.6594&1.6162&1.5833&2.5321&0.3113\\
\multicolumn{6}{c}{$n=100$} \\               
$\log(\log(n))$     &1.5577&1.5512&0.5305&0.2848&0.4208\\
$\log(n)^{1/2}$     &1.5739&1.5653&0.5594&0.3184&0.3852\\
$\log(n)$           &1.5653&1.5637&0.7064&0.5033&0.2898\\
\bottomrule
\end{tabular}
\begin{tablenotes}
    \item[1] \footnotesize{Total number of Monte Carlo simulations $=500$.}
    \item[2] \footnotesize{Bandwith parameter $h=0.2$.}
\end{tablenotes}

	\end{threeparttable}   
\end{table}
\end{document}